   \def\href#1#2{#2}
\newtheorem{theorem}{Theorem}[section]
\newtheorem{lemma}[theorem]{Lemma}
\newtheorem{corollary}[theorem]{Corollary}
\theoremstyle{definition}
\newtheorem{definition}[theorem]{Definition}
\theoremstyle{remark}
  \newcounter{mnote}
  \let\oldmarginpar\marginpar
    \renewcommand\marginpar[1]{\-\oldmarginpar[\raggedleft\footnotesize #1]%
    {\raggedright\footnotesize #1}}
\renewcommand{\emptyset}{\varnothing} 
\newcommand{\eps}{\varepsilon}
\newcommand{\leqs}{\leqslant}      
\newcommand{\geqs}{\geqslant}      
\newcommand{\Tr}{{\gamma}}
\newcommand{\tiD}{\mbox{{\tiny $D$}}}
\newcommand{\tiN}{\mbox{{\tiny $N$}}}
\newcommand{\tiX}{\mbox{{\tiny $X$}}}
\newcommand{\tiY}{\mbox{{\tiny $Y$}}}
\newcommand{\B}{{\mathbb B}}
\newcommand{\N}{{\mathbb N}}       
\newcommand{\R}{{\mathbb R}}       
\newcommand{\hg}{\hat g}
\newcommand{\hS}{\hat S}
\newcommand{\hK}{\hat K}
\begin{document}

\title[Lichnerowicz equation on compact manifolds with boundary]
      {The Lichnerowicz equation on compact manifolds with boundary}

\thanks{The first author was supported in part by 
        NSF Awards~1065972, 1217175, and 1262982.
        The second author was supported in part 
        by an NSERC Discovery Grant 
        and 
        by an FQRNT Nouveaux Chercheurs Grant.
       }
       
\author[M.~Holst]
       {Michael Holst}
        \address{Department of Mathematics,
                    University of California, San Diego \\
                    9500 Gilman Drive, Dept. 0112,
                    La Jolla, CA 92093-0112 USA
        }
        \email{mholst@math.ucsd.edu}
        
\author[G.~Tsogtgerel]
       {Gantumur Tsogtgerel}
        \address{Department of Mathematics and Statistics,
                    McGill University \\
                    805 Sherbrooke West,
                    Montreal, QC H3A 0B9 Canada
        }
        \email{gantumur@math.mcgill.ca}
        
\date{\today}

\keywords{Lichnerowicz equation, Hamiltonian constraint, Einstein constraint equations, general relativity, Yamabe classification, order-preserving maps, fixed-point theorems}

\begin{abstract}
In this article we initiate a systematic study of the well-posedness theory of 
the Einstein constraint equations on compact manifolds with boundary.
This is an important problem in general relativity, 
and it is particularly important in numerical relativity, 
as it arises in models of Cauchy surfaces containing asymptotically flat ends 
and/or trapped surfaces.
Moreover, a number of technical obstacles that appear when developing the
solution theory for open, asymptotically Euclidean manifolds have analogues 
on compact manifolds with boundary.
As a first step, here we restrict ourselves to the {\em Lichnerowicz equation},
also called the Hamiltonian constraint equation,
which is the main source of nonlinearity in the constraint system.
The focus is on low regularity data and on the interaction between different 
types of boundary conditions, which has not been 
carefully analyzed before.
In order to develop a well-posedness theory that mirrors the existing theory 
for the case of closed manifolds, we first generalize the Yamabe 
classification to nonsmooth metrics on compact manifolds with boundary.
We then extend a result on conformal invariance to manifolds with boundary, 
and prove a uniqueness theorem.
Finally, by using the method of sub- and super-solutions 
(order-preserving map iteration), we establish 
several existence results for a large class of problems covering a broad 
parameter regime, which includes most of the cases relevant in practice.
\end{abstract}

\maketitle


\vspace*{-1.0cm}
{\small
\tableofcontents
}
\vspace*{-0.7cm}

\section{Introduction}

Our goal here is to develop a well-posedness theory for the Lichnerowicz 
equation on compact manifolds with boundary.
We are interested in establishing results for rough data and a broad set of 
boundary conditions, and will therefore develop a fairly general analysis 
framework for treating different types of boundary conditions.
Similar rough solution results for the case of closed manifolds, and for the 
case of asymptotically Euclidean manifolds with apparent horizon boundary 
conditions representing excision of interior black holes, 
appear in~\cite{Choq04,HNT07b,dM05,dM05b,dM06}.
Our work here appears to be the first systematic study to treat boundary 
conditions of such generality.
In a certain sense, it solves an open problem from D.\thinspace Maxwell's 
dissertation \cite{Ma04a},
which is the coupling between the black-hole boundary conditions 
and outer boundary conditions that substitute asymptotically Euclidean ends.
Furthermore, we allow for the lowest regularity of data that is possible by 
the currently established techniques in the closed manifold case.
Finally, this paper lays necessary foundations to the study of the Einstein 
constraint {\em system} on compact manifolds with boundary.
We acknowledge from the outset that although the situation in this paper is 
technically more complicated in a certain sense (and simpler in another sense),
and a number of original ideas went into this paper, 
many of the techniques we use, and our \emph{a priori} expectations of
what type of results we would be able to produce, are largely inspired by 
D.\thinspace Maxwell's work \cite{dM05,dM05b,dM06}.

In the following, we give a quick overview of the Einstein constraint 
equations in general relativity and the conformal decomposition introduced 
by Lichnerowicz, leading to the Lichnerowicz equation.
After giving an overview of the various boundary conditions previously 
considered in the literature, we discuss the main results of this paper.

\subsection{The Einstein constraint equations}

Let $(\mathfrak{M},\mathfrak{g})$ be an $(n+1)$-dimensional spacetime, by which we mean that $\mathfrak{M}$ is a
smooth $(n+1)$-manifold and $\mathfrak{g}$ is a smooth
Lorentzian metric on $\mathfrak{M}$ with signature $(-,+,\ldots,+)$. 
Then the {\em Einstein field equation} in vacuum reads as
\begin{equation*}
\mathrm{Ric}_{\mathfrak{g}} = 0,
\end{equation*}
where $\mathrm{Ric}_{\mathfrak{g}}$ is the Ricci curvature of $\mathfrak{g}$.

We assume that there is a spacelike hypersurface $M\subset\mathfrak{M}$, possessing a normal vector field $N\in\Gamma(TM^\perp)$ with $|N|_{\mathfrak{g}}^2\equiv-1$.
The introduction of the field $N$ defines a time orientation in a neighbourhood of $M$.
Then the {\em Einstein constraint equations} on $M$ are given by
\begin{equation*}
\mathrm{Ric}_{\mathfrak{g}}(N,\cdot) = 0.
\end{equation*}
Hence in this setting, the constraint equations are a necessary condition for the full Einstein equation to hold.
Let $\hg$ and
$\hK$ be the first and second fundamental forms of $M$,
respectively defined by, with $\nabla$ being the Levi-Civita connection of $\mathfrak{g}$,
\begin{equation*}
\hg(X,Y) = \mathfrak{g}(X,Y),
\quad\textrm{and}\quad
\hK(X,Y) = -\mathfrak{g}(\nabla_XN,Y),
\end{equation*}
for any vector fields $X,Y\in\mathfrak{X}(M)$ tangent to $M$.
In terms of $\hg$ and $\hK$, 
making use of the relations between them and the Riemann curvature of $\mathfrak{M}$
that go under a myriad of designations usually involving the names of Gauss, Codazzi, and Mainardi,
the constraint equations become
\begin{align}
\label{CE-def-H}
\mathrm{scal}_{\hg} + (\mathrm{tr}_{\hg}\hK)^2 - |\hK|_{\hg}^2 &= 0,\\
\label{CE-def-M}
\mathrm{div}_{\hg}\hK - \mathrm{d}(\mathrm{tr}_{\hg}\hK) &= 0,
\end{align}
where $\mathrm{scal}_{\hg}$ is the scalar curvature of $\hg$. 
It is well-known through the work of Choquet-Bruhat and Geroch
that in a certain technical sense, any triple $(M,\hg,\hK)$ satisfying \eqref{CE-def-H}--\eqref{CE-def-M}
gives rise to a unique maximal (up to diffeomorphism)
spacetime $(\mathfrak{M},\mathfrak{g})$ satisfying the Einstein equation, that has $(M,\hg)$ as an isometrically embedded submanifold with second fundamental form equal to $\hK$.
Detailed treatments can be found, e.g., in \cite{HaEl73,Ring09}.
Thus in this sense, the constraint equations are also a sufficient condition for the Einstein equation to have a solution that is the time evolution of the given initial data $(M,\hg,\hK)$.

\subsection{Conformal traceless decomposition}

We start with the observation that the symmetric bilinear forms $\hg$ and $\hK$ together constitute $n(n+1)$ degrees of freedom at each point of $M$,
while the number of equations in \eqref{CE-def-H}--\eqref{CE-def-M} is $n+1$.
Therefore crudely speaking, one has freedom to choose $n^2-1$ components of $(\hg,\hK)$, and the remaining $n+1$ components are determined by the constraint equations.
The most successful approach so far to cleanly separate the degrees of freedom in the constraint equations seems to be the conformal approach initiated by Lichnerowicz.
That said, there exist other approaches to construct solutions of the constraint equations, see the recent survey \cite{rBjI04}.

Let $\phi$ denote a positive scalar field on $M$, and decompose the
extrinsic curvature tensor as $\hK = \hS + \tau\hg$,
where $\tau = \frac1n\mathrm{tr}_{\hg}\hK$ is the (averaged) trace and so $\hS$ is
the traceless part of $\hK$. With $\bar{q}=\frac{n}{n-2}$, then introduce the metric $g$, and the symmetric traceless bilinear form $S$ through the following conformal scaling
\begin{equation}
\label{CE-def-mf}
\hg = \phi^{2\bar{q}-2}g,
\qquad
\hS = \phi^{-2}S.
\end{equation}
The different powers of the conformal scaling above are
carefully chosen so that the constraints \eqref{CE-def-H}--\eqref{CE-def-M} transform into the following equations
\begin{gather}
\label{CE-cr1H}\textstyle
-\frac{4(n-1)}{n-2} \Delta \phi + R \phi + n(n-1)\tau^2 \phi^{2\bar{q}-1}
- |S|_{g}^2 \phi^{-2\bar{q}-1} = 0,\\
\label{CE-cr1M}\textstyle
\mathrm{div}_{g}S - (n-1)\phi^{2\bar{q}}\mathrm{d}\tau = 0,
\end{gather}
where $\Delta\equiv\Delta_{g}$ is the Laplace-Beltrami operator with respect to the metric $g$, and $R\equiv \mathrm{scal}_{g}$ is the scalar curvature of $g$.
The equation \eqref{CE-cr1H} is called the {\em Lichnerowicz equation} or the Hamiltonian constraint equation,
and \eqref{CE-cr1M} is called the {\em momentum constraint equation}.

We interpret the equations \eqref{CE-cr1H}--\eqref{CE-cr1M} as partial
differential equations for the scalar field $\phi$ and (a part of) the traceless symmetric bilinear form $S$, while the metric $g$ is considered as given.
To rephrase the above decomposition in this spirit,
given $\phi$ and $S$ fulfilling
the equations \eqref{CE-cr1H}--\eqref{CE-cr1M}, 
the symmetric bilinear forms $\hg$ and $\hK$ given by
\begin{equation*}\textstyle
\hg = \phi^{2\bar{q}-2} g,\qquad
\hK = \phi^{-2}S + \phi^{2\bar{q}-2} \tau g,
\end{equation*}
satisfy the constraint system  \eqref{CE-def-H}--\eqref{CE-def-M}.
We call $\hg$ the {\em physical metric} since this is the metric that enters in the constraint system \eqref{CE-def-H}--\eqref{CE-def-M},
and call $g$  the {\em conformal metric} since this is used only to specify  the conformal class of $\hg$, 
the idea being that all other information is lost in the scaling \eqref{CE-def-mf}.

One can further decompose $S$ into ``unknown'' and given parts,
in order to explicitly analyze the full system \eqref{CE-cr1H}--\eqref{CE-cr1M};
however, in this paper we will consider {\em only} the Lichnerowicz equation (\ref{CE-cr1H}).
In particular, we will assume that the traceless symmetric bilinear form $S$ is given.
This situation can arise, for example, when the mean extrinsic curvature $\tau$ is constant,
decoupling the system \eqref{CE-cr1H}--\eqref{CE-cr1M}.
In this case one can find $S$ satisfying the momentum constraint (\ref{CE-cr1M}) and then solve (\ref{CE-cr1H}) for $\phi$.
In general, the need to solve the Lichnerowicz equation occurs as part of an iteration that (or whose subsequence) converges to a solution of the coupled system \eqref{CE-cr1H}--\eqref{CE-cr1M}.
Such iteration methods have been used in the existence proofs of non-constant mean curvature solutions, e.g., in \cite{jIvM96,HNT07b,dM09}.

\subsection{Boundary conditions}
\label{ss:bc}

In this article, we will consider the Lichnerowicz equation \eqref{CE-cr1H} on a compact manifold with boundary.
Boundaries emerge in numerical relativity when one eliminates asymptotic ends or singularities from the manifold,
and so we need to impose appropriate boundary conditions for $\phi$.
We discuss here a fairly exhaustive list of boundary conditions previously considered in the literature,
and as a common denominator to all of those we propose a general set of boundary conditions to be studied in this paper.

On asymptotically flat manifolds, one has
\begin{equation}\label{e:outer-dirichlet}
\phi=1+Ar^{2-n}+\eps,
\qquad\textrm{with}\quad
\eps=O(r^{1-n}),
\quad\textrm{and}\quad
\partial_r\eps=O(r^{-n}),
\end{equation}
where $A$ is (a constant multiple of) the total energy, 
and $r$ is the usual flat-space radial coordinate \cite{YoPi82}.
So one could cut out the asymptotically Euclidean end along the sphere with a large radius $r$ and impose the Dirichlet condition $\phi\equiv1$ at the spherical boundary.
However, this can be improved as follows.
By differentiating \eqref{e:outer-dirichlet} with respect to $r$ and eliminating $A$ from the resulting two equations, we get
\begin{equation}\label{e:outer-robin}
\partial_r\phi+\frac{n-2}r(\phi-1)=O(r^{-n}).
\end{equation}
Now equating the right hand side to zero, we get an inhomogeneous Robin condition, which is, e.g., known to give accurate values for the total energy \cite{YoPi82}.

A main approach to producing black hole initial data is to excise a region of space around each singularity and solve the constraint equation in the remaining region.
Boundaries that enclose those excised regions are called inner boundaries,
and again we need to supply appropriate boundary conditions for them.
In \cite{YoPi82}, the authors introduce the boundary condition
\begin{equation}\label{e:minimal-surface}
\partial_r\phi+\frac{n-2}{2a}\phi=0,
\qquad\textrm{for }r=a.
\end{equation}
This means that $r=a$ is a minimal surface, and under appropriate conditions on the data (such as $S$),
the minimal surface is a {\em trapped surface} (see the next paragraph for precise conditions).
The existence of a trapped surface is important since by the singularity theorems it implies the existence of an event horizon outside of the trapped surface,
provided that a suitable form of cosmic censorship holds.
Strictly speaking, these types of singularity theorems do not apply to the current case of compact manifolds,
and rather they typically apply to the asymptotically Euclidean case.
However, our initial data on compact manifolds (with boundary) are meant to approximate asymptotically Euclidean data,
hence it is reasonable to require that any initial-boundary value problem framework of Einstein's evolution equation
that uses such initial data should respect the behaviour dictated by the singularity theorems and the cosmic censorships.

Various types of trapped surface conditions more general than the minimal surface condition \eqref{e:minimal-surface}
have also been considered in the literature.
In order to discuss and appropriately generalize those conditions, let us make clear what we mean by a trapped surface. 
Suppose that all necessary regions (including singularities and asymptotic ends) are excised from the initial slice, 
so that $M$ is now a compact manifold with boundary.
Assume that the boundary $\Sigma:=\partial M$ has finitely many components $\Sigma_1,\Sigma_2,\ldots$,
and let $\hat\nu\in\Gamma(T\Sigma^\perp)$ be the outward pointing unit normal (with respect to the physical metric $\hg$) at the boundary.
Then the {\em expansion scalars} corresponding to respectively the outgoing and ingoing (with respect to the excised region) future directed null geodesics orthogonal to $\Sigma$ are given by\footnote{%
We follow the convention of \cite{Wald84} and \cite{sD04} on the sign of $\hK$, which is the opposite of \cite{MTW70} and \cite{dM05b}.
Note however that our $\hat{H}$ is the same as $\hat{h}$ in \cite{dM05b}, which is equal to $\tilde{H}$ in \cite{sD04} divided by $n-1$.}
\begin{equation}\label{e:exp}\textstyle
\hat\theta_{\pm}=\mp(n-1)\hat{H}+\mathrm{tr}_{\hg}\hK-\hK(\hat\nu,\hat\nu),
\end{equation}
where $(n-1)\hat{H}=\mathrm{div}_{\hg}\hat\nu$ is the mean extrinsic curvature of $\Sigma$.
The surface $\Sigma_i$ is called a {\em trapped surface} if $\hat\theta_{\pm}< 0$ on $\Sigma_i$,
and a {\em marginally trapped surface} if $\hat\theta_{\pm}\leqs 0$ on $\Sigma_i$.
We will freely refer to either of these simply as a trapped surface,
since either the meaning will be clear from the context or there will be no need to distinguish between the two.
In terms of the conformal quantities we infer
\begin{equation}\label{e:exp-conf}\textstyle
\hat\theta_{\pm}=\mp(n-1)\phi^{-\bar{q}}
(\frac{2}{n-2}\partial_\nu\phi+H\phi)
+(n-1)\tau
-\phi^{-2\bar{q}}S(\nu,\nu),
\end{equation}
where $\nu=\phi^{\bar{q}-1}\hat\nu$ is the unit normal with respect to $g$, 
and $\partial_\nu\phi$ is the derivative of $\phi$ along $\nu$.
The mean curvature $H$ with respect to $g$ is related to $\hat H$ by
\begin{equation}\label{e:mean-curv}\textstyle
\hat{H}=\phi^{-\bar{q}}
(\frac{2}{n-2}\partial_\nu\phi+H\phi).
\end{equation}
In \cite{dM05b,sD04}, the authors studied boundary conditions leading to trapped surfaces in the asymptotically flat and constant mean curvature ($\tau=\mathrm{const}$) setting.
Note that in this setting, because of the decay condition on $\hK$ one automatically has $\tau\equiv0$.
In \cite{dM05b}, the boundary conditions are obtained by setting $\hat\theta_{+}\equiv0$. 
More generally, if one specifies the {\em scaled expansion scalar} $\theta_{+}:=\phi^{\bar{q}-e}\hat\theta_{+}$ for some $e\in\R$, and poses no restriction on $\tau$, then the (inner) boundary condition for the Lichnerowicz equation \eqref{CE-cr1H} can be given by
\begin{equation}\label{e:maxwell}\textstyle
\frac{2(n-1)}{n-2}\partial_\nu\phi+(n-1)H\phi-(n-1)\tau\phi^{\bar{q}}+S(\nu,\nu)\phi^{-\bar{q}}+\theta_{+}\phi^{e}=0.
\end{equation}
In \cite{sD04}, the boundary conditions are obtained by specifying $\hat\theta_{-}$.
Similarly to the above, if we generalize this approach so that $\theta_{-}:=\phi^{\bar{q}-e}\hat\theta_{-}$ is specified,
then we get the (inner) boundary condition
\begin{equation}\label{e:dain}\textstyle
\frac{2(n-1)}{n-2}\partial_\nu\phi+(n-1)H\phi+(n-1)\tau\phi^{\bar{q}}-S(\nu,\nu)\phi^{-\bar{q}}-\theta_{-}\phi^{e}=0.
\end{equation}
Note that in the above-mentioned approaches, one of $\theta_{\pm}$ remains unspecified, so in order to guarantee that both $\theta_{\pm}\leqs 0$,
one has to impose some conditions on the data, e.g., on $\tau$ or on $S$.
Another possibility would be to rigidly specify both $\theta_{\pm}$; we then can eliminate $S$ from \eqref{e:exp-conf} and we get the boundary condition
\begin{equation}\label{e:both}\textstyle
\frac{4(n-1)}{n-2}\partial_\nu\phi+2(n-1)H\phi+(\theta_{+}-\theta_{-})\phi^e=0.
\end{equation}
At the same time, eliminating the term involving $\partial_\nu\phi$ from \eqref{e:exp-conf} we get a boundary condition on $S$ that reads as
\begin{equation}\label{e:both-s}\textstyle
2S(\nu,\nu) = 2(n-1)\tau\phi^{2\bar{q}} - (\theta_{+}+\theta_{-})\phi^{e+\bar{q}}.
\end{equation}
We see in this case that the Lichnerowicz equation couples to the momentum constraint \eqref{CE-cr1M} through the boundary conditions.
So even in the constant mean curvature setting (where $\tau\equiv\mathrm{const}$),
the constraint equations \eqref{CE-cr1H}--\eqref{CE-cr1M} generally do {\em not} decouple.
The only reasonable way to decouple the constraints is to consider $\tau\equiv0$ and $e=-\bar{q}$.
We discuss this possibility in the next subsection, 
and the general coupling through the boundary condition \eqref{e:both-s} remains as an open problem.

\subsection{Discussion of the main results}

At this point we expect that the reader is reasonably familiar with the 
setting and the notation of the paper.
Before delving into the technical arguments, we now take a step back and 
discuss somewhat informally what we think are the most interesting aspects 
of our results.
The precise and general statements are found in the main body of the article
to follow.

Our well-posedness theory allows metrics that are barely continuous in the sense that $g\in W^{s,p}$
with $p\in(1,\infty)$ and $s\in(\frac{n}p,\infty)\cap[1,\infty)$, where $W^{s,p}$ is the usual Sobolev space.
This is the smoothness class considered in \cite{HNT07b} and \cite{dM06} for the case of closed manifolds.
As an auxiliary result we also prove the Yamabe classification of such rough metrics on compact manifolds with boundary, in \S\ref{sec:yamabe}.
It is worthwhile to discuss at some length the consequences of our approach to the construction of initial data with interesting properties,
such as data approximating asymptotically Euclidean ends, and data containing various trapped surfaces.
In the rest of this subsection we go into these issues.
In particular, towards the end of this subsection we answer a question posed by D.\thinspace Maxwell in his dissertation \cite{Ma04a}.

We start with the observation that apart from the Dirichlet condition \eqref{e:outer-dirichlet}, all the boundary conditions considered in the previous subsection are of the form
\begin{equation}\label{e:general-bc}
\partial_\nu\phi
+ b_{H}\phi
+ b_{\theta}\phi^{e}
+ b_{\tau}\phi^{\bar{q}} 
+ b_{w}\phi^{-\bar{q}}=0.
\end{equation}
For instance, in \eqref{e:maxwell} and \eqref{e:dain},
one has $b_{H}=\frac{n-2}2H$, $b_{\theta}=\pm\frac{n-2}{2(n-1)}\theta_{\pm}$, $b_{\tau}=\mp\frac{n-2}{2}\tau$, and $b_{w}=\pm\frac{n-2}{2(n-1)}S(\nu,\nu)$.
The minimal surface condition \eqref{e:minimal-surface} corresponds to the choice $b_{\theta}=b_{\tau}=b_{w}=0$, and $b_{H}=\frac{n-2}2H$.
The outer Robin condition \eqref{e:outer-robin} is $b_{H}=(n-2)H$, $b_{\theta}=-(n-2)H$ with $e=0$, and $b_{\tau}=b_{w}=0$.

We suppose that on each boundary component $\Sigma_i$, either the Dirichlet condition $\phi\equiv1$ or the Robin condition \eqref{e:general-bc} is enforced.
In particular, we allow the situation where no Dirichlet condition is imposed anywhere.
Also, in order to facilitate the linear Robin condition \eqref{e:outer-robin} and a nonlinear condition such as \eqref{e:maxwell} at the same time,
we must in general allow the exponent $e$ in \eqref{e:general-bc} to be only locally constant.

The main tool used in this paper is the method of sub- and super-solutions, combined with maximum principles and a couple of results from conformal geometry.
Consequently, the techniques are most sensitive to the signs of the coefficients in \eqref{e:general-bc},
and the preferred signs are $(e-1)b_{\theta}\geqs 0$, $b_{\tau}\geqs 0$, and $b_{w}\leqs 0$.
We call this regime the {\em defocusing case},
and in this case we have a very satisfactory well-posedness theory, given by Theorem \ref{t:uniq}, Theorem \ref{t:exist+}, and  Theorem \ref{t:exist-}.
Let us look at how this theory applies to each of the boundary conditions presented in the previous subsection.
First of all, not surprisingly, the Dirichlet boundary condition 
\begin{equation}
\phi\equiv1,
\end{equation}
is completely harmless.
In fact, imposing this condition on a boundary component alone can ensure uniqueness, and except the negative Yamabe case, existence as well.
The outer Robin condition suggested by \eqref{e:outer-robin} can be written as
\begin{equation}
\partial_\nu\phi
+ b_{H}\phi
+ b_{\theta}
=0,
\end{equation}
with $b_{H}=(n-2)H$, and $b_{\theta}=-(n-2)H$.
This is justified by the fact that $H=r^{-1}+o(r^{-1})$ on asymptotically Euclidean manifolds.
Since $e=0$, we have $(e-1)b_{\theta}\geqs0$ for sufficiently large $r$.
Hence we are in the defocusing regime.
For existence in the nonnegative Yamabe cases, 
which are the most relevant cases in practice,
we need the technical condition $b_{H}\geqs\frac{n-2}2H$ in Theorem \ref{t:exist+},
but this is easily satisfied since $H>0$ for large $r$.
For the negative Yamabe case, we cannot say anything about existence since
Theorem \ref{t:exist-}, which is our only existence result in this case, requires $b_{H}\leqs\frac{n-2}2H$.

Let us now discuss the black-hole boundary conditions for asymptotically Euclidean data on maximal slices as in \cite{sD04,dM05b}.
Recall that one has $\tau\equiv0$ in this setting.
In \cite{sD04}, Dain studies the boundary condition \eqref{e:dain} with $e=\bar{q}$, which we restate here for convenience:
\begin{equation}\label{e:dain-max}\textstyle
\frac{2(n-1)}{n-2}\partial_\nu\phi
+(n-1)H\phi
-\hat\theta_{-}\phi^{\bar{q}}
-S(\nu,\nu)\phi^{-\bar{q}}
=0.
\end{equation}
Since $\hat\theta_-\leqs0$, we are in the defocusing case upon requiring that $S(\nu,\nu)\geqs0$.
On account of \eqref{e:exp}, \eqref{e:mean-curv}, and \eqref{e:dain-max} we have
\begin{equation}\label{e:mean-theta-0}
\hat\theta_--\hat\theta_+ 
= 2(n-1)\hat{H} 
= 2S(\nu,\nu)\phi^{-2\bar{q}} + 2\hat\theta_-.
\end{equation}
By imposing the condition $|\hat\theta_-|\leqs S(\nu,\nu)\phi_+^{-2\bar{q}}$,
where $\phi_+$ is an {\em a priori} upper bound on $\phi$,
Dain guarantees $\hat{H}\geqs0$, and hence $\hat\theta_+\leqs\hat\theta_-\leqs0$.

Our generalization \eqref{e:dain} of Dain's condition favours the choices $\tau\geqs0$ and $e\geqs1$, in addition to $S(\nu,\nu)\geqs0$.
From \eqref{e:both-s} we have
\begin{equation}
\hat\theta_+ 
= - 2S(\nu,\nu)\phi^{-2\bar{q}} + 2(n-1)\tau - \theta_-\phi^{e-\bar{q}}.
\end{equation}
In order to ensure that $\hat\theta_+\leqs0$, a simple approach would be to set $e=\bar{q}$ as in Dain's condition, and to require 
\begin{equation}
2(n-1)\tau + |\theta_-|
\leqs 2S(\nu,\nu)\phi_+^{-2\bar{q}},
\end{equation}
where $\phi_+$ is an {\em a priori} upper bound on $\phi$.

The boundary condition proposed in \cite{dM05b} by Maxwell is the condition \eqref{e:maxwell} with $\theta_+\equiv0$ (and $e=\bar{q}$), which reads
\begin{equation}\label{e:maxwell-max}\textstyle
\frac{2(n-1)}{n-2}\partial_\nu\phi
+(n-1)H\phi
+S(\nu,\nu)\phi^{-\bar{q}}
=0.
\end{equation}
The sign $S(\nu,\nu)\leqs0$ would have been preferred, but we are forced to abandon it because from \eqref{e:both-s} we get
\begin{equation}\textstyle
2S(\nu,\nu) = - (\hat\theta_{+}+\hat\theta_{-})\phi^{2\bar{q}} = - \hat\theta_{-}\phi^{2\bar{q}}\geqs0,
\end{equation}
since we want to have $\hat\theta_{-}\leqs0$.
On the other hand, \eqref{e:mean-theta-0} implies that
\begin{equation}\textstyle
2(n-1)\hat{H} = \hat\theta_- \leqs 0.
\end{equation}
Although the boundary value problem is no longer the defocusing case, Maxwell proves the existence of solution under the condition $(n-1)H+S(\nu,\nu)\leqs0$.

In our generalization \eqref{e:maxwell} of Maxwell's condition, the preferred signs are $\tau\leqs0$, $S(\nu,\nu)\leqs0$, and $e\leqs1$.
As in the preceding paragraph, there is a strong tendency against the condition $S(\nu,\nu)\leqs0$,
but we can get away with it if we strengthen the condition $\tau\leqs0$, as follows.
From \eqref{e:both-s} we have
\begin{equation}
\hat\theta_- 
= 2(n-1)\tau - 2S(\nu,\nu)\phi^{-2\bar{q}} - \theta_+\phi^{e-\bar{q}}.
\end{equation}
So the only force going for $\hat\theta_-\leqs0$ is $\tau\leqs0$.
In particular, upon setting $e=-\bar{q}$,
if $\phi_-$ is an {\em a priori} lower bound on $\phi$, then $\hat\theta_-\leqs0$ is guaranteed under
\begin{equation}
2|S(\nu,\nu)| + |\theta_+|
\leqs 2(n-1)|\tau|\phi_-^{2\bar{q}}.
\end{equation}
Similarly, for $S(\nu,\nu)\geqs0$ one can impose
\begin{equation}
|\theta_+|
\leqs 2S(\nu,\nu) + 2(n-1)|\tau|\phi_-^{2\bar{q}},
\end{equation}
in order to have $\hat\theta_-\leqs0$.
Note that the case $S(\nu,\nu)\geqs0$ is not in the defocusing regime, 
but we have an existence result in \S\ref{sec:focus} assuming $S(\nu,\nu)$ is sufficiently small.

None of the results in \cite{sD04,dM05b} give initial data satisfying $\hat\theta_-\leqs\hat\theta_+<0$.
Whether or not such data exist is one of the open problems that Maxwell posed in his dissertation \cite{Ma04a}.
We show now that such data exist.
Recall that we have $\tau\equiv0$.
The first approach is to put $\theta_+=\theta_-=:\theta$ and $e=-\bar{q}$ in \eqref{e:both} and \eqref{e:both-s}, to get
\begin{equation}
\begin{split}
\textstyle\frac{2}{n-2}\partial_\nu\phi+H\phi&=0,\\
S(\nu,\nu) &= - \theta.
\end{split}
\end{equation}
The first equation is simply the minimal surface condition. 
Actually, on minimal surfaces in maximal slices, the outgoing and ingoing expansion scalars are equal to each other,
and given by $\hat\theta_\pm=-\hK(\hat\nu,\hat\nu)=-\phi^{-2\bar{q}}S(\nu,\nu)$ there, cf.\ \eqref{e:exp} and \eqref{e:exp-conf}.
In particular, one can specify the sign of expansion scalars $\hat\theta_+\equiv\hat\theta_-$ arbitrarily,
by solving the momentum constraint equation \eqref{CE-cr1M} with the boundary condition $S(\nu,\nu) = - \theta$.
The latter is possible, as shown in \cite{dM05b} for the asymptotically Euclidean case.
For the compact case, Maxwell's techniques work {\em mutatis mutandis}.

A more general approach is to put $e=-\bar{q}$ in \eqref{e:both} and \eqref{e:both-s}, to get
\begin{equation}
\begin{split}
\textstyle\frac{4(n-1)}{n-2}\partial_\nu\phi+2(n-1)H\phi+({\theta_+-\theta_-})\phi^{-\bar{q}}&=0,\\
2S(\nu,\nu) &= - (\theta_++\theta_-).
\end{split}
\end{equation}
The second equation poses no problem, and in the first equation, since $\theta_+\geqs\theta_-$, the coefficient in front of $\phi^{-\bar{q}}$ has the ``wrong'' sign.
In fact, it is of the form \eqref{e:maxwell-max} considered by Maxwell.
Hence Maxwell's result in \cite{dM05b} gives existence under the condition $2(n-1)H+\theta_+-\theta_-\leqs0$ for the asymptotically Euclidean case.
For the compact case, we prove existence results in \S\ref{sec:focus} under similar smallness conditions on $|\theta_+-\theta_-|$.

\subsection{Outline of the paper}

In order to develop a well-posedness theory for the Lichnerowicz equation that mirrors the theory developed for the case of closed manifolds, 
in Section~\ref{sec:yamabe}, we extend the technique of Yamabe classification to nonsmooth metrics on compact manifolds with boundary.
In particular, we show that two conformally equivalent rough metrics cannot have scalar curvatures with distinct signs.
Then in Section~\ref{sec:formulation}, we give a precise formulation of the problem that we want to study,
and in Section~\ref{sec:inv-uniq}, we establish results on conformal invariance and uniqueness.
Section~\ref{sec:sub-sup} is devoted to the method of sub- and super-solutions tailored to the situation at hand.
Our existence results are presented in Section~\ref{sec:defocus-exist} and in Section~\ref{sec:focus},
which respectively focus on the defocusing and non-defocusing cases.
We end the paper with some results on the continuous dependence of the solution on the coefficients (Section~\ref{sec:stab}),
and an appendix containing necessary supporting technical results that may be difficult to find in the literature.

\section{Yamabe classification of nonsmooth metrics}
\label{sec:yamabe}

Let ${M}$ be a smooth, connected, compact manifold with boundary and dimension $n\geqs 3$.
Assume that ${M}$ is equipped with a smooth Riemannian metric $g$.
With a positive function $\varphi\in C^\infty(M)$, let $\tilde{g}$ be related to $g$ by the conformal transformation
$\tilde{g}=\varphi^{2\bar{q}-2}g$, where $\bar{q}=\frac{n}{n-2}$.
We say that $\tilde{g}$ and $g$ are conformally equivalent, and write $\tilde{g}\sim g$, which defines an equivalence relation on the space of metrics.
The conformal equivalence class containing $g$ will be denoted by $[g]$; that is, $\tilde{g}\in[g]$ if and only if $\tilde{g}\sim g$.
It is well-known from, e.g., the work of Escobar \cite{Esco92,Esco96a} that given any smooth Riemannian metric $g$ on a compact connected manifold ${M}$ with boundary, there is always a metric $\tilde{g}\sim g$ that has scalar curvature of constant sign and vanishing boundary mean curvature,
and moreover the sign of this scalar curvature is determined by $[g]$.
In particular, two conformally equivalent metrics with vanishing boundary mean curvature cannot have scalar curvatures of distinct signs, 
and this defines three disjoint sets in the space of (conformal classes of) metrics: they are referred to as the {\em Yamabe classes}.
We remark here that there is a related classification depending on the sign of the boundary mean curvature when one requires $\tilde{g}$ to have vanishing scalar curvature and boundary mean curvature of constant sign.

We will extend the Yamabe classification to metrics in the Sobolev spaces $W^{s,p}$ under rather mild conditions on $s$ and $p$.
Let $g\in W^{s,p}$ be a Riemannian metric, 
and let $R\in W^{s-2,p}({M})$ denote its scalar curvature and $H\in W^{s-1-\frac1p,p}(\Sigma)$ denote the mean extrinsic curvature of the boundary $\Sigma:=\partial{M}$,
with respect to the outer normal.
We consider the functional $E:W^{1,2}({M})\to\R$ defined by
\begin{equation*}\textstyle
E(\varphi)=(\nabla\varphi,\nabla\varphi)+\frac{n-2}{4(n-1)}\langle R,\varphi^2\rangle+\frac{n-2}{2}\langle H,(\Tr\varphi)^2\rangle_{\Sigma},
\end{equation*}
where $\Tr:W^{1,2}({M})\to W^{\frac12,2}(\Sigma)$ is the trace map.
By Corollary \ref{c:alg}, the pointwise multiplication is bounded on $W^{1,2}\otimes W^{1,2}\to W^{\sigma,q}$
for $\sigma\leqs 1$ and $\sigma-\frac{n}q<2-n$.
Putting $\sigma=2-s$ and choosing $q$ such that $\frac1q+\frac1p=1$, these conditions read as $2-s-\frac{n}{q}=2-n-s+\frac{n}p<2-n$ or $s-\frac{n}p>0$, and $s\geqs 1$.
So if  $sp>n$ and $s\geqs 1$, $\varphi^2\in W^{2-s,q}$ for $\varphi\in W^{1,2}$, meaning that the second term is bounded in $W^{1,2}$.
Similarly, the third term is bounded in $W^{1,2}$.

For $2\leqs q\leqs 2\bar{q}$, and $2\leqs r\leqs \bar{q}+1$ with $q>r$, and $b\in\R$, we define
\begin{equation*}
\mathcal{Y}_g(q,r,b)=\inf_{\varphi\in B({q,r,b})}E(\varphi),
\end{equation*}
where
\begin{equation*}
B({q,r,b})=\{\varphi\in W^{1,2}:\|\varphi\|_{q}^q+b\|\Tr\varphi\|_{r,\Sigma}^r=1\}.
\end{equation*}
Under the conditions $sp>n$ and $s\geqs 1$, one can show that $\mathcal{Y}_g(q,r,b)$ is finite (cf.\ \cite[Proposition 2.3]{Esco96a}),
and moreover that $\mathcal{Y}_g:=\mathcal{Y}_g(2\bar{q},r,0)$ is a conformal invariant, i.e., $\mathcal{Y}_g=\mathcal{Y}_{\tilde{g}}$ for any two metrics $\tilde{g}\sim g$, now allowing $W^{s,p}$ functions for the conformal factor.
We refer to $\mathcal{Y}_g$ as the {\em Yamabe invariant} of the metric $g$, and we will see that the Yamabe classes correspond to the signs of the Yamabe invariant.

\begin{theorem}\label{t:subcrit}
Let ${M}$ be a smooth connected Riemannian manifold with dimension $n\geqs 3$ and with a metric $g\in W^{s,p}$, 
where we assume $sp>n$ and $s\geqs 1$.
Let $q\in[2,2\bar{q})$, and $r\in[2,\bar{q}+1)$ with $q>r$, and let $b\in\R$.
Then, there exists a strictly positive function $\phi\in B(q,r,b)\cap W^{s,p}({M})$, such that
\begin{equation}\label{e:yamabe}
\begin{split}\textstyle
-\Delta\phi+\frac{n-2}{4(n-1)}R\phi&=\lambda q\phi^{q-1},\\
\textstyle
\Tr\partial_\nu\phi+\frac{n-2}2H\Tr\phi&=\lambda rb(\Tr\phi)^{r-1},
\end{split}
\end{equation}
where the sign of $\lambda$ is the same as that of $\mathcal{Y}_g(q,r,b)$ defined above.
\end{theorem}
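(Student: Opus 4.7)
The plan is to apply the direct method of the calculus of variations on the constraint set $B(q,r,b)$. The crucial structural feature being exploited is that the exponents $q<2\bar{q}$ and $r<\bar{q}+1$ are \emph{strictly subcritical}, so that the embedding $W^{1,2}(M)\hookrightarrow L^q(M)$ and the trace map $W^{1,2}(M)\to L^r(\Sigma)$ are compact by Rellich--Kondrachov. This compactness will drive both the existence of a minimizer and the preservation of the constraint in the limit. Once a minimizer is produced in $W^{1,2}$, the stated $W^{s,p}$ regularity and strict positivity will follow from elliptic regularity for Laplacians with $W^{s,p}$ coefficients together with a Harnack / strong maximum principle argument.

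Concretely, I would pick a minimizing sequence $\{\varphi_k\}\subset B(q,r,b)$ and first show it is bounded in $W^{1,2}$: the Dirichlet term is the coercive piece of $E$, while the $R$- and $H$-terms are controlled by Corollary~\ref{c:alg}, so combined with the finiteness of $\mathcal{Y}_g(q,r,b)$ recalled above this yields a uniform $W^{1,2}$ bound. Pass to $\varphi_k\wto\phi$ in $W^{1,2}$; the compact embeddings give $\varphi_k\to\phi$ strongly in $L^q(M)$ and $\Tr\varphi_k\to\Tr\phi$ strongly in $L^r(\Sigma)$, preserving the constraint $\phi\in B(q,r,b)$. The same strong convergence, paired with the multiplication bound of Corollary~\ref{c:alg}, makes the lower order $R$- and $H$-terms of $E$ continuous along the subsequence, while the gradient piece is weakly lower semicontinuous, so $E(\phi)=\mathcal{Y}_g(q,r,b)$. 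Replacing $\phi$ by $|\phi|$ (which leaves both $E$ and the constraint functional invariant), I may assume $\phi\geqs 0$, and a standard Lagrange-multiplier argument produces $\lambda\in\R$ for which $\phi$ satisfies \eqref{e:yamabe} weakly.

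The $W^{s,p}$ regularity then follows by a bootstrap: from $\phi\in W^{1,2}$, the right-hand sides $\phi^{q-1}$ and $(\Tr\phi)^{r-1}$ lie in Sobolev spaces at subcritical level via Nemytskii/composition estimates, and elliptic regularity for the Robin problem for $-\Delta_g+\tfrac{n-2}{4(n-1)}R$ with $g\in W^{s,p}$ lifts $\phi$ stepwise into $W^{s,p}$. Strict positivity is then obtained by recasting the equation as $-\Delta\phi+c\phi=0$ with $c$ in an appropriate $L^t$ space and applying the Harnack inequality / strong maximum principle adapted to the Robin boundary condition; since $\phi\not\equiv 0$ by the constraint, this gives $\phi>0$ on $\overline{M}$. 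Finally, to identify the sign of $\lambda$ I would test \eqref{e:yamabe} against $\phi$ and integrate by parts to obtain
\begin{equation*}
E(\phi)=\lambda\bigl(q\|\phi\|_q^q+rb\|\Tr\phi\|_{r,\Sigma}^r\bigr),
\end{equation*}
and then use the constraint $\|\phi\|_q^q+b\|\Tr\phi\|_{r,\Sigma}^r=1$ to rewrite the parenthesis as $(q-r)\|\phi\|_q^q+r$, which is strictly positive because $q>r\geqs 2$; hence $\lambda$ has the same sign as $E(\phi)=\mathcal{Y}_g(q,r,b)$.

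The main obstacle I anticipate is the regularity bootstrap, due to the low regularity of the metric: one must carefully track Sobolev exponents through the nonlinear terms $\phi^{q-1}$ and $(\Tr\phi)^{r-1}$ as well as through the products $R\phi$ and $H\Tr\phi$, and match them with the mapping properties of the Robin Laplacian for a $W^{s,p}$ metric. The strict subcriticality of $q$ and $r$ is what makes this bootstrap close up, and it is exactly the same ingredient that provides the compactness used to produce the minimizer in the first step.
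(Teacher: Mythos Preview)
Your proposal is correct and follows essentially the same route as the paper: direct method on the subcritical constraint set, compactness of the embedding and trace to pass to the limit and preserve the constraint, weak lower semicontinuity of $E$, replacement by $|\phi|$, elliptic regularity (the paper invokes Corollary~\ref{C:ell-est}) and the strong maximum principle (Lemma~\ref{l:max-princ}) for positivity, and testing against $\phi$ for the sign of $\lambda$. Your explicit computation $(q-r)\|\phi\|_q^q+r>0$ makes transparent why the hypothesis $q>r$ is needed, a point the paper leaves implicit.
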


\begin{proof}
The above equation is the Euler-Lagrange equation for the functional $E$ over positive functions with the Lagrange multiplier $\lambda$, so it suffices to show that $E$ attains its infimum $\mathcal{Y}_g(q,r,b)$ over $B(q,r,b)$ at a positive function $\phi\in W^{s,p}({M})$.
Let $\{\phi_i\}\subset B(q,r,b)$ be a sequence satisfying $E(\phi_i)\to\mathcal{Y}_g(q,r,b)$.
If $\varphi\in B(q,r,b)$ satisfies the bound $E(\varphi)\leqs \Lambda$ then one has that $\|\varphi\|_{1,2}\leqs C(\Lambda)$, cf.\ \cite[Proposition 2.4]{Esco96a},
and since $\mathcal{Y}_g(q,r,b)$ is finite, we conclude that $\{\phi_i\}$ is bounded in $W^{1,2}({M})$.
By the reflexivity of $W^{1,2}({M})$, the compactness of $W^{1,2}({M})\hookrightarrow L^{q}({M})$,
and the compactness of the trace map $\Tr:W^{1,2}({M})\hookrightarrow L^{r}(\Sigma)$,
there exist an element $\phi\in W^{1,2}({M})$ and a subsequence $\{\phi'_i\}\subset\{\phi_i\}$ such that
$\phi'_i\rightharpoonup\phi$ in $W^{1,2}({M})$, $\phi'_i\rightarrow\phi$ in $L^{q}({M})$, and $\Tr\phi'_i\rightarrow\Tr\phi$ in $L^{r}(\Sigma)$.
The latter two imply $\phi\in B(q,r,b)$.
It is not hard to show that $E$ is weakly lower semi-continuous, and it follows that $E(\phi)=\mathcal{Y}_g(q,r,b)$, so $\phi$ satisfies \eqref{e:yamabe}.
Since $E(|\phi|)=E(\phi)$, after replacing $\phi$ by $|\phi|$, we can assume that $\phi\geqs 0$.
Corollary \ref{C:ell-est} implies that $\phi\in W^{s,p}({M})$, 
and since $\phi\neq0$ as $\phi\in B(q,r,b)$, by Lemma \ref{l:max-princ} we have $\phi>0$.
Finally, multiplying \eqref{e:yamabe} by $\phi$ and integrating by parts, we conclude that the sign of the Lagrange multiplier $\lambda$ is the same as that of $\mathcal{Y}_g(q,r,b)$.
\end{proof}

Under the conformal scaling $\tilde{g}=\varphi^{2\bar{q}-2}g$, the scalar curvature and the mean extrinsic curvature transform as
\begin{equation*}
\begin{split}
\tilde{R}&\textstyle=\varphi^{1-2\bar{q}}(-\frac{4(n-1)}{n-2}\Delta\varphi+R\varphi),\\
\tilde{H}&\textstyle=(\Tr\varphi)^{-\bar{q}}(\frac{2}{n-2}\Tr\partial_\nu\varphi+H\Tr\varphi),
\end{split}
\end{equation*}
so assuming the conditions of the above theorem we infer that any given metric $g\in W^{s,p}$ 
can be transformed to the metric $\tilde{g}=\phi^{2\bar{q}-2}g$ with the continuous scalar curvature $\tilde{R}=\frac{4\lambda q(n-1)}{n-2}\phi^{q-2\bar{q}}$,
and the continuous boundary mean curvature $\tilde{H}=\frac{2\lambda br}{n-2}(\Tr\phi)^{r-\bar{q}-1}$,
where the conformal factor $\phi$ is as in the theorem.
In other words, given any metric $g\in W^{s,p}$,
there exist continuous functions $\phi \in W^{s,p}({M})$ with $\phi>0$, $\tilde{R}\in W^{s,p}({M})$ and $\tilde{H}\in W^{s-\frac1p,p}(\Sigma)$, having {\em constant sign}, such that
\begin{equation}\label{e:conformal-transformation}
\begin{split}\textstyle
-\frac{4(n-1)}{n-2}\Delta\phi+R\phi&=\tilde{R}\phi^{2\bar{q}-1},\\
\textstyle
\frac2{n-2}\Tr\partial_\nu\phi+H\Tr\phi&=\tilde{H}(\Tr\phi)^{\bar{q}}.
\end{split}
\end{equation}
We will prove below that the conformal invariant $\mathcal{Y}_g$ of the metric $g$ completely determines the sign of $\tilde{R}$,
giving rise to the Yamabe classification of metrics in $W^{s,p}$.
Note that the sign of the boundary mean curvature can be controlled by the sign of the parameter $b\in\R$,
unless of course $\tilde{R}\equiv0$, in which case we are forced to have $\tilde{H}\equiv0$ in the above argument (this does not rule out the possibility that the sign of $\tilde{H}$ be controlled by some other technique).

In the class of smooth metrics there is a stronger result known as the Yamabe theorem which is proven by Escobar in \cite{Esco92,Esco96a} for compact manifolds with boundary: (almost) any conformal class of smooth metrics contains a metric with constant scalar curvature.
The Yamabe theorem is simply the extension of the above theorem to the critical case $q=2\bar{q}$ and $r=\bar{q}+1$,
and we see that for smooth metrics the sign of the Yamabe invariant determines which Yamabe class the metric is in.
A proof of the Yamabe theorem requires more delicate techniques since we lose the compactness of the embeddings $W^{1,2}({M})\hookrightarrow L^q({M})$ and $\Tr:W^{1,2}({M})\hookrightarrow L^r(\Sigma)$; see \cite{Esco92,Esco96a} for a treatment of smooth metrics.
It seems to be not known whether or not the Yamabe theorem can be extended to nonsmooth metrics such as the ones considered in this paper.
We will not pursue this issue here; however, the following simpler result justifies the Yamabe classification of nonsmooth metrics.

\begin{theorem}\label{t:yclass}
Let $({M},g)$ be a smooth, compact, connected Riemannian manifold with boundary,
where we assume that the components of the metric $g$ are (locally) in $W^{s,p}$, with $sp>n$ and $s\geqs 1$.
Let the dimension of ${M}$ be $n\geqs 3$.
Then, 
the following are equivalent:
\begin{itemize}
\item[a)] $\mathcal{Y}_g>0$ ($\mathcal{Y}_g=0$ or $\mathcal{Y}_g<0$).
\item[b)] $\mathcal{Y}_g(q,r,b)>0$ (resp. $\mathcal{Y}_g(q,r,b)=0$ or $\mathcal{Y}_g(q,r,b)<0$) for any $q\in[2,2\bar{q})$, $r\in[2,\bar{q}+1)$ with $q>r$, and any $b\in\R$.
\item[c)] There is a metric in $[g]$ whose scalar curvature is continuous and positive (resp. zero or negative), and boundary mean curvature is continuous and has any given sign (resp. is identically zero, has any given sign).
\end{itemize}
In particular, two conformally equivalent metrics cannot have scalar curvatures with distinct signs.
\end{theorem}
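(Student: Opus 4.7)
The plan is to prove the three equivalences cyclically as $(b)\Rightarrow(c)\Rightarrow(a)\Rightarrow(b)$, with the concluding ``in particular'' following from conformal invariance of $\mathcal{Y}_g$. The first implication is essentially a restatement of Theorem~\ref{t:subcrit}: at any admissible $(q,r,b)$ the theorem yields a positive $\phi\in W^{s,p}$, and the transformation formulas in \eqref{e:conformal-transformation} show that the conformal metric $\tilde{g}=\phi^{2\bar{q}-2}g$ has continuous scalar curvature $\tilde{R}$ of the same sign as $\lambda$ (hence of $\mathcal{Y}_g(q,r,b)$) and continuous mean curvature $\tilde{H}$ of the same sign as $\lambda b$. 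One then chooses the sign of $b$ to realize any prescribed sign of $\tilde{H}$ in the non-zero cases, while in the zero case $\lambda=0$ forces $\tilde{R}\equiv\tilde{H}\equiv 0$ automatically.

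For $(c)\Rightarrow(a)$, I would invoke the conformal invariance $\mathcal{Y}_g=\mathcal{Y}_{\tilde{g}}$ and compute in a representative $\tilde{g}$ in which $\tilde{R}$ and $\tilde{H}$ carry matching (possibly zero) sign. If $\tilde{R}>0$ and $\tilde{H}\geqs 0$, continuity and positivity of $\tilde{R}$ yield coercivity of the quadratic form, which combined with Sobolev embedding $W^{1,2}\hookrightarrow L^{2\bar{q}}$ gives $E_{\tilde{g}}(\psi)\geqs C\|\psi\|_{2\bar{q},\tilde{g}}^2$ with $C>0$, so $\mathcal{Y}_{\tilde{g}}>0$. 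In the zero case $E_{\tilde{g}}$ reduces to the Dirichlet energy, which is non-negative and vanishes on nontrivial constants, giving $\mathcal{Y}_{\tilde{g}}=0$. In the negative case, testing with a positive constant gives $E_{\tilde{g}}<0$, so $\mathcal{Y}_{\tilde{g}}<0$.

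The implication $(a)\Rightarrow(b)$ is the most delicate and splits by sign of $\mathcal{Y}_g$. For $\mathcal{Y}_g<0$, I would pick $\psi_0$ with $\|\psi_0\|_{2\bar{q}}=1$ and $E(\psi_0)<0$; an intermediate value argument on $f(t)=t^q\|\psi_0\|_q^q+bt^r\|\Tr\psi_0\|_r^r$ (using $q>r$, so $f(t)\to\infty$) produces $t>0$ with $t\psi_0\in B(q,r,b)$, and $E(t\psi_0)=t^2E(\psi_0)<0$ yields $\mathcal{Y}_g(q,r,b)<0$. For $\mathcal{Y}_g\geqs 0$, the scaling identity $E(\psi)\geqs\mathcal{Y}_g\|\psi\|_{2\bar{q}}^2$ gives $\mathcal{Y}_g(q,r,b)\geqs 0$, and the decisive step is to rule out equality in the cases where strict inequality is claimed. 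The key computation is that whenever $\phi>0$ solves \eqref{e:yamabe} with $\lambda=0$, integration by parts against $\phi$ gives $E_g(\phi)=0$: the bulk integrand cancels against the interior equation and the boundary integrand cancels against the Robin condition. Hence if $\mathcal{Y}_g>0$ yet $\mathcal{Y}_g(q,r,b)=0$, Theorem~\ref{t:subcrit} delivers such a $\phi$, and $\phi/\|\phi\|_{2\bar{q}}$ is an admissible test function for $\mathcal{Y}_g$ of value $0$, contradicting positivity; and if $\mathcal{Y}_g=0$, then Theorem~\ref{t:subcrit} applied at $(q,r,0)$ must yield $\lambda=0$ (otherwise $\tilde{R}>0$ with $\tilde{H}\equiv 0$ would force $\mathcal{Y}_g>0$ by the $(c)\Rightarrow(a)$ computation), producing a $\phi$ with $E_g(\phi)=0$ that, rescaled into $B(q,r,b)$ by the same IVT trick, shows $\mathcal{Y}_g(q,r,b)\leqs 0$. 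The concluding ``in particular'' follows at once from $(a)\Leftrightarrow(c)$: two conformally equivalent metrics share $\mathcal{Y}_g$, so the sign of a continuous constant-sign scalar curvature of any representative (with compatible boundary mean curvature) is determined by $\mathcal{Y}_g$ and hence cannot differ between two such representatives. The principal technical obstacle I anticipate is the non-invariance of the subcritical functional $\mathcal{Y}_g(q,r,b)$ under conformal change, which blocks a direct reduction to $\tilde{g}$ and makes the vanishing-energy identity $E_g(\phi)=0$ for $\lambda=0$ essential to close the argument.
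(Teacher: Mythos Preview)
Your proof is correct and follows the same cyclic scheme $(b)\Rightarrow(c)\Rightarrow(a)\Rightarrow(b)$ as the paper, with essentially identical arguments for the first two implications. The genuine divergence is in $(a)\Rightarrow(b)$. The paper splits according to the sign of $b$: for $b\leqs0$ it bounds the rescaling factor $k$ sending $B(q,r,b)$ into $B(2\bar q,\cdot,0)$ directly, while for $b>0$ it detours through the \emph{critical} functional $\mathcal{Y}_g(2\bar q,\bar q+1,b)$, exploiting the fact that this quantity (unlike $\mathcal{Y}_g(q,r,b)$) is conformally invariant, together with a minimizing-sequence argument to rule out $\mathcal{Y}_g(2\bar q,\bar q+1,b)=0$ when $\mathcal{Y}_g>0$. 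Your route instead invokes Theorem~\ref{t:subcrit} a second time: if $\mathcal{Y}_g(q,r,b)=0$ one obtains a minimizer $\phi>0$ with $\lambda=0$, and the integration-by-parts identity $E(\phi)=\lambda\bigl(q\|\phi\|_q^q+rb\|\Tr\phi\|_{r,\Sigma}^r\bigr)$ forces $E(\phi)=0$, contradicting $\mathcal{Y}_g>0$; the zero case is handled by producing such a $\phi$ at $(q,r,0)$ and rescaling it into $B(q,r,b)$. This buys you a uniform argument in $b$ that never touches the critical exponent, at the cost of relying on the existence of a subcritical minimizer, whereas the paper's argument is more elementary (pure rescaling and compactness) but needs the extra conformal-invariance observation for the critical functional and the case split on $\mathrm{sign}\,b$.
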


\begin{proof}
The implication b) $\Rightarrow$ c) is proven in Theorem \ref{t:subcrit}.

We begin by proving the implication c) $\Rightarrow$ a); i.e., that if there is a metric in $[g]$ with continuous scalar curvature of constant sign, then $\mathcal{Y}_g$ has the corresponding sign.
Since $\mathcal{Y}_g$ is a conformal invariant, we can assume that the scalar curvature $R$ of $g$ is continuous and has constant sign, and moreover that $H=0$.
If $R<0$, then $E(\varphi)<0$ for constant test functions $\varphi=\mathrm{const}$ and there is a constant function in $B(2\bar{q},\cdot,0)$, so we have $\mathcal{Y}_g<0$.
If $R\geqs 0$, then $E(\varphi)\geqs 0$ for any $\varphi\in W^{1,2}$, so $\mathcal{Y}_g\geqs 0$.
Taking constant test functions, we infer that $R=0$ implies $\mathcal{Y}_g=0$.
Now, if $R>0$ then $E(\varphi)$ defines an equivalent norm on $W^{1,2}$, and we have $1=\|\varphi\|_{2\bar{q}}\leqs C\|\varphi\|_{1,2}$ for $\varphi\in B({2\bar{q},\cdot,0})$,
so $\mathcal{Y}_g>0$.

We shall now prove the implication a) $\Rightarrow$ b); i.e., that for $q\in[2,2\bar{q})$ and $r\in[2,\bar{q}+1)$ with $q>r$, the sign of $\mathcal{Y}_g$ determines the sign of $\mathcal{Y}_g(q,r,b)$.
If $\mathcal{Y}_g<0$, then $E(\varphi)<0$ for some $\varphi\in B({2\bar{q},\cdot,0})$, and since $E(k\varphi)=k^2E(\varphi)$ for $k\in\R$,
there is some $k\varphi\in B(q,r,b)$ such that $E(k\varphi)<0$, so $\mathcal{Y}_g(q,r,b)<0$.
If $\mathcal{Y}_g\geqs 0$, then $E(\varphi)\geqs 0$ for all $\varphi\in B({2\bar{q},\cdot,0})$, and for any $\psi\in B(q,r,b)$ there is $k$ such that $k\psi\in B({2\bar{q},\cdot,0})$, so $\mathcal{Y}_g(q,r,b)\geqs 0$.
All such $k$ are uniformly bounded for $b\leqs 0$ since $k=1/\|\psi\|_{2\bar{q}}\leqs C/\|\psi\|_q\leqs C$ by the continuity estimate $\|\cdot\|_{q}\leqs C\|\cdot\|_{2\bar{q}}$.
For $b\leqs 0$, from this we have for all $\psi\in B(q,r,b)$, $E(\psi)=E(k\psi)/k^2\geqs \mathcal{Y}_g/k^2\geqs \mathcal{Y}_g/C^2$, meaning that $\mathcal{Y}_g>0$ implies $\mathcal{Y}_g(q,r,b)>0$.

What remains to be proven is the implication a) $\Rightarrow$ b) for $\mathcal{Y}_g\geqs 0$ and $b>0$.
To this end, we first prove that for $b>0$, $\mathcal{Y}_g=0$ implies $\mathcal{Y}_g(2\bar{q},\bar{q}+1,b)=0$
and $\mathcal{Y}_g>0$ implies $\mathcal{Y}_g(2\bar{q},\bar{q}+1,b)>0$.
Since $\mathcal{Y}_g(2\bar{q},\bar{q}+1,b)$ is a conformal invariant, without loss of generality we assume that the scalar curvature has constant sign and the boundary has vanishing mean curvature (which is possible by the above paragraph).
If $\mathcal{Y}_g=0$, then $R=0$ and so $E(\varphi)=(\nabla\varphi,\nabla\varphi)\geqs 0$ for $\varphi\in W^{1,2}({M})$.
Thus $\mathcal{Y}_g(2\bar{q},\bar{q}+1,b)\geqs 0$.
On the other hand, $E(\varphi)=0$ for constant test functions $\varphi=\mathrm{const}$ and there is a constant function in $B(2\bar{q},\bar{q}+1,b)$, so we have $\mathcal{Y}_g(2\bar{q},\bar{q}+1,b)=0$.
Now suppose that $\mathcal{Y}_g>0$ and $\mathcal{Y}_g(2\bar{q},\bar{q}+1,b)=0$,
which implies that $R>0$ and there exists a sequence $\{\psi_i\}\subset B(2\bar{q},\bar{q}+1,b)$ such that $E(\psi_i)\to0$.
Since $R>0$ we have $\psi_i\to0$ in $W^{1,2}({M})$,
which by the Sobolev embedding gives $\psi_i\to0$ in $L^{2\bar{q}}({M})$ and $\Tr\psi_i\to0$ in $L^{\bar{q}+1}(\Sigma)$.
This contradicts with $\psi_i\in B(2\bar{q},\bar{q}+1,b)$, hence $\mathcal{Y}_g>0\;\Rightarrow\;\mathcal{Y}_g(2\bar{q},\bar{q}+1,b)>0$.

Finally, we need to prove that for $b>0$, 
$\mathcal{Y}_g(2\bar{q},\bar{q}+1,b)>0$ implies $\mathcal{Y}_g(q,r,b)>0$
and $\mathcal{Y}_g(2\bar{q},\bar{q}+1,b)=0$ implies $\mathcal{Y}_g(q,r,b)=0$.
If $\mathcal{Y}_g(2\bar{q},\bar{q}+1,b)\geqs 0$, then $E(\varphi)\geqs 0$ for all $\varphi\in B({2\bar{q},\bar{q}+1,b})$, and for any $\psi\in B(q,r,b)$ there is $k$ such that $k\psi\in B({2\bar{q},\bar{q}+1,b})$, so $\mathcal{Y}_g(q,r,b)\geqs 0$.
All such $k$ are uniformly bounded for $b>0$ since 
\begin{equation*}
\begin{split}
k
&\leqs
\min\{\frac1{\|\psi\|_{2\bar{q}}},\frac1{b^{1/(\bar{q}+1)}\|\Tr\varphi\|_{\bar{q}+1,\Sigma}}\}
\leqs
C\min\{\frac1{\|\psi\|_{q}},\frac1{b^{1/r}\|\Tr\varphi\|_{r,\Sigma}}\}\\
&\leqs
C\frac2{\|\psi\|_{q}+b^{1/r}\|\Tr\varphi\|_{r,\Sigma}}
\leqs
2C.
\end{split}
\end{equation*}
From this we have for all $\psi\in B(q,r,b)$, 
$$
E(\psi)
=
E(k\psi)/k^2\geqs \mathcal{Y}_g/k^2
\geqs
\mathcal{Y}_g(2\bar{q},\bar{q}+1,b)/(4C^2),
$$ 
meaning that $\mathcal{Y}_g(2\bar{q},\bar{q}+1,b)>0$ implies $\mathcal{Y}_g(q,r,b)>0$.
On the other hand, if $\mathcal{Y}_g(q,r,b)>0$ then by the implications b) $\Rightarrow$ c) $\Rightarrow$ a), which have been proven at this point,
we have $\mathcal{Y}_g>0$, and this implies $\mathcal{Y}_g(2\bar{q},\bar{q}+1,b)>0$ by the previous paragraph.
Thus $\mathcal{Y}_g(2\bar{q},\bar{q}+1,b)=0\;\Rightarrow\;\mathcal{Y}_g(q,r,b)=0$, completing the proof.
\end{proof}

\section{Formulation of the problem}
\label{sec:formulation}

In this subsection we will formulate a boundary value problem for the Lichnerowicz equation, with low regularity requirements on the equation coefficients.
To make it explicit that the boundary conditions are enforced, in what follows this boundary value problem will be called the {\em Lichnerowicz problem}.

With $n\geqs 3$, let ${M}$ be a smooth, compact $n$-dimensional manifold with
or without boundary, and with $p\in(1,\infty)$ and $s\in(\frac{n}p,\infty)\cap[1,\infty)$, let $g\in W^{s,p}$ be a Riemannian metric on $M$.
Then it is known that the Laplace-Beltrami operator can be uniquely extended to a bounded linear map $\Delta:W^{s,p}({M})\to W^{s-2,p}({M})$;
cf.\ Lemma \ref{l:bdd-operator}.

Given any two functions $u,v \in L^{\infty}$, and $t\geqs 0$ and $q\in[1,\infty]$, define the interval
\[
[u,v]_{t,q} = \{ \phi \in W^{t,q}({M}) : u \leqs \phi \leqs v \} \subset W^{t,q}({M}).
\]
We equip $[u,v]_{t,q}$ with the subspace topology of $W^{t,q}({M})$.
We will write $[u,v]_q$ for $[u,v]_{0,q}$, and $[u,v]$ for $[u,v]_{\infty}$.
Let $a_\tau,a_w\in W^{s-2,p}({M})$ be nonnegative functions, and let $a_{R} := \frac{n-2}{4(n-1)}R\in W^{s-2,p}({M})$, 
where we recall that $R$ is the scalar curvature of the metric $g$.
Assuming that $\phi_{-},\phi_{+}\in W^{s,p}({M})$ and $\phi_{+}\geqs \phi_{-}>0$, we introduce the nonlinear operator
\begin{equation*}
f : [\phi_{-},\phi_{+}]_{s,p}
\to W^{s-2,p}({M}),
\qquad
f(\phi) =  
a_{R}\phi
+ a_{\tau}\phi^{2\bar{q}-1}
- a_{w}\phi^{-2\bar{q}-1}
\end{equation*}
where the pointwise multiplication by an element of $W^{{s},p}({M})$ defines a bounded linear map in
$W^{s-2,p}({M})$; cf.\ Corollary \ref{c:alg}(a).
Note that using the above operators, we can write the Lichnerowicz equation \eqref{CE-cr1H} as $-\Delta\phi + f(\phi) = 0$,
provided that the coefficients in $f$ are given by
\begin{equation}
\label{CF-def-coeff2}
\begin{aligned}
a_{\tau} = \textstyle\frac{n(n-2)}{4}\tau^2,\qquad
a_{w} = \textstyle\frac{n-2}{4(n-1)}|S|_g^2.
\end{aligned}
\end{equation}
In particular, our assumption that these coefficients are nonnegative is well justified.

Now we need a setup for the boundary conditions.
We assume that the boundary $\Sigma\equiv\partial{M}$ of ${M}$ is divided as follows
\begin{gather}
\label{b-DN}
\Sigma = \Sigma_{D} \cup\Sigma_{N},\quad
\overline{\Sigma}_{D} \cap
\overline{\Sigma}_{N} =\emptyset.
\end{gather}
Note that this requires each boundary component to be either entirely in ${\Sigma}_{D}$ or in ${\Sigma}_{N}$.
We emphasize that in what follows the cases $\Sigma_{D} =\emptyset$ or
$\Sigma_{N} =\emptyset$ are included.
As the notation suggests, we will consider boundary conditions for the Lichnerowicz equation of
Dirichlet type on $\Sigma_{D}$ and of nonlinear Robin type on
$\Sigma_{N}$.

Let $\Tr_{D}\phi:=\phi|_{\Sigma_{D}}$, $\Tr_{N}\phi:=\phi|_{\Sigma_{N}}$, and $\Tr_{N}\partial_{\nu}\phi:=(\partial_\nu\phi)|_{\Sigma_{N}}$
for smooth $\phi$.
These maps can be uniquely extended to continuous surjective maps
\begin{equation*}
\Tr_{D,N}: W^{s,p}({M}) \to
W^{s-\frac{1}{p},p}(\Sigma_{D,N}),
\quad\textrm{and}\quad
\Tr_{N}\partial_{\nu} : W^{s,p}({M}) \to
W^{s-1-\frac{1}{p},p}(\Sigma_{N}),
\end{equation*}
when $s - \frac1p$ is {\em not} an integer.
With $b_{H},b_\theta,b_\tau,b_w\in W^{s-1-\frac1p,p}(\Sigma_{N})$,
we introduce the nonlinear operator
\begin{equation*}
h=\tilde{h}\circ\Tr_{N} : [\phi_{-},\phi_{+}]_{s,p}
\to W^{s-1-\frac1p,p}(\Sigma_{N}),
\end{equation*}
where $\tilde{h} : \Tr_{N}\left([\phi_{-},\phi_{+}]_{s,p}\right) \to W^{s-1-\frac1p,p}(\Sigma_{N})$ is defined by
\begin{equation*}
\tilde{h}(\varphi) =  
b_{H}\varphi
+ b_{\theta}\varphi^{e}
+ b_{\tau}\varphi^{\bar{q}} 
+ b_{w}\varphi^{-\bar{q}}.
\end{equation*}
As an aside, let us note that we may omit explicitly writing the trace maps $\gamma_D$ etc, 
when it clutters formulas more than it clarifies.
Returning back to the main flow of the discussion, we fix a function $\phi_{D}\in W^{s-\frac1p,p}(\Sigma_{D})$ with $\phi_{D}>0$.
Now we formulate the {\em Lichnerowicz problem} in terms of the above defined operators: Find an element $\phi\in [\phi_{-},\phi_{+}]_{s,p}$ solving
\begin{equation}
\label{WF-HC}
\begin{split}
-\Delta\phi +f(\phi) &= 0,\\
\Tr_{N}\partial_{\nu}\phi +h(\phi) &= 0,\\
\Tr_{D}\phi &= \phi_{D}.
\end{split}
\end{equation}
We note that by appropriately choosing the boundary components $\Sigma_{N}$ and $\Sigma_{D}$,
the Robin data $b_{H},b_\theta,b_\tau,b_w$, and the Dirichlet datum $\phi_{D}$,
one can recover various combinations of any of the (inner or outer) boundary conditions considered in \S\ref{ss:bc}.
For instance, in \eqref{e:maxwell} and \eqref{e:dain},
one has $b_{H}=\frac{n-2}2H$, $b_{\theta}=\pm\frac{n-2}{2(n-1)}\theta_{\pm}$, $b_{\tau}=\pm\frac{n-2}{2}\tau$, and $b_{w}=\pm\frac{n-2}{2(n-1)}S(\nu,\nu)$.
The minimal surface condition \eqref{e:minimal-surface} corresponds to the choice $b_{\theta}=b_{\tau}=b_{w}=0$, and $b_{H}=\frac{n-2}2H$.
The outer Robin condition \eqref{e:outer-robin} is $b_{H}=(n-2)H$, $b_{\theta}=-(n-2)H$ with $e=0$, and $b_{\tau}=b_{w}=0$.
In order to facilitate the linear Robin condition \eqref{e:outer-robin} and a nonlinear condition such as \eqref{e:maxwell} at the same time,
we allow the exponent $e$ in \eqref{e:general-bc} to be only locally constant.

\section{Conformal invariance and uniqueness}
\label{sec:inv-uniq}

Let ${M}$ be a smooth, compact, connected $n$-dimensional manifold with boundary, equipped with a Riemannian metric $g\in W^{s,p}$, where we assume throughout this section that $n\geqs 3$, $p\in(1,\infty)$, and that $s\in(\frac{n}p,\infty)\cap[1,\infty)$.
We consider the following model for the Lichnerowicz problem
\begin{equation*}\textstyle
F(\phi):=
\left(
\begin{array}{c}
-\Delta\phi+\frac{n-2}{4(n-1)}R\phi+a\phi^{t}\\
\Tr_{N}\partial_{\nu}\phi+\frac{n-2}{2}H\Tr_{N}\phi+b(\Tr_{N}\phi)^e\\
\Tr_{D}\phi-c
\end{array}
\right)=0,
\end{equation*}
where $t,e\in\R$ are constants, $R\in W^{s-2,p}({M})$ and $H\in W^{s-1-\frac1p,p}(\Sigma)$ are respectively the scalar and mean curvatures of the metric $g$, and the other coefficients satisfy $a\in W^{s-2,p}({M})$, $b\in W^{s-1-\frac1p,p}(\Sigma_{N})$, and $c\in W^{s-\frac1p,p}(\Sigma_{D})$.
Setting $\bar{q}=\frac{n}{n-2}$,
we will be interested in the transformation properties of $F$ under the conformal change $\tilde{g}=\theta^{2\bar{q}-2}g$ of the metric with the conformal factor $\theta\in W^{s,p}({M})$ satisfying $\theta>0$.
To this end, we consider
\begin{equation*}\textstyle
\tilde{F}(\psi):=
\left(
\begin{array}{c}
-\tilde\Delta\psi+\frac{n-2}{4(n-1)}\tilde{R}\psi+\tilde{a}\psi^{t}\\
\Tr_{N}\partial_{\tilde\nu}\psi+\frac{n-2}{2}\tilde{H}\Tr_{N}\psi+\tilde{b}(\Tr_{N}\psi)^e\\
\Tr_{D}\psi-\tilde{c}
\end{array}
\right)=0,
\end{equation*}
where $\tilde{\Delta}$ is the Laplace-Beltrami operator associated to the metric $\tilde{g}$,
$\tilde{\nu}$ is the outer normal to $\Sigma$ with respect to $\tilde{g}$,
$\tilde{R}\in W^{s-2,p}({M})$ and $\tilde{H}\in W^{s-1-\frac1p,p}(\Sigma)$ are respectively the scalar and mean curvatures of $\tilde{g}$,
and $\tilde{a}\in W^{s-2,p}({M})$, $\tilde{b}\in W^{s-1-\frac1p,p}(\Sigma_{N})$, and $\tilde{c}\in W^{s-\frac1p,p}(\Sigma_{D})$.

\begin{lemma}\label{l:conf-inv}
Let
$\tilde{a}=\theta^{t+1-2\bar{q}}a$, $\tilde{b}=\theta^{e-\bar{q}}b$, and $\tilde{c}=\theta^{-1}c$.
Then we have
\begin{equation*}
\begin{split}
\tilde{F}(\psi)=0
\quad\Leftrightarrow\quad
F(\theta\psi)=0,\\
\tilde{F}(\psi)\geqs 0
\quad\Leftrightarrow\quad
F(\theta\psi)\geqs 0,\\
\tilde{F}(\psi)\leqs 0
\quad\Leftrightarrow\quad
F(\theta\psi)\leqs 0.
\end{split}
\end{equation*}
\end{lemma}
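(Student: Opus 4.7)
The approach is to show, component by component, that $\tilde{F}(\psi)$ equals a strictly positive power of $\theta$ times the corresponding component of $F(\theta\psi)$; the three claimed equivalences then follow immediately from $\theta>0$. The core ingredient is the conformal covariance of the \emph{conformal Laplacian} $L_g := -\frac{4(n-1)}{n-2}\Delta_g + R_g$ and its boundary counterpart $B_g := \frac{2}{n-2}\partial_\nu + H_g$: under $\tilde{g} = \theta^{2\bar{q}-2}g$ one has
\begin{equation*}
L_{\tilde{g}}\psi = \theta^{1-2\bar{q}}\,L_g(\theta\psi),
\qquad
B_{\tilde{g}}\psi = \theta^{-\bar{q}}\,B_g(\theta\psi).
\end{equation*}
The scalar and mean curvature transformation laws displayed just before the statement are exactly the $\psi\equiv1$ cases of these identities; the general identities follow by a short Leibniz calculation together with $\partial_{\tilde{\nu}} = \theta^{1-\bar{q}}\partial_\nu$, which in turn comes from $\tilde{\nu} = \theta^{1-\bar{q}}\nu$ (forced by $|\tilde\nu|_{\tilde g}=1$).

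With these identities in hand, I would check the three data pieces by direct algebra. From $\tilde{a} = \theta^{t+1-2\bar{q}}a$ one gets $\tilde{a}\psi^t = \theta^{1-2\bar{q}}\,a\,(\theta\psi)^t$; from $\tilde{b} = \theta^{e-\bar{q}}b$ one gets $\tilde{b}(\Tr_N\psi)^e = \theta^{-\bar{q}}\,b\,(\Tr_N(\theta\psi))^e$; and from $\tilde{c} = \theta^{-1}c$ one gets $\Tr_D\psi - \tilde{c} = \theta^{-1}(\Tr_D(\theta\psi) - c)$. Combining these with the two covariance identities (multiplied by $\frac{n-2}{4(n-1)}$ and $\frac{n-2}{2}$ respectively) yields that the three components of $\tilde{F}(\psi)$ are exactly $\theta^{1-2\bar{q}}$, $\theta^{-\bar{q}}$, and $\theta^{-1}$ times the three components of $F(\theta\psi)$. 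Since $\theta$ is strictly positive on $M$ and on $\partial M$, all three of the claimed equivalences (equality, $\geqs 0$, $\leqs 0$) drop out simultaneously.

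The main obstacle I anticipate is justifying the two operator identities at the present low regularity, where $g,\theta\in W^{s,p}$ with $sp>n$ and $s\geqs 1$, so that $\Delta\theta$ and $\partial_\nu\theta$ live only in $W^{s-2,p}(M)$ and $W^{s-1-\frac1p,p}(\Sigma)$. The multiplication and algebra results in Corollary~\ref{c:alg}, combined with the fact that positive powers of a strictly positive $W^{s,p}$-function stay in $W^{s,p}$, ensure that $\theta\psi\in W^{s,p}$ and that every product appearing on either side defines the same element of $W^{s-2,p}(M)$ or $W^{s-1-\frac1p,p}(\Sigma_N)$. Given this, one either verifies the identities for smooth $g$ and $\theta$ and passes to the general case by continuity in the relevant Sobolev norms, or expands directly via $\Delta(\theta\psi) = \theta\Delta\psi + \psi\Delta\theta + 2g(\nabla\theta,\nabla\psi)$ and its boundary analogue and invokes the curvature transformation laws to verify the identities distributionally.
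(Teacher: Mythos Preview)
Your proposal is correct and is essentially the same approach as the paper's: the paper derives exactly the two covariance identities you state (by writing out $\tilde{R}$, $\tilde{\Delta}\psi$, $\tilde{H}$, $\partial_{\tilde\nu}\psi$ and combining them with the Leibniz formula $\Delta(\theta\psi)=\theta\Delta\psi+\psi\Delta\theta+2\langle d\theta,d\psi\rangle_g$), and then the conclusion follows. Your discussion of the low-regularity justification via Corollary~\ref{c:alg} is a welcome addition that the paper's proof leaves implicit.
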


\begin{proof}
One can derive the following relations
\begin{equation*}
\begin{split}
\tilde{R}&\textstyle=\theta^{2-2\bar{q}}R-\frac{4(n-1)}{n-2}\theta^{1-2\bar{q}}\Delta\theta,\\
\tilde{\Delta}\psi&=\theta^{2-2\bar{q}}\Delta\psi+2\theta^{1-2\bar{q}}\langle\mathrm{d}\theta,\mathrm{d}\psi\rangle_{g}.\\
\end{split}
\end{equation*}
Combining these relations with
\begin{equation*}
\Delta(\theta\psi)=\theta\Delta\psi+\psi\Delta\theta+2\langle\mathrm{d}\theta,\mathrm{d}\psi\rangle_{g},
\end{equation*}
we obtain
\begin{equation*}\textstyle
-\tilde{\Delta}\psi+\frac{n-2}{4(n-1)}\tilde{R}\psi
=\theta^{1-2\bar{q}}
\left(-\Delta(\theta\psi)+\frac{n-2}{4(n-1)}R\theta\psi\right).
\end{equation*}
On the other hand, we have
\begin{equation*}
\begin{split}
\tilde{H}&\textstyle=\theta^{1-\bar{q}}H+\frac{2}{n-2}\theta^{-\bar{q}}\partial_{\nu}\theta,\\
\partial_{\tilde\nu}\psi&=\theta^{1-\bar{q}}\partial_{\nu}\psi,
\end{split}
\end{equation*}
where traces are understood in the necessary places.
The above imply that
\begin{equation*}\textstyle
\partial_{\tilde\nu}\psi+\frac{n-2}{2}\tilde{H}\psi
=
\theta^{-\bar{q}}
\left(\partial_{\nu}(\theta\psi)+\frac{n-2}{2}{H}\theta\psi\right),
\end{equation*}
and the proof follows.
\end{proof}

This result implies the following uniqueness result for the model Lichnerowicz problem.

\begin{lemma}\label{l:model-uniq}
Let the coefficients of the model Lichnerowicz problem satisfy $(t-1)a\geqs 0$, $(e-1)b\geqs 0$, and $c>0$.
If the positive functions $\theta,\phi\in W^{s,p}({M})$ are distinct solutions of the constraint,
i.e., $F(\theta)=F(\phi)=0$,
and $\theta\neq\phi$, then
$(t-1)a=0$, $(e-1)b=0$, $\Sigma_{D}=\varnothing$, and the ratio $\theta/\phi$ is constant.
If in addition, $t\neq1$, then $\mathcal{Y}_g=0$.
\end{lemma}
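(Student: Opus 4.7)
My plan is to exploit Lemma~\ref{l:conf-inv}: since $\theta$ is itself a positive solution, rescaling conformally with factor $\theta$ reduces the problem to one in the metric $\tilde g = \theta^{2\bar{q}-2}g$ in which the constant function $1$ is a solution. Concretely, setting $\psi = \phi/\theta$, Lemma~\ref{l:conf-inv} gives $\tilde F(\psi) = 0$, and since $\phi = \theta\cdot 1$ we also have $\tilde F(1) = 0$, with transformed coefficients $\tilde a = \theta^{t+1-2\bar{q}}a$, $\tilde b = \theta^{e-\bar{q}}b$, and $\tilde c = \theta^{-1}c$. Reading off the three components of $\tilde F(1) = 0$ extracts the identities $\frac{n-2}{4(n-1)}\tilde R = -\tilde a$ in $M$, $\frac{n-2}{2}\tilde H = -\tilde b$ on $\Sigma_N$, and $\tilde c \equiv 1$ on $\Sigma_D$.

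Substituting these back into $\tilde F(\psi) = 0$ produces the equivalent form $-\tilde\Delta\psi + \tilde a(\psi^t - \psi) = 0$ in $M$, $\partial_{\tilde\nu}\psi + \tilde b(\psi^e - \psi) = 0$ on $\Sigma_N$, and $\psi \equiv 1$ on $\Sigma_D$. I would then linearize in $u := \psi - 1$ using the identity $\psi^{\mu-1} - 1 = (\mu-1)\,u\,J_\mu(u)$ with $J_\mu(u) := \int_0^1 (1+su)^{\mu-2}\,ds$, which is strictly positive because $\psi > 0$. This rewrites the problem as the linear system $-\tilde\Delta u + V u = 0$ in $M$, $\partial_{\tilde\nu}u + W u = 0$ on $\Sigma_N$, $u \equiv 0$ on $\Sigma_D$, with $V = (t-1)\tilde a\,\psi\,J_t(u) \geqs 0$ and $W = (e-1)\tilde b\,\psi\,J_e(u) \geqs 0$ by the sign hypotheses $(t-1)a\geqs 0$, $(e-1)b\geqs 0$ together with the positivity of $\theta$, $\psi$, $J_t$, $J_e$. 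Multiplying by $u$ and integrating by parts, the boundary conditions yield the clean energy identity $\int_M |\tilde\nabla u|^2 + \int_M V u^2 + \int_{\Sigma_N} W u^2 = 0$.

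Since each of the three summands is nonnegative, all three vanish individually. The first forces $u$ to be constant; since $\phi\neq\theta$ this constant is nonzero, so $\phi/\theta$ is a nonzero constant. The Dirichlet condition $u=0$ on $\Sigma_D$ then forces $\Sigma_D = \varnothing$, while $Vu^2 \equiv 0$ with $u\neq 0$ and $\psi, J_t > 0$ forces $(t-1)\tilde a = 0$, hence $(t-1)a = 0$; symmetrically $(e-1)b = 0$. For the final claim, if $t\neq 1$ then $a\equiv 0$, so $F(\theta)=0$ reduces in the interior to $-\frac{4(n-1)}{n-2}\Delta\theta + R\theta = 0$. By the scalar-curvature transformation rule established in the proof of Lemma~\ref{l:conf-inv}, this says exactly $\tilde R \equiv 0$, so $[g]$ contains a metric of continuous zero scalar curvature, and Theorem~\ref{t:yclass} yields $\mathcal Y_g = 0$.

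The main technical obstacle is the low regularity: the coefficients $\tilde a$ and $\tilde b$ lie only in negative-order Sobolev spaces, while $\psi$ is merely continuous, so pointwise mean-value linearizations are delicate. The integral representation of $\psi^{\mu-1}-1$ sidesteps this because $J_t(u)$ and $J_e(u)$ are continuous and strictly positive, so $V$ and $W$ inherit the regularity of $\tilde a$ and $\tilde b$; the pointwise sign hypotheses then translate directly to distributional signs of $V$ and $W$, and the energy identity is justified by the standard multiplication and integration-by-parts results collected in the paper's appendix. Once the signs of $V$ and $W$ are secured, everything else reduces to the familiar maximum-principle/energy argument.
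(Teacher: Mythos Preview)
Your proof is correct and follows essentially the same approach as the paper's: both rescale conformally via Lemma~\ref{l:conf-inv} so that $1$ and $\psi=\phi/\theta$ are simultaneous solutions, then pair the resulting equation against $u=\psi-1$ and use Lemma~\ref{l:green} to obtain a sign-definite identity forcing $u$ constant. The only difference is cosmetic: the paper reads the sign of $\langle\tilde a,\psi(\psi^{t-1}-1)(\psi-1)\rangle$ directly from the monotonicity of $x\mapsto x^{t-1}$, whereas you linearize via $\psi^{t-1}-1=(t-1)uJ_t(u)$, which is literally the same quantity rewritten.
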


\begin{proof}
Let the scaled constraint $\tilde{F}$ be associated to the scaled metric $\tilde{g}=\theta^{2\bar{q}-2} g$ as above,
and assume that $\tilde{a}=\theta^{t+1-2\bar{q}}a$, $\tilde{b}=\theta^{e-\bar{q}}b$, and $\tilde{c}=\theta^{-1}c\equiv1$.
Then by Lemma \ref{l:conf-inv}, $\psi:=\phi/\theta$ satisfies $\tilde{F}(\psi)=0$. From $F(\theta)=0$, we have
\begin{equation*}
\begin{split}
\tilde{R}&\textstyle=
\theta^{1-2\bar{q}}\left(R\theta-\frac{n-1}{4(n-1)}\Delta\theta\right)=-\frac{4(n-1)}{n-1}\theta^{1-2\bar{q}}\cdot a\theta^{t}=-\frac{4(n-1)}{n-1}\tilde{a},\\
\tilde{H}&\textstyle=
\theta^{-\bar{q}}\left(H\theta+\frac{2}{n-2}\partial_\nu\theta\right)=-\frac2{n-2}\theta^{-\bar{q}}\cdot b\theta^{e}=-\frac2{n-2}\tilde{b},
\end{split}
\end{equation*}
which imply
\begin{equation*}\textstyle
\tilde{F}(\psi)=
\left(
\begin{array}{c}
-\tilde\Delta\psi+\frac{n-2}{4(n-1)}\tilde{R}\psi+\tilde{a}\psi^{t}\\
\Tr_{N}\partial_{\tilde\nu}\psi+\frac{n-2}{2}\tilde{H}\psi+\tilde{b}\psi^e\\
\Tr_{D}\psi-\tilde{c}
\end{array}
\right)
=
\left(
\begin{array}{c}
-\tilde\Delta\psi+\tilde{a}(\psi^{t}-\psi)\\
\Tr_{N}\partial_{\tilde\nu}\psi+\tilde{b}(\psi^e-\psi)\\
\Tr_{D}\psi-1
\end{array}
\right)
=0,
\end{equation*}
where the trace $\Tr_{N}$ is assumed in the necessary places.
By Lemma \ref{l:green}, we have
\begin{equation*}
\begin{split}
\langle{\nabla}(\psi-1),{\nabla}(\psi-1)\rangle
&=
-\langle\tilde{\Delta}(\psi-1),\psi-1\rangle
+\langle\Tr_{N}\partial_{\tilde\nu}(\psi-1),\psi-1\rangle_{N}\\
&\quad+\langle\Tr_{D}\partial_{\tilde\nu}(\psi-1),\psi-1\rangle_{D}\\
&=
-\langle\tilde{a}(\psi^t-\psi),\psi-1\rangle
-\langle\tilde{b}(\psi^e-\psi),\psi-1\rangle_{N}\\
&=
-\langle\tilde{a},\psi(\psi^{t-1}-1)(\psi-1)\rangle
-\langle\tilde{b},\psi(\psi^{e-1}-1)(\psi-1)\rangle_{N}.
\end{split}
\end{equation*}
Since the right hand side is nonpositive by $(t-1)a\geqs 0$ and $(e-1)b\geqs 0$, and the left hand side is manifestly nonnegative,
we infer that both sides vanish; therefore $\psi=\mathrm{const}$.
If $\Sigma_{D}\neq\varnothing$, then $\psi\equiv1$ is immediate.
Now, if $\Sigma_{D}=\varnothing$, and $\psi\neq1$,
then from the above equation we obtain $\langle\tilde{a},t-1\rangle+\langle\tilde{b},e-1\rangle_{N}=0$,
concluding the first part of the lemma.
Finally, if in addition to the above, $t\neq1$, then we have $\tilde{R}=0$ hence $\mathcal{Y}_g=0$.
\end{proof}

The following uniqueness theorem essentially says that in order to have multiple positive solutions the Lichnerowicz problem must be a linear
pure Robin boundary value problem on a conformally flat manifold.

\begin{theorem}\label{t:uniq}
Let the coefficients of the Lichnerowicz problem satisfy $a_{\tau}\geqs 0$, $a_{w}\geqs 0$,
$(e-1)b_{\theta}\geqs 0$, $b_{\tau}\geqs 0$, $b_{w}\leqs 0$, and $\phi_{D}>0$.
Let the positive functions $\theta,\phi\in W^{s,p}({M})$ be solutions of the Lichnerowicz problem,
with $\theta\neq\phi$. 
Then
$a_{\tau}=a_{w}=0$, $(e-1)b_{\theta}=b_{\tau}=b_{w}=0$, $\Sigma_{D}=\varnothing$, the ratio $\theta/\phi$ is constant,
and $\mathcal{Y}_g=0$.
\end{theorem}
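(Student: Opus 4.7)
The plan is to reduce to Lemma \ref{l:model-uniq} by a conformal rescaling of the metric using $\theta$ itself. I would set $\tilde{g}=\theta^{2\bar{q}-2}g$ and $\psi=\phi/\theta$, and extend the computations from the proof of Lemma \ref{l:conf-inv} separately to each nonlinear power appearing in $f$ and $h$. The key algebraic point is to invoke $F(\theta)=0$ to substitute for $\Delta\theta$ and $\partial_{\nu}\theta$ in the formulas for the transformed scalar and mean curvatures; $\theta$'s interior equation then yields
\[
\textstyle\frac{n-2}{4(n-1)}\tilde{R}=a_{w}\theta^{-4\bar{q}}-a_{\tau},
\]
and the analogous identity on the boundary. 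These substitutions are arranged so that after dividing through by the appropriate positive power of $\theta$, the scaled system takes the form
\begin{align*}
-\tilde{\Delta}\psi+a_{\tau}(\psi^{2\bar{q}-1}-\psi)+a_{w}\theta^{-4\bar{q}}(\psi-\psi^{-2\bar{q}-1})&=0\;\;\textrm{in }M,\\
\partial_{\tilde{\nu}}\psi+b_{\theta}\theta^{e-\bar{q}}(\psi^{e}-\psi)+b_{\tau}(\psi^{\bar{q}}-\psi)+b_{w}\theta^{-2\bar{q}}(\psi^{-\bar{q}}-\psi)&=0\;\;\textrm{on }\Sigma_{N},\\
\Tr_{D}\psi&=1\;\;\textrm{on }\Sigma_{D},
\end{align*}
with $\psi\equiv 1$ a solution by construction (this reflects $F(\theta)=0$).

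I would then test this system against $\psi-1$ via Lemma \ref{l:green}; the Dirichlet contribution drops out since $\psi-1=0$ on $\Sigma_{D}$, leaving
\[
\|\tilde{\nabla}(\psi-1)\|^{2}=-\langle\tilde{\Delta}\psi,\psi-1\rangle+\langle\partial_{\tilde{\nu}}\psi,\psi-1\rangle_{\Sigma_{N}}.
\]
Substituting from the scaled equations, every summand on the right is of the form $-\alpha\,\psi^{q}(\psi^{p-q}-1)(\psi-1)$ (times a positive power of $\theta$), and each is nonpositive under the sign hypotheses: the $a_{\tau},a_{w}$ contributions by $a_{\tau},a_{w}\geqs 0$ together with the exponent differences $2\bar{q}-2,\,2\bar{q}+2>0$; the $b_{\tau},b_{w}$ contributions by $b_{\tau}\geqs 0,\,b_{w}\leqs 0$ and exponent differences $\bar{q}-1,\,\bar{q}+1>0$; and the $b_{\theta}$ contribution because $(\psi^{e-1}-1)(\psi-1)$ carries the sign of $e-1$, matching $(e-1)b_{\theta}\geqs 0$. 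Since the left-hand side is a squared seminorm, both sides must vanish.

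The vanishing of $\tilde{\nabla}(\psi-1)$ together with connectedness of $M$ forces $\psi$ to equal a positive constant $c$. If $\Sigma_{D}\neq\emptyset$, the Dirichlet condition $\Tr_{D}\psi=1$ gives $c=1$ and hence $\theta\equiv\phi$, contradicting our hypothesis; so $\Sigma_{D}=\emptyset$ and $c\neq 1$. Inserting $\psi\equiv c$ into the now-vanishing integrands, the factor $(c^{p-q}-1)(c-1)$ is nonzero whenever $p\neq q$, which forces $a_{\tau}=a_{w}=b_{\tau}=b_{w}=0$ and $(e-1)b_{\theta}=0$. The identity $\frac{n-2}{4(n-1)}\tilde{R}=a_{w}\theta^{-4\bar{q}}-a_{\tau}$ then yields $\tilde{R}\equiv 0$, so $[g]$ contains a representative of (continuous, constant-sign) zero scalar curvature; the concluding statement of Theorem \ref{t:yclass} precludes $\mathcal{Y}_{g}$ having either strictly positive or strictly negative sign, whence $\mathcal{Y}_{g}=0$.

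The main technical hurdle is the first step, namely the derivation of the clean $(\psi^{p}-\psi^{q})$ form of the scaled system. Unlike the single-nonlinearity model treated in Lemma \ref{l:model-uniq}, the Lichnerowicz operator carries several nonlinear powers of different homogeneities in both the bulk and on $\Sigma_{N}$, and one must organize the conformal factors and the substitutions from $F(\theta)=0$ so that all linear-in-$\psi$ contributions cancel. Once this reduction is in hand, the Green's identity argument and the Yamabe conclusion proceed essentially as in the proof of Lemma \ref{l:model-uniq}.
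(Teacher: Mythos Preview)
Your proposal is correct and is exactly the extension the paper has in mind: the paper's own proof is the single line ``This is a simple extension of Lemma \ref{l:model-uniq},'' and what you have written is precisely that extension carried out in full --- conformally rescale by $\theta$, use $F(\theta)=0$ to eliminate the curvature terms, obtain the $(\psi^{p}-\psi)$ structure for each nonlinear term, test against $\psi-1$ via Lemma \ref{l:green}, and read off the conclusions. Your identification of the main technical hurdle (organizing the several inhomogeneous powers so that the linear-in-$\psi$ pieces cancel) is also on point, and your invocation of Theorem \ref{t:yclass} for the $\mathcal{Y}_g=0$ step mirrors the paper's one-line deduction ``$\tilde{R}=0$ hence $\mathcal{Y}_g=0$'' in the proof of Lemma \ref{l:model-uniq}.
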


\begin{proof}
This is a simple extension of Lemma \ref{l:model-uniq}.
\end{proof}

\section{Method of sub and supersolutions}
\label{sec:sub-sup}

Before going into existence results, we shall introduce the notion of sub- and super-solutions to the Lichnerowicz problem.
Let us write the equation \eqref{WF-HC} in the form
\begin{equation*}
F(\phi):=
\left(
\begin{array}{c}
-\Delta\phi +f(\phi)\\
\Tr_{N}\partial_{\nu}\phi +h(\phi)\\
\Tr_{D}\phi-\phi_{D}
\end{array}
\right)=0.
\end{equation*}
Then we say that a function $\psi$ is a {\em super-solution} if $F(\psi)\geqs 0$, and {\em sub-solution} if $F(\psi)\leqs 0$,
with the inequalities understood in a component-wise fashion.
The following theorem extends the standard argument used for
closed manifolds (cf.\ \cite{Isen95,dM05}) to manifolds with boundary;
note that the required sub- and super-solutions need only satisfy 
inequalities in both the interior and on the boundary.

\begin{theorem}\label{t:exist}
Suppose that the signs of the coefficients $a_{\tau}$, $a_{w}$,
$b_{\theta}$, $b_{\tau}$, $b_{w}$, and $b_{H}-\frac{n-2}2H$ are locally constant, and let $\phi_{D}>0$.
Let $\phi_{-},\phi_{+}\in W^{s,p}({M})$ be respectively sub- and super-solutions satisfying $0<\phi_{-}\leqs \phi_{+}$.
Then there exists a positive solution $\phi\in[\phi_{-},\phi_{+}]_{s,p}$ to the Lichnerowicz problem.
\end{theorem}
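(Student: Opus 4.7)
The plan is to prove existence by a monotone iteration scheme based on an order-preserving linearization map on the interval $[\phi_{-},\phi_{+}]_{s,p}$, in the spirit of the arguments used in \cite{HNT07b,dM05} for closed manifolds. First, I would choose ``shift'' coefficients $\sigma$ defined on $M$ and $\tilde\sigma$ defined on $\Sigma_{N}$, sufficiently large, such that both of the following hold. (i) The linear boundary value problem
\begin{equation*}
\begin{aligned}
-\Delta\psi+\sigma\psi &= F && \text{in } M,\\
\Tr_{N}\partial_{\nu}\psi+\tilde\sigma\,\Tr_{N}\psi &= G && \text{on }\Sigma_{N},\\
\Tr_{D}\psi &= \phi_{D} && \text{on }\Sigma_{D},
\end{aligned}
\end{equation*}
is uniquely solvable in $W^{s,p}({M})$ for every $(F,G)\in W^{s-2,p}({M})\times W^{s-1-\frac{1}{p},p}(\Sigma_{N})$, with continuous dependence on the data, and obeys a maximum principle (so $F\geqs 0$, $G\geqs 0$, $\phi_{D}>0$ imply $\psi>0$). (ii) On the order interval $[\phi_{-},\phi_{+}]$ the shifted nonlinearities $\varphi\mapsto\sigma\varphi-f(\varphi)$ and $\varphi\mapsto\tilde\sigma\varphi-h(\varphi)$ are monotone non-decreasing, interpreted in the distributional duality that is available by Corollary~\ref{c:alg}. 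The monotonicity is achievable because every power $\varphi^{p}$ appearing in $f$ and $h$ has a bounded derivative on $[\phi_{-},\phi_{+}]$ thanks to $\phi_{-}>0$, and the locally constant signs of the coefficients $a_{\tau},a_{w},b_{\theta},b_{\tau},b_{w}$, and $b_{H}-\tfrac{n-2}{2}H$ let me dominate their sign-sensitive contributions by large enough $\sigma,\tilde\sigma$.

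Invertibility of the shifted linear problem would be obtained by a Lax--Milgram / Fredholm alternative argument at the $W^{1,2}$ level for $\sigma,\tilde\sigma$ large, bootstrapped to $W^{s,p}$ via the elliptic estimates (Corollary~\ref{C:ell-est}), with the maximum principle provided by Lemma~\ref{l:max-princ}. Together these give an order-preserving solution operator $T$ sending $\varphi\in[\phi_{-},\phi_{+}]$ to the unique $\psi$ solving
\begin{equation*}
\begin{aligned}
-\Delta\psi+\sigma\psi &= \sigma\varphi-f(\varphi),\\
\Tr_{N}\partial_{\nu}\psi+\tilde\sigma\,\Tr_{N}\psi &= \tilde\sigma\,\Tr_{N}\varphi-h(\varphi),\\
\Tr_{D}\psi &= \phi_{D}.
\end{aligned}
\end{equation*}

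Next, I would use the sub- and super-solution inequalities $F(\phi_{-})\leqs 0$ and $F(\phi_{+})\geqs 0$ together with the maximum principle to show $T\phi_{-}\geqs\phi_{-}$ and $T\phi_{+}\leqs\phi_{+}$. Combined with order-preservation, this shows that $T$ maps $[\phi_{-},\phi_{+}]$ into itself, and the iterates $\phi_{0}:=\phi_{-}$, $\phi_{k+1}:=T\phi_{k}$ form a non-decreasing sequence satisfying $\phi_{-}\leqs\phi_{k}\leqs\phi_{+}$ for all $k$. Monotone convergence gives a pointwise (and $L^{q}$ for any finite $q$) limit $\phi\in[\phi_{-},\phi_{+}]$. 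Since the right-hand sides of the defining linear problem for $\phi_{k+1}$ stay uniformly bounded in $W^{s-2,p}({M})\times W^{s-1-\frac{1}{p},p}(\Sigma_{N})$ (using Corollary~\ref{c:alg} and $\phi_{k}\in[\phi_{-},\phi_{+}]_{s,p}$ with $\phi_{-},\phi_{+}\in W^{s,p}$), the elliptic estimate gives a uniform $W^{s,p}$ bound on $\phi_{k+1}$. Hence along a subsequence $\phi_{k}\rightharpoonup\phi$ weakly in $W^{s,p}$; continuity of $f$ and $h$ on $[\phi_{-},\phi_{+}]_{s,p}$ into the appropriate Sobolev spaces lets me pass to the limit to obtain $F(\phi)=0$, and $\phi\geqs\phi_{-}>0$ gives positivity.

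The main obstacle I expect is Step~1: simultaneously arranging (a) the linear shifted operator to be boundedly invertible and order-preserving, and (b) the shifted nonlinearities to be monotone on $[\phi_{-},\phi_{+}]$, given that the coefficients $a_{R},a_{\tau},a_{w}$ lie only in the distributional space $W^{s-2,p}({M})$ and the boundary coefficients in $W^{s-1-\frac{1}{p},p}(\Sigma_{N})$. The locally constant sign hypotheses on the $b$-coefficients (and on $b_{H}-\tfrac{n-2}{2}H$) are essential precisely to allow a pointwise choice of $\sigma$, $\tilde\sigma$ that is compatible both with the duality required for the monotonicity of the shifted nonlinearity and with the quadratic form underlying the maximum principle for the linearized operator; absent the locally-constant-sign assumption the required shift would not be globally definable.
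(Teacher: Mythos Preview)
Your overall strategy is close to the paper's: both set up an order-preserving map $T=L^{-1}K$ on $[\phi_-,\phi_+]_{s,p}$ via a shifted linear operator and the maximum principle. The paper then concludes with the Schauder fixed-point theorem on a ball $B_M\cap[\phi_-,\phi_+]_{s,p}$ rather than by monotone iteration, but that difference is minor; either endgame works once the key estimate below is in place.

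The genuine gap is in your uniform $W^{s,p}$ bound on the iterates. You assert that the right-hand sides $\sigma\phi_k-f(\phi_k)$ and $\tilde\sigma\,\Tr_N\phi_k-h(\phi_k)$ stay uniformly bounded in $W^{s-2,p}(M)\times W^{s-1-\frac1p,p}(\Sigma_N)$ ``using Corollary~\ref{c:alg} and $\phi_k\in[\phi_-,\phi_+]_{s,p}$''. But membership in the order interval gives only $L^\infty$ control, while Corollary~\ref{c:alg} produces an estimate of the form $\|a_\tau\phi_k^{2\bar q-1}\|_{s-2,p}\lesssim\|a_\tau\|_{s-2,p}\|\phi_k^{2\bar q-1}\|_{s,p}$, which depends on $\|\phi_k\|_{s,p}$ --- precisely the quantity you want to bound. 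The recursion $\|\phi_{k+1}\|_{s,p}\leqs C\|\phi_k\|_{s,p}+C'$ does not close. The paper's remedy is the \emph{smoothing} estimate of Lemma~\ref{l:nem}, giving $\|K(\phi)\|_Y\lesssim 1+\|\phi\|_{\tilde s,p}$ for some $\tilde s\in(\frac{n}{p},s)$; combined with interpolation $\|\phi\|_{\tilde s,p}\leqs\eps\|\phi\|_{s,p}+C_\eps\|\phi\|_p$ and the uniform $L^p$ bound from $\phi_-\leqs\phi_k\leqs\phi_+$, this yields $\|T(\phi)\|_{s,p}\leqs A\eps\|\phi\|_{s,p}+A'$, which does close for small $\eps$. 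Your iteration goes through once this is inserted; without it the argument is circular.

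A secondary point: the paper first invokes conformal invariance to arrange that $R$ and $H$ have constant sign, so that the nonlinearity can be written $\sum_i a_i(f_i\circ\phi)$ with $a_i\geqs0$ in $W^{s-2,p}$ and $f_i\in C^1$, and the shift $a=1+\sum_i a_i\max_I|f_i'|$ is then a nonnegative element of $W^{s-2,p}$ compatible with both the monotonicity of $K$ and the maximum principle for $L$. Your shifts $\sigma,\tilde\sigma$ appear to be constants; since $a_R\in W^{s-2,p}$ carries no sign hypothesis and may be a genuine distribution, a constant shift cannot ``dominate'' it, so you need this conformal reduction (or an equivalent device) as well.
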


\begin{proof}
We prove the theorem for $s\in(1,3]$, from which the general case follows easily.

Using the conformal invariance, without loss of generality we assume that the scalar curvature and the mean curvature of the boundary do not change sign.
Then one can write the Lichnerowicz problem in the form
\begin{equation*}
F(\phi)=
\left(
\begin{array}{c}
-\Delta\phi + \sum_ia_i(f_i\circ\phi)\\
\Tr_{N}\partial_{\nu}\phi + \sum_ib_i(h_i\circ\phi)\\
\Tr_{D}\phi-\phi_{D}
\end{array}
\right)=0,
\end{equation*}
where the sums are finite, $a_i,b_i\geqs 0$, and $f_i,h_i\in C^1(I)$ with $I=[\min\phi_{-},\max\phi_{+}]$.
With $a\in W^{s-2,p}({M})$ and $b\in W^{s-1-\frac1p,p}(\Sigma_{N})$, define the operator 
$$
L:W^{s,p}({M})\to Y:=W^{s-2,p}({M})\otimes W^{s-1-\frac1p,p}(\Sigma_{N})\otimes W^{s-\frac1p,p}(\Sigma_{D}),
$$
by $L:u\mapsto(-\Delta u+au,\Tr_{N}\partial_{\nu}u+b\Tr_{N}u,\Tr_{D}u)$, and define
$$
K:[\phi_{-},\phi_{+}]_{s,p}\to Y,
$$
by $K:u\mapsto(au-\sum_ia_i(f_i\circ u),bu- \sum_ib_i(h_i\circ u),\phi_{D})$.
Now the Lichnerowicz problem can be written as
$$
L\phi=K(\phi),\qquad\phi\in[\phi_{-},\phi_{+}]_{s,p}.
$$
If $a$ and $b$ are both positive (which is a sufficient condition), $L$ is bounded and invertible; cf.\ Lemma \ref{l:lapinv}.
Moreover, by choosing $a$ and $b$ sufficiently large, one can make $K$ nondecreasing in $[\phi_{-},\phi_{+}]_{s,p}$.
Namely, the choice
$$
a=1+\sum_ia_i\max_I|f_i'|,
\qquad
b=1+\sum_ib_i\max_I|h_i'|,
$$
suffices.
Since $L^{-1}$ and $K$ are both nondecreasing
(by choice of $a$ and $b$, and by maximum principle property of $L$),
the composite operator
$$
T=L^{-1}K:[\phi_{-},\phi_{+}]_{s,p}\to W^{s,p}({M}),
$$
is nondecreasing.
Using that $\phi_{+}$ is a super-solution, we have
\begin{equation*}
\phi_{+}=L^{-1}L\phi_{+}\geqs L^{-1}K(\phi_{+})=T(\phi_{+}),
\end{equation*}
and similarly, $\phi_{-}\leqs T(\phi_{-})$, hence $T:[\phi_{-},\phi_{+}]_{s,p}\to[\phi_{-},\phi_{+}]_{s,p}$.

By applying Lemma \ref{l:nem} from the Appendix, for any $\tilde{s}\in(\frac{n}p,s]$, $s-2\in[-1,1]$ and $\frac1p\in(\frac{s-1}2\delta,1-\frac{3-s}2\delta)$ with $\delta=\frac1p-\frac{\tilde{s}-1}n$, we have
\begin{multline*}
\|a\phi-\sum_ia_i(f_i\circ\phi)\|_{s-2,p}\\
\lesssim 
\|\phi\|_{\tilde{s},p}
+
\sum_{i}\|a_i\|_{s-2,p}
\left(\|\phi_{+}\|_{\infty}\max_I|f_i'|+\max_I|f_i|+\|\phi\|_{\tilde{s},p}\max_I|f_i'|\right).
\end{multline*}
Let us verify that $\frac1p$ is indeed in the prescribed range.
First, we have $\delta=\frac1n+\frac1p-\frac{\tilde{s}}n<\frac1n$ since $\frac{\tilde{s}}n-\frac1p>0$, 
and taking into account $3+s\leqs 2n$,
we infer $1-\frac{3-s}2\delta\geqs 1-\frac{3-s}{2n}=\frac{2n-3-s}{2n}+\frac{s}n>\frac1p$, confirming the upper bound for $\frac1p$.
For the other bound, we need $\frac1p>\frac{s-1}2\delta=\frac{s-1}{2p}-\frac{(s-1)(\tilde{s}-1)}{2n}$,
or in other words, $\frac{(s-1)(\tilde{s}-1)}{n}>\frac{s-3}{p}$.
Since $s\in(1,3]$, any $\tilde{s}\in(\frac{n}p,s)\cap(1,s)$ will satisfy this inequality.
In the following we fix such an $\tilde{s}$.

Repeating the above estimation for the second component of $K(\phi)$ in the appropriate norm, and combining it with the above estimate for the first component, we get
$\|K(\phi)\|_{Y}\lesssim 1+\|\phi\|_{\tilde{s},p}$,
and by the boundedness of $L^{-1}$,
there exists a constant $A>0$ such that
$$
\|T(\phi)\|_{s,p}\leqs A(1+\|\phi\|_{\tilde{s},p}),\qquad
\forall\phi \in [\phi_-,\phi_+]_{{s},p}.
$$
For any $\varepsilon>0$, the norm $\|\phi\|_{\tilde{s},p}$ can be bounded by the interpolation estimate
$$
\|\phi\|_{\tilde{s},p}\leqs \varepsilon\|\phi\|_{s,p}+C\varepsilon^{-\tilde{s}/(s-\tilde{s})}\|\phi\|_p,
$$
where $C$ is a constant independent of $\varepsilon$.
Since $\phi$ is bounded from above by $\phi_+$, $\|\phi\|_p$ is bounded uniformly,
and now demanding that $\|\phi\|_{s,p}\leqs M$, we get
\begin{equation}\label{e:Tbound}
\|T(\phi)\|_{s,p}\leqs A\left(1+M\varepsilon+C\varepsilon^{-\tilde{s}/(s-\tilde{s})}\right),
\end{equation}
with possibly different constant $C$.
Choosing $\varepsilon$ such that $2\varepsilon A=1$ and setting $M=2A(1+C\varepsilon^{-\tilde{s}/(s-\tilde{s})})$, we can ensure that
the right hand side of \eqref{e:Tbound} is bounded by $M$, meaning that with $B_M=\{u\in W^{s,p}({M}):\|u\|_{s,p}\leqs M\}$, we have
$$
T:[\phi_-,\phi_+]_{{s},p}\cap B_M\to[\phi_-,\phi_+]_{{s},p}\cap B_M.
$$
The set $U=[\phi_-,\phi_+]_{{s},p}\cap B_M$ is bounded in $W^{s,p}$, and hence compact in $W^{\tilde{s},p}$ for $\tilde{s}<s$.
We know that there is $\tilde{s}<s$ such that $T$ is continuous in the topology of $W^{\tilde{s},p}$,
so by the Schauder theorem there is a fixed point $\phi\in U$ of $T$, i.e., 
$$
T(\phi)=\phi.
$$
The proof is established.
\end{proof}

\section{Existence results for the defocusing case}
\label{sec:defocus-exist}

In this section, we prove existence results for the Lichnerowicz problem with the coefficients satisfying $a_{\tau}\geqs 0$, $a_{w}\geqs 0$,
$(e-1)b_{\theta}\geqs 0$ with $e\neq1$, $b_{\tau}\geqs 0$, and $b_{w}\leqs 0$.
Note that while we have $a_{\tau}\geqs 0$ and $a_{w}\geqs 0$ for a wide range of matter phenomena, including the vacuum case as in this paper,
there seem to be no {\em a priori} reason to restrict attention to the above mentioned signs for the $b$-coefficients.
Nevertheless, this case is where we can develop the most complete theory, which case we call the {\em defocusing case}, 
inspired by terminology from the theory of dispersive equations.
We obtain in the next subsection partial results on the existence for the {\em non-defocusing case}, which requires more delicate techniques.

We start with metrics with nonnegative Yamabe invariant.
In the following theorem, the symbol $\vee$ denotes the logical disjunction (or logical OR).

\begin{theorem}\label{t:exist+}
Let $\mathcal{Y}_g\geqs 0$.
Let the coefficients of the Lichnerowicz problem satisfy $a_{\tau}\geqs 0$, $a_{w}\geqs 0$,
$b_{H}\geqs \frac{n-2}2H$, $(e-1)b_{\theta}\geqs 0$ with $e\neq1$, $b_{\tau}\geqs 0$, $b_{w}\leqs 0$, and $\phi_{D}>0$.
Then there exists a positive solution $\phi\in W^{s,p}({M})$ of the Lichnerowicz problem if and only if one of the following conditions holds:
\begin{enumerate}[a)]
\item
$\Sigma_{D}\neq\varnothing$;
\item
$\Sigma_{D}=\varnothing$, $b_{\theta}=0$, 
	$\left(\mathcal{Y}_g>0\vee a_{\tau}\neq0\vee b_{H}\neq\frac{n-2}2H\vee b_{\tau}\neq0\right)$, and
	$\left(a_{w}\neq0\vee b_{w}\neq0\right)$;
\item
$\Sigma_{D}=\varnothing$, $b_{\theta}\neq0$, $b_{\theta}\geqs 0$, and
	$\left(a_{w}\neq0\vee b_{w}\neq0\right)$;
\item
$\Sigma_{D}=\varnothing$, $b_{\theta}\neq0$, $b_{\theta}\leqs 0$, and
	$\left(\mathcal{Y}_g>0\vee a_{\tau}\neq0\vee b_{H}\neq\frac{n-2}2H\vee b_{\tau}\neq0\right)$;
\item
$\Sigma_{D}=\varnothing$,
	$b_{\theta}=b_{\tau}=b_{w}=0$,
	$b_{H}=\frac{n-2}2H$,
	$a_{\tau}=a_{w}=0$, and
	$\mathcal{Y}_g=0$.
\end{enumerate}
\end{theorem}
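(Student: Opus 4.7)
The plan is to establish sufficiency by invoking the sub- and super-solution result (Theorem~\ref{t:exist}) with barriers constructed case by case, and to establish necessity through an integration-by-parts identity combined with the Yamabe classification (Theorem~\ref{t:yclass}). As a preliminary normalization, I use Theorem~\ref{t:subcrit} and the conformal covariance of Lemma~\ref{l:conf-inv} to replace $g$ by a conformally equivalent metric with continuous scalar curvature $R$ of constant sign (positive if $\mathcal{Y}_g>0$, zero if $\mathcal{Y}_g=0$) and continuous boundary mean curvature $H$ of any prescribed sign; this transformation preserves the sign hypotheses on $a_\tau$, $a_w$, $b_\tau$, $b_w$, $(e-1)b_\theta$, and on $b_H-\tfrac{n-2}{2}H$, so it is harmless.

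For the super-solution I first try the constant ansatz $\phi_{+}\equiv C$ with $C$ large. The interior inequality $a_R C+a_\tau C^{2\bar{q}-1}\geqs a_w C^{-2\bar{q}-1}$ holds for $C$ large whenever $\mathcal{Y}_g>0$ or $a_\tau\not\equiv0$; the Robin inequality holds for $C$ large provided at least one of $b_H-\tfrac{n-2}{2}H$, $b_\tau$, or $b_\theta$ in the $e>1$ branch is nontrivial; and the Dirichlet inequality of case~(a) is immediate once $C\geqs\sup\phi_D$. In the marginal subcases of (b)--(d) where the positive interior forcing vanishes ($\mathcal{Y}_g=0$ and $a_\tau\equiv0$) but boundary forcing is present, the constant ansatz fails in the interior, and I instead take $\phi_{+}=C+\delta v$ where $v>0$ is the unique solution of a linear mixed Robin problem of the form $(-\Delta+k)v=1$ on $M$ with Robin data chosen compatibly, whose existence and positivity follow from Lemma~\ref{l:lapinv} combined with a maximum-principle argument. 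For $\delta>0$ small and $C$ large, the additional interior source $\delta$ dominates $a_w\phi_{+}^{-2\bar{q}-1}$.

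The main obstacle is the sub-solution, for which I again attempt a constant ansatz $\phi_{-}\equiv\eps$ with $\eps>0$ small. This is a sub-solution whenever at least one of the divergent contributions $-a_w\eps^{-2\bar{q}-1}$, $b_w\eps^{-\bar{q}}$, or $b_\theta\eps^{e}$ with $e<1$ and $b_\theta\leqs0$ is nontrivial, since each of these tends to $-\infty$ as $\eps\downarrow 0$ while every other contribution remains bounded. This supplies $\phi_{-}$ in cases~(b), (c), (d), and in case~(a) after the further requirement $\eps\leqs\inf\phi_D$. Case~(e) is degenerate: every coefficient vanishes, so after normalizing to $R\equiv0$ and $H\equiv0$ the problem reduces to $-\Delta\phi=0$ with pure Neumann data, and any positive constant is a solution, so one simply sets $\phi_{-}=\phi_{+}\equiv1$. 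Where the plain constant sub-solution just fails because the only negative forcing is confined to the boundary (or only to the interior), a mirror perturbation $\phi_{-}=\eps-\delta v$ of the super-solution construction of the preceding paragraph handles the remaining Yamabe-null marginal configurations. Once $0<\phi_{-}\leqs\phi_{+}$ is in hand, Theorem~\ref{t:exist} yields a positive solution in $[\phi_{-},\phi_{+}]_{s,p}$.

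For necessity, assume $\Sigma_D=\varnothing$ (otherwise we are already in case~(a)), and let $\phi>0$ solve the Lichnerowicz problem. Testing the equation against $\phi$ via Lemma~\ref{l:green} and inserting the Robin condition yields
\begin{equation*}
\|\nabla\phi\|^{2}+\langle a_R,\phi^{2}\rangle+\langle a_\tau,\phi^{2\bar{q}}\rangle+\langle b_H,\phi^{2}\rangle_{N}+\langle b_\theta,\phi^{e+1}\rangle_{N}+\langle b_\tau,\phi^{\bar{q}+1}\rangle_{N}=\langle a_w,\phi^{-2\bar{q}}\rangle-\langle b_w,\phi^{1-\bar{q}}\rangle_{N}.
\end{equation*}
Under the sign hypotheses the right-hand side is non-negative; on the left, splitting $b_H=(b_H-\tfrac{n-2}{2}H)+\tfrac{n-2}{2}H$ and combining the intrinsic piece $\tfrac{n-2}{2}\langle H,\phi^{2}\rangle_{N}+\langle a_R,\phi^{2}\rangle$ with $\|\nabla\phi\|^{2}$ gives, by Theorem~\ref{t:yclass}, a quantity of sign determined by $\mathcal{Y}_g$, while all remaining terms are non-negative. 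If every coefficient on the right vanishes and additionally no negative-$b_\theta$ branch is present, the identity forces every non-negative contribution on the left to vanish; combined with Theorem~\ref{t:yclass} this pinpoints case~(e). Conversely, if every strict positive-forcing coefficient on the left vanishes while $\mathcal{Y}_g=0$, the left-hand side is zero and the right-hand side must also vanish, eliminating the configurations excluded by (b)--(d). Enumerating these balances across all sub-cases delivers the necessity of (a)--(e).
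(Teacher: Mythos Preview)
Your barrier constructions have a genuine pointwise gap. Take case (b) with $\mathcal{Y}_g>0$, $a_w\not\equiv0$, and all other $a,b$ coefficients zero. After your normalization $a_R>0$ everywhere. Your sub-solution candidate $\phi_-\equiv\eps$ must satisfy $a_R\eps-a_w\eps^{-2\bar q-1}\leqs0$ at every point, but at any $x$ where $a_w(x)=0$ (nothing forbids this) the left side is $a_R(x)\eps>0$. The ``mirror perturbation'' $\eps-\delta v$ does not rescue this: for small $\delta$ the interior operator applied to $\eps-\delta v$ is still $a_R\eps+O(\delta)$ at such points. The same failure hits the super-solution in the Yamabe-null subcases of (c) whenever $a_w$ is nontrivial somewhere, because the constant gives interior value $-a_wC^{-2\bar q-1}<0$. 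The paper sidesteps all of this by taking $\phi_\pm=\beta v$ where $v>0$ solves the single linear problem
\[
-\Delta v+(a_R+a_\tau)v=a_w,\quad
\Tr_N\partial_\nu v+(b_H+b_\tau)v=-b_w-b_\theta,\quad
\Tr_D v=\phi_D,
\]
(respectively with $b_\theta$ moved into the zeroth-order boundary coefficient when $b_\theta\geqs0$). The strong maximum principle gives $v>0$ \emph{everywhere} as soon as any right-hand side is nontrivial, and then the identity for $-\Delta(\beta v)+f(\beta v)$ factors so that large $\beta$ gives a super-solution and small $\beta$ a sub-solution, uniformly across $M$.

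Your necessity argument also has a gap. Testing against $\phi$ leaves the block $\|\nabla\phi\|^2+\langle a_R,\phi^2\rangle+\tfrac{n-2}{2}\langle H,\phi^2\rangle_\Sigma=E(\phi)$ on the left. You then claim that when $\mathcal{Y}_g=0$ and the remaining left-hand coefficients vanish the left side is zero; but $\mathcal{Y}_g=0$ only gives $E(\phi)\geqs0$, not $E(\phi)=0$, so you cannot force the right side to vanish and hence cannot establish the first disjunction in (b) or the condition in (d). The paper instead normalizes to $a_R\geqs0$, $H=0$ and tests against the constant function $1$: the Laplacian drops out, every surviving term is an integral of a sign-definite coefficient against a positive power of $\phi$, and the balance $\int a_R\phi+a_\tau\phi^{2\bar q-1}+\int_\Sigma b_H\phi+b_\tau\phi^{\bar q}+b_\theta\phi^e=\int a_w\phi^{-2\bar q-1}+\int_\Sigma|b_w|\phi^{-\bar q}$ directly yields the dichotomies (b)--(e).
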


\begin{proof}
For the ``only if'' part, it suffices to prove that when the Lichnerowicz problem has a solution with $\Sigma_{D}=\varnothing$,
then one of the conditions b)-e) must be satisfied.
Let us first consider the case $b_{\theta}\geqs 0$.
By Theorem \ref{t:yclass}, one can conformally transform the metric to a metric with nonnegative scalar curvature and zero boundary mean curvature.
So by conformal invariance of the Lichnerowicz problem, without loss of generality we can assume that $a_{R}=\frac{n-2}{4(n-1)}R\geqs 0$ 
and $b_{H}\geqs \frac{n-2}2H=0$, where $H$ is the boundary mean curvature (Note that the condition $b_{H}\geqs \frac{n-2}2H$ is conformally invariant).
We have, for $\phi\in W^{s,p}(M)$ and $\varphi\in W^{2-s,p'}(M)$
\begin{equation*}
\langle\Delta\phi,\varphi\rangle=-\langle\nabla\phi,\nabla\varphi\rangle+\langle\partial_\nu\phi,\varphi\rangle_{\Sigma}.
\end{equation*}
Applying this with $\phi$ a solution of the Lichnerowicz problem and $\varphi\equiv1$,
we get
\begin{equation*}
\int_M a_{R}\phi+a_{\tau}\phi^{2\bar{q}-1}-a_{w}\phi^{-2\bar{q}-1}
=
- \int_{\Sigma} (b_{H}\phi+b_{\tau}\phi^{\bar{q}}+b_{\theta}\phi^{e}+b_{w}\phi^{\bar{q}}),
\end{equation*}
or, rearranging the terms,
\begin{equation*}
\int_M a_{R}\phi+a_{\tau}\phi^{2\bar{q}-1}
+
\int_{\Sigma} b_{H}\phi+b_{\tau}\phi^{\bar{q}}+b_{\theta}\phi^{e}
=
\int_M a_{w}\phi^{-2\bar{q}-1}
+ 
\int_{\Sigma} |b_{w}|\phi^{\bar{q}}.
\end{equation*}
Both sides of the equality are nonnegative, and so any one term being nonzero will force at least one term in the other side of the inequality to be nonzero.
This reasoning leads to the conditions b), c) and e), and the remaining condition is from the analogous consideration of the case $b_{\theta}\leqs 0$.

Now we shall prove the ``if'' part of the theorem.
If $\mathcal{Y}_g>0$, we assume that $a_{R}=\frac{n-2}{4(n-1)}R>0$ and $b_{H}\geqs \frac{n-2}2H>0$ on $\Sigma_{N}$.
On the other hand if $\mathcal{Y}_g=0$, we assume that $a_{R}=\frac{n-2}{4(n-1)}R=0$ and $b_{H}\geqs \frac{n-2}2H=0$ on $\Sigma_{N}$.
We use Theorem \ref{t:exist}, which concludes the proof upon constructing sub- and super-solutions.

We first consider the case $b_{\theta}\leqs 0$ and so $e<1$.
Let  $v\in W^{s,p}({M})$ be the solution to
\begin{equation}\label{e:rhowcurv}
\begin{split}
-\Delta v + (a_{R} + a_{\tau}) v &=  a_{w},\\
\Tr_{N}\partial_{\nu}v +(b_{H}+b_{\tau})v &= - b_{w} - b_{\theta},\\
\Tr_{D}v &=\phi_{D}.
\end{split}
\end{equation}
We have $a_{R} + a_{\tau}\geqs 0$ and $b_{H}+b_{\tau}\geqs 0$.
The solution exists and is unique when at least one of $a_{R} + a_{\tau}\neq0$, $b_{H}+b_{\tau}\neq0$, and $\Sigma_{D}\neq\varnothing$ holds as in condition a), b) or d), or all the coefficients vanish as in e).
Since the right hand sides of \eqref{e:rhowcurv} are nonnegative,
from the weak maximum principle Lemma \ref{l:max-princ}(a) we have $v\geqs 0$,
and since one of $a_{w}\neq0$, $b_{w} + b_{\theta}\neq0$, or $\Sigma_{D}\neq\varnothing$ holds by hypothesis, from the strong maximum principle Lemma \ref{l:max-princ}(b) we have $v>0$.
We also have $v\in W^{s,p}\hookrightarrow C^0$.
Let us define $\phi= \beta v$ for a constant $\beta > 0$ to be chosen later. Then we have
\begin{equation*}
\begin{split}
-\Delta \phi +  f(\phi)
&=
-\Delta\phi 
+   a_{R}\phi 
+ a_{\tau}\phi^{2\bar{q}-1} 
- a_{w}\phi^{-2\bar{q}-1}\\
&=
a_{\tau} (\beta^{2\bar{q}-1} v^{2\bar{q}-1} - \beta v)
+ a_{w} (\beta - \beta^{-2\bar{q}-1} v^{-2\bar{q}-1}),
\end{split}
\end{equation*}
and
\begin{equation*}
\begin{split}
\Tr_{N}\partial_{\nu} \phi +  g(\phi)
&=
\Tr_{N}\partial_{\nu} \phi 
+ b_{H}\phi 
+ b_{\tau}\phi^{\bar{q}} 
+ b_{\theta}\phi^{e}
+ b_{w}\phi^{-\bar{q}}\\
&=
b_{\tau} (\beta^{\bar{q}} v^{\bar{q}} - \beta v)
- b_{w} (\beta-\beta^{-\bar{q}}v^{-\bar{q}})
- b_{\theta} (\beta-\beta^{e} v^{e}).
\end{split}
\end{equation*}
Now, choosing $\beta>0$ sufficiently large or sufficiently small, we can ensure that $\phi$ is respectively a super- or sub-solution.

In case $b_{\theta}\geqs 0$, replacing the second equation in \eqref{e:rhowcurv} by 
$$
\Tr_{N}\partial_{\nu}v +(b_{H}+b_{\tau}+b_{\theta})v = - b_{w},
$$
the proof proceeds in the same fashion.
\end{proof}

The next theorem treats metrics with negative Yamabe invariant, 
and reduces the Lichnerowicz problem into a prescribed scalar curvature problem.

\begin{theorem}\label{t:exist-}
Let $\mathcal{Y}_g<0$.
Let the coefficients of the Lichnerowicz problem satisfy $a_{\tau}\geqs 0$, $a_{w}\geqs 0$,
$b_{H}\leqs \frac{n-2}2H$, $(e-1)b_{\theta}\geqs 0$ with $e\neq1$, $b_{\tau}\geqs 0$, $b_{w}\leqs 0$, and $\phi_{D}>0$.
Then there exists a positive solution $\phi\in W^{s,p}({M})$ of the Lichnerowicz problem if and only if 
there exists a positive solution $u\in W^{s,p}({M})$ to the following problem
\begin{equation}\label{e:taucurv}
\begin{split}
-\Delta u + a_{R} u + a_{\tau} u^{2\bar{q}-1} &=  0,\\
\Tr_{N}\partial_{\nu} u +b_{H} u + b_{\tau} u^{\bar{q}} + b_{\theta}^{+} u^e &= 0,\\
\Tr_{D} u &=1,
\end{split}
\end{equation}
where $b_{\theta}^{+} = \max\{0,b_{\theta}\}$.
\end{theorem}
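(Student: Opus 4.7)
The plan is to invoke the method of sub- and super-solutions (Theorem \ref{t:exist}) in both directions. As a preliminary observation, \eqref{e:taucurv} is itself a Lichnerowicz problem---with $a_w=b_w=0$, $\phi_D\equiv 1$, and $b_\theta$ replaced by $b_\theta^+\geqs 0$---so Theorem \ref{t:exist} applies to it: the condition $(e-1)b_\theta^+\geqs 0$ holds automatically (when $e>1$, $b_\theta\geqs 0$ gives $b_\theta^+=b_\theta$; when $e<1$, $b_\theta\leqs 0$ gives $b_\theta^+=0$), and all other sign hypotheses are inherited from those on the original coefficients. Since $\mathcal{Y}_g<0$, by Theorem \ref{t:yclass} and Lemma \ref{l:conf-inv} we may conformally rescale so that both the new scalar curvature and the new boundary mean curvature are strictly negative; the inequality $b_H\leqs\frac{n-2}{2}H$ is preserved because the difference $b_H-\frac{n-2}{2}H$ is the coefficient of a linear-in-$\phi$ boundary term and rescales by the positive factor $\theta^{1-\bar{q}}$ under Lemma \ref{l:conf-inv}. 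After rescaling we may therefore assume $b_H<0$ throughout $\Sigma_N$.

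For the implication $(\Rightarrow)$, given a positive Lichnerowicz solution $\phi$, I propose $\phi_+:=K\phi$ and $\phi_-:=\epsilon$ as super- and sub-solutions of \eqref{e:taucurv}. The interior super-solution identity
\[
-\Delta(K\phi)+a_{R}(K\phi)+a_\tau(K\phi)^{2\bar{q}-1}
=a_\tau\phi^{2\bar{q}-1}(K^{2\bar{q}-1}-K)+Ka_w\phi^{-2\bar{q}-1}
\]
follows from the Lichnerowicz equation and is manifestly $\geqs 0$ for $K\geqs 1$. The analogous boundary identity
\[
\partial_\nu(K\phi)+b_H(K\phi)+b_\tau(K\phi)^{\bar{q}}+b_\theta^+(K\phi)^e
=b_\tau\phi^{\bar{q}}(K^{\bar{q}}-K)+(K^e b_\theta^+-Kb_\theta)\phi^e-Kb_w\phi^{-\bar{q}}
\]
consists of three nonnegative summands: the last by $b_w\leqs 0$, and the middle one by a quick case split---if $b_\theta\geqs 0$ then $b_\theta^+=b_\theta$ and $e>1$ give $K^e\geqs K$; if $b_\theta\leqs 0$ then $b_\theta^+=0$ leaves $-Kb_\theta\geqs 0$. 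The Dirichlet bound $K\phi_D\geqs 1$ holds for $K\geqs (\min_{\Sigma_D}\phi_D)^{-1}$. For the sub-solution $\epsilon$, the strictly negative linear terms $a_{R}\epsilon$ and $b_H\epsilon$ dominate the higher-order terms $a_\tau\epsilon^{2\bar{q}-1}$, $b_\tau\epsilon^{\bar{q}}$, and $b_\theta^+\epsilon^e$ for $\epsilon$ small (using $\bar{q}>1$ together with either $e>1$ or $b_\theta^+=0$), while $\epsilon\leqs 1$ and $\epsilon\leqs K\phi$ are immediate. Theorem \ref{t:exist} then produces $u\in[\epsilon,K\phi]_{s,p}$ solving \eqref{e:taucurv}.

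The implication $(\Leftarrow)$ is parallel. Given a positive $u$ solving \eqref{e:taucurv}, I take $\phi_+:=Ku$ and $\phi_-:=\epsilon u$. Using the defining equations for $u$, the interior and boundary super-solution checks yield respectively
\[
a_\tau u^{2\bar{q}-1}(K^{2\bar{q}-1}-K)-K^{-2\bar{q}-1}a_w u^{-2\bar{q}-1}
\quad\textrm{and}\quad
b_\tau u^{\bar{q}}(K^{\bar{q}}-K)+(K^e b_\theta-Kb_\theta^+)u^e+K^{-\bar{q}}b_w u^{-\bar{q}}.
\]
Provided $a_\tau$ and $b_\tau$ are not identically zero, the leading $K^{2\bar{q}-1}$ and $K^{\bar{q}}$ contributions dominate for $K$ large, since the other terms decay or grow only like $K^e$ with $e<\bar{q}$ (automatic: $e\leqs 1$ when $b_\theta\leqs 0$, while for $b_\theta>0$ the factor $(K^e-K)b_\theta$ is itself nonnegative). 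A symmetric argument with small $\epsilon<1$ provides the sub-solution, and Theorem \ref{t:exist} concludes. The delicate point is the degenerate case in which $a_\tau$ or $b_\tau$ vanishes identically on some component while an indefinite coefficient ($a_w$, $b_w$, or $b_\theta$ when $b_\theta\leqs 0$) does not; here one replaces $Ku$ by $Ku+Cv$, where $v>0$ solves an auxiliary linear elliptic problem chosen to absorb the indefinite contributions and $C$ is taken sufficiently large to restore the super-solution inequality.
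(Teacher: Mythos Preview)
Your overall strategy matches the paper's---sub/super barriers fed into Theorem~\ref{t:exist}---and the super-solution $K\phi$ in $(\Rightarrow)$ together with the sub-solution $\epsilon u$ in $(\Leftarrow)$ are correct and coincide with the paper's choices. But two of your barrier constructions do not close.

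In $(\Rightarrow)$, the constant sub-solution $\phi_-=\epsilon$ requires $a_R+a_\tau\epsilon^{2\bar q-2}\leqs0$ and the analogous boundary inequality. After the conformal rescaling, $a_R$ is continuous and strictly negative, but $a_\tau\in W^{s-2,p}(M)$ (unchanged under rescaling) need not be bounded: for instance with $s=1$, $p>n=3$ the space $W^{-1,p}$ contains functions like $|x|^{-1}$ near a point, and then $a_R+a_\tau\epsilon^{2\bar q-2}$ is not $\leqs0$ for any $\epsilon>0$. The same obstruction applies to $b_\tau,b_\theta^+\in W^{s-1-1/p,p}(\Sigma_N)$. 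The paper instead builds the sub-solution as $\beta v_\varepsilon$, where $v_\varepsilon\in W^{s,p}$ solves the \emph{linear} problem $-\Delta v_\varepsilon-a_Rv_\varepsilon=-a_R-a_\tau\varepsilon$ (with analogous boundary data) and $v_0\equiv1$; this absorbs the rough coefficient $a_\tau$ into the continuous solution $v_\varepsilon$, after which the sign check for small $\beta$ involves only continuous quantities.

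In $(\Leftarrow)$, the super-solution $Ku$ produces the interior expression $a_\tau u^{2\bar q-1}(K^{2\bar q-1}-K)-K^{-2\bar q-1}a_wu^{-2\bar q-1}$; for this to be $\geqs0$ you need a \emph{pointwise} bound $a_\tau u^{4\bar q}\gtrsim a_w$, not merely $a_\tau\not\equiv0$. If $a_\tau$ vanishes anywhere that $a_w>0$ (and likewise $b_\tau$ versus $b_w$ or $b_\theta^-$ on the boundary), no $K$ works. Your proposed remedy $Ku+Cv$ is not a proof: you do not say which linear problem $v$ solves, and it is not evident why the nonlinear super-solution inequalities would hold for $Ku+Cv$. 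The paper's device is to conformally rescale by $u$ itself, so that the scalar-curvature term in the new Lichnerowicz problem becomes $-a_\tau\phi$; one then solves the linear problem $-\Delta v+a_\tau v=a_w$, $\partial_\nu v+(b_\tau+b_\theta^+)v=-b_w$, $v|_{\Sigma_D}=\phi_D$ (invertible by Lemma~\ref{l:lapinv} since $\mathcal Y_g<0$ forces $a_\tau\not\equiv0$), and checks directly that $\beta v$ is a super-solution for $\beta$ large. In that computation the $a_w$ and $b_w$ terms are paired with themselves rather than with $a_\tau$ or $b_\tau$, which is why no pointwise lower bound on $a_\tau$ is needed.
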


\begin{proof}
For the ``only if'' part, we will show that if $\phi\in W^{s,p}({M})$ solves the Lichnerowicz problem, then the equation \eqref{e:taucurv} has a solution.
We will assume that $b_{\theta}\geqs 0$,
and point out that all the subsequent arguments can be easily modified to handle the case $b_{\theta}\leqs 0$.
Noting that \eqref{e:taucurv} is just a modified Lichnerowicz problem with $a_w=0$, $b_{w}=0$, and $\phi_{D}\equiv1$,
we will establish the existence of $u$ with the help of Theorem \ref{t:exist} by constructing sub- and super-solutions.
Let $\phi>0$ be a solution to the (unmodified) Lichnerowicz problem.
Then, since $a_{w}\geqs 0$ and $b_{w}\leqs 0$, we have
\begin{equation*}
\begin{split}
-\Delta \phi + a_{R} \phi + a_{\tau} \phi^{2\bar{q}-1} &\geqs  0,\\
\Tr_{N}\partial_{\nu} \phi +b_{H} \phi + b_{\tau} \phi^{\bar{q}} + b_{\theta} \phi^e &\geqs 0,\\
\Tr_{D} \phi &\geqs \min\phi_{D},
\end{split}
\end{equation*}
which means that with $\beta>0$ sufficiently large, $\beta\phi$ is a super-solution to \eqref{e:taucurv}.

For the sub-solution, let us make a conformal change such that both the scalar curvature and the boundary mean curvature are strictly negative.
In other words, we have $a_{R}<0$ and $b_{H}<0$.
With $\varepsilon\in\R$, let $v_{\varepsilon}\in W^{s,p}({M})$ be the solution to
\begin{equation*}
\begin{split}
-\Delta v_{\varepsilon} - a_{R} v_{\varepsilon} &= -a_{R}-a_{\tau}\varepsilon,\\
\Tr_{N}\partial_{\nu} v_{\varepsilon} - b_{H} v_{\varepsilon} &= -b_{H} - (b_{\tau} + b_{\theta})\varepsilon,\\
\Tr_{D} \phi &= 1.
\end{split}
\end{equation*}
We have $v_{\varepsilon}\equiv1$ for $\varepsilon=0$, and we have $v_{\varepsilon}\in W^{s,p}\hookrightarrow L^{\infty}$,
so as ${\varepsilon}$ goes to $0$, $v_{\varepsilon}$ tends to $1$ uniformly.
Let us fix ${\varepsilon}>0$ such that $v_{\varepsilon}\geqs \frac12$.
By taking $\psi= \beta v_{\varepsilon}$ with a constant $\beta > 0$, it holds that
\begin{multline*}
-\Delta \psi + a_{R} \psi+a_{\tau} \psi^{2\bar{q}-1}
=
\beta a_{R}(2v_{\varepsilon} - 1)
+ a_{\tau}(\beta^{2\bar{q}-1} v_{\varepsilon}^{2\bar{q}-1} - \beta\varepsilon),\qquad\textrm{and}\\
\Tr_{N}\partial_{\nu} \psi +b_{H} \psi + b_{\tau} \psi^{\bar{q}} + b_{\theta} \psi^e 
=
\beta b_{H} (2v_{\varepsilon}-1)
+ \beta b_{\tau} (\beta^{\bar{q}} v_{\varepsilon}^{\bar{q}}-\beta \varepsilon)
+ \beta b_{\theta} (\beta^e v_{\varepsilon}^e-\beta \varepsilon).
\end{multline*}
Hence $\psi$ is a sub-solution to \eqref{e:taucurv} for $\beta>0$ sufficiently small.

Now we will prove the ``if'' part of the theorem.
Let $u\in W^{s,p}({M})$ be a positive solution of \eqref{e:taucurv}.
Then one can easily see that with $\beta>0$ sufficiently small, $\beta u$ is a sub-solution to the Lichnerowicz problem.
If $a_{w}=0$ and $b_{w}=0$, then taking $\beta>0$ sufficiently large one can ensure that $\beta u$ is a super-solution.
To construct a super-solution for the case $a_{w}\neq0$ or $b_{w}\neq0$, let us make the conformal transformation $g\mapsto u^{2\bar{q}-2}g$.
Note that the scaled metric has the scalar curvature $(-a_\tau)$, and since $\mathcal{Y}_g<0$, we have $a_{\tau}\neq0$.
With respect to this scaled metric, and all the coefficients being properly scaled, the Lichnerowicz problem reads
\begin{equation*}
\begin{split}
-\Delta \phi - a_{\tau} \phi + a_{\tau} \phi^{2\bar{q}-1} - a_{w} \phi^{-2\bar{q}-1} &=  0,\\
\Tr_{N}\partial_{\nu} \phi - (b_{\tau} + b_{\theta}) \phi + b_{\tau} \phi^{\bar{q}} + b_{\theta} \phi^e + b_{w} \phi^{-\bar{q}} &= 0,\\
\Tr_{D} \phi &= \phi_{D}.
\end{split}
\end{equation*}
Let $v\in W^{s,p}({M})$ be the solution to
\begin{equation*}
\begin{split}
-\Delta v + a_{\tau} v &= a_{w},\\
\Tr_{N}\partial_{\nu} v + (b_{\tau} + b_{\theta}) v &= -b_{w},\\
\Tr_{D} v &= \phi_{D}.
\end{split}
\end{equation*}
The conditions $a_{\tau}\neq0$, and all the coefficients being nonnegative, assure that the equation has a unique nonnegative solution,
and since at least one of $a_{w}\neq0$ and $b_{w}\neq0$ holds, we have $v>0$.
Now one can show that $\phi= \beta v$ is a super-solution for sufficiently large $\beta > 0$.
\end{proof}

As we are not aware of any results on the prescribed scalar curvature problem in the above theorem whose solvability is equivalent to that of the Lichnerowicz problem in the negative Yamabe case,
we verify its solvability for a simple case where the functions $a_{\tau}$ and $b_{\tau}+b_{\theta}$ are bounded below by a positive constant.

\begin{lemma}
Let $\mathcal{Y}_g<0$.
Let the coefficients of the Lichnerowicz problem satisfy $a_{\tau}\geqs 0$,
$b_{H}\leqs \frac{n-2}2H$, $b_{\theta}\geqs 0$ with $e>1$, $b_{\tau}\geqs 0$, and $\phi_{D}>0$.
Moreover, assume that there is a constant $c>0$ such that $a_{\tau}\geqs c$, and $b_{\tau}+b_{\theta}\geqs c$ pointwise almost everywhere.
Then there exists a positive solution $u\in W^{s,p}({M})$ to the following problem
\begin{equation}\label{e:taucurv-l}
\begin{split}
-\Delta u + a_{R} u + a_{\tau} u^{2\bar{q}-1} &=  0,\\
\Tr_{N}\partial_{\nu} u +b_{H} u + b_{\tau} u^{\bar{q}} + b_{\theta}u^e &= 0,\\
\Tr_{D} u &=1.
\end{split}
\end{equation}
\end{lemma}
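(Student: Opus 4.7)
The plan is to apply Theorem \ref{t:exist} to the problem \eqref{e:taucurv-l}; since the hypothesized sign structure already matches the one required by that theorem (taking $a_{w}=0$ and $b_{w}=0$), the only task is to produce an ordered pair of positive sub- and super-solutions in $W^{s,p}(M)$. These will be built out of the same linear auxiliary function used in the second half of the proof of Theorem \ref{t:exist-}, now exploiting the strengthened lower bounds $a_{\tau}\geqs c$ and $b_{\tau}+b_{\theta}\geqs c$. To prepare, I would use Lemma \ref{l:conf-inv} together with Theorem \ref{t:yclass} to replace $g$ by a metric in $[g]$ whose scalar curvature is a continuous function bounded above by a strictly negative constant $-c_{R}<0$ and whose boundary mean curvature vanishes. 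The sign condition $b_{H}-\frac{n-2}{2}H\leqs 0$ is conformally invariant and becomes $b_{H}\leqs 0$ in the rescaled equation; the lower bounds $a_{\tau}\geqs c$ and $b_{\tau}+b_{\theta}\geqs c$ persist with a possibly smaller constant $c>0$, because the conformal factor is continuous and bounded away from $0$ on the compact manifold $M$.

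Next, for $\varepsilon\in\R$ I would let $v_\varepsilon\in W^{s,p}(M)$ be the unique solution of the linear problem
\begin{equation*}
\begin{split}
-\Delta v_\varepsilon - a_{R} v_\varepsilon &= -a_{R} - a_{\tau}\varepsilon,\\
\Tr_{N}\partial_{\nu}v_\varepsilon - b_{H} v_\varepsilon &= -b_{H} - (b_{\tau}+b_{\theta})\varepsilon,\\
\Tr_{D}v_\varepsilon &= 1,
\end{split}
\end{equation*}
whose solvability follows from $-a_{R}\geqs c_{R}>0$ and $-b_{H}\geqs 0$ via Lemma \ref{l:lapinv}. Plainly $v_{0}\equiv 1$, and continuous dependence on the right-hand side yields $v_\varepsilon\to 1$ in $W^{s,p}(M)\hookrightarrow C^{0}(M)$ as $\varepsilon\to 0$; fix $\varepsilon>0$ small enough that $\tfrac12\leqs v_\varepsilon\leqs\tfrac32$ on $M$. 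Setting $u_\pm=\beta_\pm v_\varepsilon$ and using the defining equations for $v_\varepsilon$ to eliminate $-\Delta v_\varepsilon$ and $\Tr_{N}\partial_{\nu}v_\varepsilon$, one computes
\begin{equation*}
\begin{split}
-\Delta u_\pm + a_{R} u_\pm + a_{\tau}u_\pm^{2\bar{q}-1}
&= \beta_\pm a_{R}(2v_\varepsilon-1) + a_{\tau}\beta_\pm\bigl(\beta_\pm^{2\bar{q}-2}v_\varepsilon^{2\bar{q}-1}-\varepsilon\bigr),\\
\Tr_{N}\partial_{\nu}u_\pm + b_{H}u_\pm + b_{\tau}u_\pm^{\bar{q}} + b_{\theta}u_\pm^{e}
&= \beta_\pm b_{H}(2v_\varepsilon-1) + b_{\tau}\beta_\pm\bigl(\beta_\pm^{\bar{q}-1}v_\varepsilon^{\bar{q}}-\varepsilon\bigr)\\
&\quad + b_{\theta}\beta_\pm\bigl(\beta_\pm^{e-1}v_\varepsilon^{e}-\varepsilon\bigr).
\end{split}
\end{equation*}

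For the sub-solution I would pick $\beta_{-}\in(0,1]$ so small that each of the three parenthesized factors above is a nonpositive continuous function on $M$, which is possible because $2\bar{q}-1,\bar{q},e>1$. Combined with $a_{R}(2v_\varepsilon-1)\leqs 0$, $b_{H}(2v_\varepsilon-1)\leqs 0$, and $a_{\tau},b_{\tau},b_{\theta}\geqs 0$, every term is a nonpositive distribution, so $u_-$ is a sub-solution. For the super-solution I would take $\beta_{+}\geqs 1$ large; the same parenthesized factors then become continuous functions bounded below by quantities growing like $\beta_{+}^{2\bar{q}-2}$, $\beta_{+}^{\bar{q}-1}$, and $\beta_{+}^{e-1}$, and the pointwise bounds $a_{\tau}\geqs c$ and $b_{\tau}+b_{\theta}\geqs c$ upgrade these to the distributional estimates $a_{\tau}\beta_{+}(\beta_{+}^{2\bar{q}-2}v_\varepsilon^{2\bar{q}-1}-\varepsilon)\geqs c\beta_{+}(\beta_{+}^{2\bar{q}-2}v_\varepsilon^{2\bar{q}-1}-\varepsilon)$, and analogously on the boundary via the elementary inequality $b_{\tau}p_{1}+b_{\theta}p_{2}\geqs (b_{\tau}+b_{\theta})\min(p_{1},p_{2})$ valid whenever $b_{\tau},b_{\theta},p_{1},p_{2}\geqs 0$. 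For $\beta_{+}$ sufficiently large these continuous lower bounds dominate the bounded quantities $\beta_{+}a_{R}(2v_\varepsilon-1)$ and $\beta_{+}b_{H}(2v_\varepsilon-1)$, so $u_{+}$ is a super-solution. The Dirichlet components reduce to $\beta_\pm-1$, correctly signed when $\beta_{-}\leqs 1\leqs\beta_{+}$. Arranging $\beta_{-}\leqs \beta_{+}$, Theorem \ref{t:exist} then produces a positive solution $u\in W^{s,p}(M)$ with $u_{-}\leqs u\leqs u_{+}$. The main technical subtlety is the passage from pointwise sign control of the continuous factors built from $v_\varepsilon$ and powers of $\beta_\pm$ to \emph{distributional} inequalities when those factors are multiplied by the coefficients $a_{\tau},b_{\tau},b_{\theta}$, which lie only in $W^{s-2,p}$ or $W^{s-1-\frac{1}{p},p}$; this is precisely why the hypothesis provides a uniform positive lower bound $c$, rather than mere nonnegativity.
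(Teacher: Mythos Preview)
Your argument is correct and follows the same overall template as the paper (conformal normalization, then explicit sub- and super-solutions, then Theorem~\ref{t:exist}), but the two barrier constructions differ. The paper normalizes so that both the scalar curvature and the boundary mean curvature are continuous and strictly negative, and then takes as \emph{super}-solution simply a large constant $u\equiv C$: with $|a_R|$ and $|b_H|$ bounded and $a_\tau\geqs c$, $b_\tau+b_\theta\geqs c$, the nonlinear terms dominate immediately. For the \emph{sub}-solution the paper uses a different auxiliary function, namely the positive solution $v$ of $-\Delta v+(a_\tau-a_R)v=-a_R$, $\Tr_N\partial_\nu v+(b_\tau+b_\theta-b_H)v=-b_H$, $\Tr_D v=1$, and sets $u=\beta(1+v)$ for small $\beta$ (the technique from \cite{dM05}). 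By contrast, you recycle the single perturbation family $v_\varepsilon$ from the proof of Theorem~\ref{t:exist-} and scale it up and down to obtain both barriers; this is more economical, while the paper's constant super-solution is the simpler of the two on that side. Both routes rely in the same essential way on the uniform lower bounds $a_\tau\geqs c$ and $b_\tau+b_\theta\geqs c$ to force the nonlinear terms to dominate for large $\beta$.

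One small point of care: when you call $\beta_+ b_H(2v_\varepsilon-1)$ a ``bounded quantity'', your normalization $H=0$ only yields $b_H\leqs0$, not a pointwise lower bound. The paper's version of this step asserts $b_H\in C(\Sigma_N)$ after its conformal change, which is what makes the constant super-solution work; you may want to adopt the paper's normalization (mean curvature continuous and strictly negative) rather than $H=0$ so that the same continuity claim is available to you.
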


\begin{proof}
Let us make a conformal change such that both the scalar curvature and the boundary mean curvature are continuous and strictly negative.
In other words, we have $a_{R}\in C(M)$, $b_{H}\in C(\Sigma_{N})$, $a_{R}<0$, and $b_{H}<0$.
Then, since $|a_{R}|$ and $|b_{H}|$ are bounded above, and both $a_{\tau}$ and $b_{\tau}+b_{\theta}$ are bounded below by a positive constant,
it is easy to see that any sufficiently large $u=\mathrm{const}>0$ is a super-solution to \eqref{e:taucurv-l}.

In order to construct a sub-solution we employ the technique introduced in \cite{dM05}.
Let $v\in W^{s,p}(M)$ be the positive solution of the following problem
\begin{equation}\label{e:taurcurv}
\begin{split}
-\Delta v + (a_{\tau} - a_{R}) v &=  - a_{R},\\
\Tr_{N}\partial_{\nu}v +(b_{\tau} + b_{\theta} - b_{H})v &= - b_{H},\\
\Tr_{D}v &=1.
\end{split}
\end{equation}
Defining $u = \beta (1+v)$ for a constant $\beta > 0$ to be chosen later, we have
\begin{equation*}
-\Delta u + a_{R} u + a_{\tau} u^{2\bar{q}-1}
=
2\beta a_{R}v
- \beta a_{\tau}
\left[ v - \beta^{2\bar{q} - 2} (1 + v)^{2\bar{q} - 1} \right],\\
\end{equation*}
and
\begin{multline*}
\Tr_{N}\partial_{\nu} u +b_{H} u + b_{\tau} u^{\bar{q}} + b_{\theta}u^e\\
=
2 \beta b_{H} v
- \beta b_{\tau} \left[ v - \beta^{\bar{q}-1} (1+v)^{\bar{q}} \right]
- \beta b_{\theta} \left[ v - \beta^{e-1} (1+v)^{e} \right].
\end{multline*}
Now, choosing $\beta>0$ sufficiently small, we can ensure that $u$ is a sub-solution.
\end{proof}

\section{Partial results on the non-defocusing case}
   \label{sec:focus}

In this section, we consider the case where the condition $b_{w}\leqs 0$ is violated,
still keeping the conditions $(e-1)b_{\theta}\geqs 0$ and $b_{\tau}\geqs 0$ intact.
This case covers all applications we have in mind, 
and moreover serves as a good model case since violating more conditions would only make the presentation messy without adding any conceptual difficulties.
In fact, we will further simplify the presentation as follows.
We assume that $\Sigma_{D}=\varnothing$, $b_{\tau}=0$, and $e=0$; that is, 
the Lichnerowicz problem \eqref{WF-HC} becomes
\begin{equation}\label{e:ND}
\begin{split}
-\Delta\phi + a_R\phi + a_\tau \phi^{2\bar q-1} - a_w\phi^{-2\bar q-1} &= 0,\qquad\textrm{in }M,\\
\partial_{\nu}\phi + b_H\phi + b_w\phi^{-\bar q} - b &= 0,\qquad\textrm{on }\Sigma,
\end{split}
\end{equation}
where we introduce the notation $b=-b_\theta$, since we are going to assume $b\geqs0$.
We also assume that the boundary $\Sigma$ is decomposed into two disjoint components $\Sigma_1$ and $\Sigma_2$, 
which represent the ``inner'' and the ``outer'' parts of the boundary.
One of these components may well be empty.
On the inner boundary $\Sigma_1$, we let $b_H<0$, $b_w\geqs0$, and $b\equiv0$, 
and on the outer boundary $\Sigma_2$, we let $b_H>0$, $b_w\equiv0$, and $b>0$.

In analogy to the functional considered in Section \ref{sec:yamabe}, we define the functional $E:W^{1,2}({M})\to\R$ by
\begin{equation*}\textstyle
E(\varphi)=(\nabla\varphi,\nabla\varphi) + \langle a_R+a_\tau,\varphi^2\rangle + \langle b_H,(\Tr\varphi)^2\rangle_{\Sigma},
\end{equation*}
where $\Tr:W^{1,2}({M})\to W^{\frac12,2}(\Sigma)$ is the trace map.
By the same reasoning, $E(\varphi)$ is finite for $\varphi\in W^{1,2}(M)$.
Then we let
\begin{equation}
\mathcal{Y}=\inf_{\varphi\in W^{1,2}}\frac{E(\varphi)}{\|\varphi\|_{2\bar{q}}^{2\bar{q}}}.
\end{equation}

We have the following existence result.

\begin{theorem}\label{t:exist+focus}
Assume the above setting, and assume $\mathcal{Y}>0$.
Let the coefficients satisfy $a_R\geqs0$, $a_{\tau}\geqs 0$, and $a_{w}\geqs 0$.
On the inner boundary $\Sigma_1$, we let $b_H<0$, $b_w\geqs0$, and $b\equiv0$, 
and on the outer boundary $\Sigma_2$, we let $b_H>0$, $b_w\equiv0$, and $b>0$.
In addition, we assume that $\|\frac{b_w}{b_H}\|_{L^\infty(\Sigma_1)}$ is sufficiently small.
Then there exists a positive solution $\phi\in W^{s,p}({M})$ of the Lichnerowicz problem \eqref{e:ND}.
\end{theorem}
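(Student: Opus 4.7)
My plan is to invoke the sub- and super-solution method encapsulated in Theorem~\ref{t:exist}; it then suffices to produce $0<\phi_-\leqs\phi_+$ in $W^{s,p}(M)$ with $F(\phi_-)\leqs 0\leqs F(\phi_+)$. For the super-solution I first solve the linear auxiliary problem
\begin{equation*}
-\Delta v+(a_R+a_\tau)v=a_w\ \text{in }M,\qquad \partial_\nu v+b_H v=b\ \text{on }\Sigma.
\end{equation*}
The hypothesis $\mathcal{Y}>0$ makes the bilinear form $E$ coercive on $W^{1,2}(M)$, so Lax--Milgram yields a unique weak solution, and the elliptic regularity of Corollary~\ref{C:ell-est} upgrades it to $v\in W^{s,p}(M)$; nonnegativity and nontriviality of the data (recall $b>0$ on $\Sigma_2$) combined with the strong maximum principle of Lemma~\ref{l:max-princ} give $v>0$ pointwise. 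With $\phi_+:=Cv$ a direct substitution produces
\begin{equation*}
-\Delta(Cv)+f(Cv)=a_\tau Cv(C^{2\bar q-2}v^{2\bar q-2}-1)+a_w(C-C^{-2\bar q-1}v^{-2\bar q-1}),
\end{equation*}
which is nonnegative for $C$ sufficiently large. On $\Sigma_1$ the boundary term becomes $C^{-\bar q}b_w v^{-\bar q}\geqs 0$ (since $b_w\geqs 0$ and $b\equiv 0$), and on $\Sigma_2$ it becomes $(C-1)b\geqs 0$ (since $b_w\equiv 0$ and $b>0$), so $\phi_+$ is indeed a super-solution.

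For the sub-solution I try $\phi_-=\epsilon$, a positive constant. On $\Sigma_1$ the required inequality reduces to $\epsilon^{\bar q+1}\geqs b_w/|b_H|$ pointwise; this can be arranged by choosing $\epsilon^{\bar q+1}$ to exceed $\|b_w/|b_H|\|_{L^\infty(\Sigma_1)}$, and the smallness hypothesis is exactly what allows such an $\epsilon$ to still be small (in particular compatible with $\Sigma_2$, where $b_H\epsilon\leqs b$ holds for small $\epsilon$ since $b>0$). For the interior inequality $a_R\epsilon+a_\tau\epsilon^{2\bar q-1}\leqs a_w\epsilon^{-2\bar q-1}$, smallness of $\epsilon$ suffices wherever $a_w$ is bounded below; in a generic situation with possibly degenerate $a_w$, I replace the constant by $\phi_-=\epsilon u$ where $u>0$ is constructed from a linear problem of the form $-\Delta u+(a_R+a_\tau)u=0$ with boundary data tuned so that $u$ is small exactly where $a_w$ vanishes, so that the two negative contributions to $F(\phi_-)$ can dominate. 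Taking $C$ large and $\epsilon$ small then yields $\phi_-\leqs\phi_+$, and Theorem~\ref{t:exist} produces the desired positive solution.

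The main obstacle is the sub-solution construction. The coefficient $b_w\geqs 0$ on $\Sigma_1$ has the ``wrong'' sign for a defocusing analysis: the term $b_w\phi^{-\bar q}$ blows up as $\phi\to 0^+$, normally preventing small sub-solutions. The smallness of $\|b_w/b_H\|_{L^\infty(\Sigma_1)}$ is the precise quantitative hypothesis that enables the stabilizing term $b_H\phi$ (with $b_H<0$) to dominate this blow-up at an $\epsilon$ still compatible with the interior inequality. The delicate balance is between the lower bound $\epsilon^{\bar q+1}\geqs\|b_w/|b_H|\|_{L^\infty(\Sigma_1)}$ enforced on $\Sigma_1$ and the upper bound on $\epsilon$ enforced by the interior term $a_R+a_\tau$ wherever it is not dominated by $a_w$; it is here that the smallness hypothesis is indispensable.
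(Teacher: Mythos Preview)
Your super-solution construction follows the paper's choice of auxiliary problem, but there is a genuine gap in the positivity step. You invoke Lemma~\ref{l:max-princ} to conclude $v>0$, yet part~(a) of that lemma requires the Robin coefficient to be \emph{nonnegative}, whereas here $b_H<0$ on $\Sigma_1$. Parts~(b) and~(c) presuppose $v\geqs 0$, which is exactly what is not yet established. The paper handles this with a homotopy argument: one defines the family $A_\kappa$ by replacing $b_H$ with $\kappa b_H$ on $\Sigma_1$ for $\kappa\in[0,1]$, shows each $A_\kappa$ is invertible (this is where $\mathcal{Y}>0$ is actually used, to kill the kernel), and then proves the set of $\kappa$ for which the solution $v_\kappa>0$ is nonempty (contains $\kappa=0$, where the coefficient is $\geqs0$ and the standard maximum principle applies), open (by continuity in $W^{s,p}\hookrightarrow C^0$), and closed (by the strong maximum principle applied once $v_\kappa\geqs0$ is known). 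Your appeal to Lax--Milgram plus coercivity gives existence, but not positivity; the connectedness argument is the missing idea.

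Your sub-solution construction also has a gap. The constant $\phi_-=\epsilon$ fails the interior inequality at any point where $a_w=0$ but $a_R>0$, since then $a_R\epsilon+a_\tau\epsilon^{2\bar q-1}>0$ regardless of $\epsilon$. Your proposed fix (``tune boundary data so that $u$ is small where $a_w$ vanishes'') is not a mechanism that can work: boundary data do not control interior zero sets. The paper instead solves $-\Delta v+(a_R+a_\tau)v=0$ with $\partial_\nu v+|b_H|v=b$ (note $|b_H|>0$, so standard theory gives a unique positive solution). The point of putting $a_R+a_\tau$ into the operator is that, after substituting $\phi_-=\beta v$, the interior expression becomes $a_\tau(\beta^{2\bar q-1}v^{2\bar q-1}-\beta v)-a_w\beta^{-2\bar q-1}v^{-2\bar q-1}$, which is $\leqs0$ for small $\beta$ with no pointwise assumption on $a_w$. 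On $\Sigma_1$ one gets $2b_H v\beta+b_w v^{-\bar q}\beta^{-\bar q}$, and the smallness of $\|b_w/b_H\|_{L^\infty(\Sigma_1)}$ is used only here, to make this nonpositive for the $\beta$ already chosen.
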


\begin{proof}
First, we construct a sub-solution.
Let  $v\in W^{s,p}({M})$ be the solution to
\begin{equation}
\begin{split}
-\Delta v + (a_{R} + a_{\tau}) v &=  0,\qquad\textrm{in }M,\\
\partial_{\nu}v + |b_{H}|v&= b,\qquad\textrm{on }\Sigma.
\end{split}
\end{equation}
Since $|b_H|\not\equiv0$,
the solution is unique and positive.
Let $\phi= \beta v$ with $\beta > 0$ to be chosen later. 
Then we have
\begin{multline}
-\Delta\phi 
+   a_{R}\phi 
+ a_{\tau}\phi^{2\bar{q}-1} 
- a_{w}\phi^{-2\bar{q}-1}\\
=
a_{\tau} (\beta^{2\bar{q}-1} v^{2\bar{q}-1} - \beta v)
- a_{w} \beta^{-2\bar{q}-1} v^{-2\bar{q}-1},
\end{multline}
which is clearly nonpositive if $\beta>0$ is sufficiently small.
Furthermore, we have
\begin{equation}
\partial_{\nu} \phi 
+ b_{H}\phi 
+ b_{w}\phi^{-\bar{q}}
-  b
=
\begin{cases}
2b_{H} v\beta 
+ b_{w} v^{-\bar{q}} \beta^{-\bar{q}} &\textrm{on }\Sigma_1,\\
0&\textrm{on }\Sigma_2.
\end{cases}
\end{equation}
This is where the smallness of the ratio $\frac{b_w}{b_H}$ is used:
The ratio should be so small that $2b_{H} v\beta + b_{w} v^{-\bar{q}} \beta^{-\bar{q}}\leq0$ on $\Sigma_1$.

Now we will construct a super-solution.
Let  $v\in W^{s,p}({M})$ be the solution to
\begin{equation}\label{e:non-defoc-sup}
\begin{split}
-\Delta v + (a_{R} + a_{\tau}) v &=  a_{w},\qquad\textrm{in }M,\\
\partial_{\nu}v +b_{H}v &= b,\qquad\textrm{on }\Sigma,
\end{split}
\end{equation}
and define $\phi= \beta v$ with $\beta > 0$ to be chosen later. 
Supposing for the moment that such solution exists and is positive, we have
\begin{multline}
-\Delta\phi 
+   a_{R}\phi 
+ a_{\tau}\phi^{2\bar{q}-1} 
- a_{w}\phi^{-2\bar{q}-1}\\
=
a_{\tau} (\beta^{2\bar{q}-1} v^{2\bar{q}-1} - \beta v)
+ a_{w} (\beta - \beta^{-2\bar{q}-1} v^{-2\bar{q}-1}),
\end{multline}
and
\begin{equation}
\partial_{\nu} \phi 
+ b_{H}\phi 
+ b_{w}\phi^{-\bar{q}}
-  b
=
b_{w} v^{-\bar{q}} \beta^{-\bar{q}}
+ b (\beta-1).
\end{equation}
By choosing $\beta>0$ sufficiently large, we can ensure that $\phi$ is a super-solution.

We need to address the existence and positivity of $v\in W^{s,p}({M})$ satisfying \eqref{e:non-defoc-sup}.
Consider the operator $A_\kappa:W^{s,p}(M)\to W^{s-2,p}(M)\otimes W^{s-1-\frac1p,p}(\Sigma_1)\otimes W^{s-1-\frac1p,p}(\Sigma_2)$ defined by 
\begin{equation}
A_\kappa v = 
\begin{pmatrix}
-\Delta v+(a_R+a_\tau)v\\
\Tr_1\partial_\nu v + \kappa b_H\Tr_1 v\\
\Tr_2\partial_\nu v + b_H\Tr_2 v
\end{pmatrix},
\end{equation}
for $0\leq\kappa\leq1$, 
where $\Tr_i:W^{s,p}({M})\to W^{s-\frac1p,p}(\Sigma_i)$ are the trace maps.
We will show that the kernel of $A_\kappa$ is trivial, which would then imply invertibility.
This is straightforward when $\kappa=0$ because $a_R+a_\tau\geq0$ and $b_H>0$ on $\Sigma_2$.
So we assume $0<\kappa\leq1$.
Suppose that the kernel is nontrivial; i.e., that there is nontrivial $v\in W^{s,p}(M)$ satisfying $A_\kappa v=0$.
Then by applying Lemma \ref{l:green} we have
\begin{equation*}
\begin{split}
\langle{\nabla}v,{\nabla}v\rangle
&=
-\langle{\Delta}v,v\rangle
+\langle\partial_{\nu}v,v\rangle_{\Sigma}\\
&=
-\langle(a_R+a_\tau)v,v\rangle
-\kappa\langle b_H v,v\rangle_{\Sigma_1}
-\langle b_H v,v\rangle_{\Sigma_2},
\end{split}
\end{equation*}
which implies that $\kappa E(v) \leq0$, 
and so contradicts the assumption $\mathcal{Y}>0$.
As for positivity of $v$, we will show that the solutions $v_\kappa$ to $A_\kappa v_\kappa=(a_w,0,b)$ are strictly positive for all $0\leq\kappa\leq1$.
Let $I\subset[0,1]$ be the set of $\kappa$ for which $v_\kappa>0$ in $M$.
We know that $0\in I$, and that $I$ is open, since the map $\kappa\mapsto v_\kappa$ is a continuous map into $W^{s,p}(M)$.
To show that $I$ is closed, let $\kappa$ be in the closure of $I$.
Then $v_\kappa\geq0$, which means by Lemma \ref{l:max-princ}(b) that either $v_\kappa\equiv0$ or $v_\kappa>0$.
However, $v_\kappa$ cannot vanish identically since $b\not\equiv0$.
\end{proof}


\section{Stability with respect to the coefficients}
\label{sec:stab}

In this subsection, we investigate the behaviour of the solution under perturbation of coefficients in the Lichnerowicz problem.
We anticipate that results in this direction will be used in studies of the coupled system;
cf.\ \cite{dM09,HNT07b} in the case of closed manifolds.
Let us write the Lichnerowicz problem \eqref{WF-HC} in the form
\begin{equation*}
F(\phi,\alpha):=
\left(
\begin{array}{c}
-\Delta\phi +f(\phi)\\
\Tr_{N}\partial_{\nu}\phi +h(\phi)\\
\Tr_{D}\phi-\phi_{D}
\end{array}
\right)=0,
\end{equation*}
where we denote by $\alpha=(a_{\tau},a_w,b_{H},b_{\tau},b_{\theta},b_w,\phi_{D})$ the collection of the coefficients.
Note that we hold the background metric $g$ fixed, and so will not consider perturbations with respect to $a_{R}$.
Then we define the {\em Lichnerowicz map} $\mathfrak{L}:\alpha\mapsto\phi$ by $F(\mathfrak{L}(\alpha),\alpha)=0$,
whenever there exists a unique positive solution $\phi\in W^{s,p}(M)$ to $F(\phi,\alpha)=0$.
Recall that the space in which $\alpha$ lives is $[W^{s-2,p}(M)]^2\times[W^{s-1-\frac1p,p}(\Sigma_N)]^3\times W^{s-\frac1p,p}(\Sigma_D)$.

\begin{theorem}\label{t:lich-diff}
Let $\alpha=(a_{\tau},a_w,b_{H},b_{\tau},b_{\theta},b_w,\phi_{D})$ be such that $a_{\tau}\geqs0$, $a_{w}\geqs0$, and $\phi_{D}>0$.
Assume moreover that 
the Lichnerowicz map is well-defined at $\alpha$ and that the solution $\phi=\mathfrak{L}(\alpha)$ satisfies
$$
(\bar{q}-1)b_{\tau}+(e-1)b_{\theta}\phi^{e-\bar{q}} \geqs(\bar{q}+1)b_{w}\phi^{-2\bar{q}}.
$$
In particular, this is satisfied unconditionally (of $\phi$) when $b_{\tau}\geqs0$, $(e-1)b_{\theta}\geqs0$, and $b_{w}\leqs0$.
Then the Lichnerowicz map is defined in a neighbourhood of $\alpha$ and is differentiable there provided that at least one of the following conditions holds
\begin{enumerate}[a)]
\item
$\Sigma_{D}\neq\varnothing$;
\item
$a_{\tau}+a_{w}\neq0$;
\item
$(\bar{q}-1)b_{\tau}+(e-1)b_{\theta}\phi^{e-\bar{q}} \neq (\bar{q}+1)b_{w}\phi^{-2\bar{q}}$.
\end{enumerate}
\end{theorem}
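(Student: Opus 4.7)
The plan is to invoke the implicit function theorem on the map
\begin{equation*}
F: U\times \mathcal{A}\to Y := W^{s-2,p}({M}) \otimes W^{s-1-\frac{1}{p},p}(\Sigma_N) \otimes W^{s-\frac{1}{p},p}(\Sigma_D),
\end{equation*}
where $U$ is an open neighbourhood of $\phi:=\mathfrak{L}(\alpha)$ inside the positive cone of $W^{s,p}({M})$ and $\mathcal{A}$ is the affine space in which $\alpha$ lives. Once $F$ is known to be $C^1$ and $D_\phi F(\phi,\alpha)$ is known to be an isomorphism, the IFT delivers a $C^1$ branch $\alpha'\mapsto\phi(\alpha')$ solving $F(\phi(\alpha'),\alpha')=0$ in a neighbourhood of $\alpha$, and by local uniqueness this branch coincides with the Lichnerowicz map.

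Smoothness of $F$ is routine. Linearity in $\alpha$ takes care of that variable; for the $\phi$-variable, the embedding $W^{s,p}\hookrightarrow C^0$ (since $sp>n$) keeps $\phi$ uniformly bounded away from zero on a small $W^{s,p}$-ball, so each map $\phi\mapsto \phi^r$ is $C^\infty$ into $L^\infty\cap W^{s,p}$, and multiplication by coefficients in $W^{s-2,p}$ or $W^{s-1-\frac{1}{p},p}$ is bounded by Corollary~\ref{c:alg}. Direct differentiation gives
\begin{equation*}
D_\phi F(\phi,\alpha)u = \bigl(-\Delta u + V u,\; \Tr_N\partial_\nu u + W\,\Tr_N u,\; \Tr_D u\bigr),
\end{equation*}
with $V = a_R + (2\bar q-1)a_\tau\phi^{2\bar q-2} + (2\bar q+1)a_w\phi^{-2\bar q-2}$ and $W = b_H + e\,b_\theta\phi^{e-1} + \bar q\,b_\tau\phi^{\bar q-1} - \bar q\,b_w\phi^{-\bar q-1}$. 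This is a standard linear mixed elliptic boundary value problem that is Fredholm of index zero by the same argument as in Lemma~\ref{l:lapinv}, so it suffices to prove injectivity.

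For injectivity I would exploit a Picone-type identity adapted to the boundary. Setting $v=u/\phi$, one has the pointwise identity
\begin{equation*}
|du|^2 = \phi^2|dv|^2 + \operatorname{div}(v^2\phi\,d\phi) - v^2\phi\,\Delta\phi.
\end{equation*}
Pairing $D_\phi F(\phi,\alpha)u=0$ with $u$, integrating by parts via Lemma~\ref{l:green}, and using the solution identities $-\Delta\phi+f(\phi)=0$ in $M$ and $\partial_\nu\phi+h(\phi)=0$ on $\Sigma_N$ to eliminate $\Delta\phi$ and $\partial_\nu\phi$, one arrives at
\begin{multline*}
0 = \int_M \phi^2 |dv|^2 + 2\int_M u^2\bigl[(\bar q-1)a_\tau\phi^{2\bar q-2} + (\bar q+1)a_w\phi^{-2\bar q-2}\bigr]\\
+ \int_{\Sigma_N} u^2\bigl[(\bar q-1)b_\tau\phi^{\bar q-1} + (e-1)b_\theta\phi^{e-1} - (\bar q+1)b_w\phi^{-\bar q-1}\bigr],
\end{multline*}
the Dirichlet contribution dropping out since $u$ vanishes on $\Sigma_D$. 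The standing hypothesis, multiplied through by $\phi^{\bar q-1}>0$, makes the boundary integrand nonnegative, and $a_\tau,a_w\geqs 0$ makes the interior integrand nonnegative. Thus all three terms must vanish. Vanishing of the first gives $u=c\phi$ for some constant $c$; alternative~(a) then forces $c=0$ by restriction to $\Sigma_D$, alternative~(b) forces $c=0$ by testing against the set where $a_\tau+a_w>0$, and alternative~(c) forces $c=0$ via the set where the boundary coefficient is strictly positive. Injectivity together with Fredholm index zero yields the required isomorphism and the IFT completes the proof, with the derivative $D\mathfrak{L}(\alpha)$ given by the usual implicit-differentiation formula. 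The main obstacle I anticipate is setting up the Picone pairing so that the single algebraic condition in the hypothesis governs the sign of the boundary integrand after cancellation against $h(\phi)$; once that matching is in place, the rest of the argument is essentially automatic.
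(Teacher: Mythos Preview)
Your proof is correct but follows a different path from the paper's. The paper exploits the conformal invariance established in Lemma~\ref{l:conf-inv}: after rescaling the metric to $\hat g=\phi^{2\bar q-2}g$ the solution becomes $\hat\phi\equiv1$, and the equations $F(1,\hat\alpha)=0$ allow one to eliminate $a_R$ and $b_H$ from the linearization, leaving
\[
DF_{1,\hat\alpha}(\varphi,0)=\bigl(-\hat\Delta\varphi+(2\bar q-2)\hat a_\tau\varphi+(2\bar q+2)\hat a_w\varphi,\ \Tr_N\partial_{\hat\nu}\varphi+[(\bar q-1)\hat b_\tau+(e-1)\hat b_\theta-(\bar q+1)\hat b_w]\varphi,\ \Tr_D\varphi\bigr),
\]
so that invertibility is read off directly from Lemma~\ref{l:lapinv}. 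Your Picone substitution $v=u/\phi$ accomplishes the same cancellation at the level of the quadratic form: the terms $-v^2\phi\Delta\phi$ and $v^2\phi\,\partial_\nu\phi$ remove exactly the $a_R$ and $b_H$ contributions, and the three nonnegative integrands you obtain are precisely the (unhatted, $\phi$-weighted) coefficients appearing in the paper's simplified linearization. In short, your computation is the infinitesimal version of the conformal rescaling. The paper's route is a little cleaner because it reduces everything to an off-the-shelf invertibility lemma with no further calculation, while your route has the advantage of being self-contained and not invoking the conformal covariance machinery; it also makes the role of the sign hypothesis completely transparent as a pointwise condition on the boundary integrand.
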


\begin{proof}
The idea of the proof comes from \cite{dM09}, and uses the conformal invariance in combination with the implicit function theorem.
By conformal invariance, the Lichnerowicz map $\hat{\mathfrak{L}}$ defined with respect to the scaled metric $\hg=\phi^{2\bar{q}-2}g$ satisfies
\begin{equation*}
\hat{\mathfrak{L}}(\hat\alpha)=\phi^{-1}\mathfrak{L}(\alpha)\equiv1,
\end{equation*}
with $\hat\alpha=(\hat a_{\tau},\hat a_w,\hat b_{H},\hat b_{\tau},\hat b_{\theta},\hat b_w,\hat\phi_{D})$ defined by
\begin{align*}
\hat a_{\tau}&=a_{\tau},&
\hat b_{\tau}&=b_{\tau},&
\hat b_{\theta}&=\phi^{e-\bar{q}}b_{\theta},&
\hat\phi_{D}&=\phi^{-1}\phi_{D},\\
\hat a_w&=\phi^{-4\bar{q}}a_w,&
\hat b_w&=\phi^{-2\bar{q}}b_w,&
\hat b_{H}&=\phi^{1-\bar{q}}b_{H}+\textstyle\frac{2}{n-2}\phi^{-\bar{q}}\partial_{\nu}\phi.&
\end{align*}
Now we drop the hats from the notations and consider the case $\phi\equiv1$.
One can compute that the G\^ateau derivative of $F$ at $(\phi,\alpha)$ along $(\varphi,0)$ is
\begin{equation*}
DF_{\phi,\alpha}(\varphi,0)
=
\left(
\begin{array}{c}
-\Delta\varphi +a_{R}\varphi+(2\bar{q}-1)a_{\tau}\phi^{2\bar{q}-2}\varphi+(2\bar{q}+1)a_{w}\phi^{-2\bar{q}-2}\varphi\\
\Tr_{N}\partial_{\nu}\varphi +b_{H}\varphi+\bar{q}b_{\tau}\phi^{\bar{q}-1}\varphi+eb_{\theta}\phi^{e-1}\varphi-\bar{q}b_{w}\phi^{-\bar{q}-1}\varphi\\
\Tr_{D}\varphi
\end{array}
\right),
\end{equation*}
From $F(1,\alpha)=0$ we infer
\begin{equation*}
\begin{split}
a_{R}+a_{\tau}-a_{w} &= 0\\
b_{H}+b_{\tau}+b_{\theta}+b_{w} &= 0,
\end{split}
\end{equation*}
and taking this into account, the G\^ateau derivative of $F$ at $(1,\alpha)$ along $(\varphi,0)$ is
\begin{equation*}
DF_{1,\alpha}(\varphi,0)
=
\left(
\begin{array}{c}
-\Delta\varphi +(2\bar{q}-2)a_{\tau}\varphi+(2\bar{q}+2)a_{w}\varphi\\
\Tr_{N}\partial_{\nu}\varphi +(\bar{q}-1)b_{\tau}\varphi+(e-1)b_{\theta}\varphi-(\bar{q}+1)b_{w}\varphi\\
\Tr_{D}\varphi
\end{array}
\right).
\end{equation*}
The linear operator $\varphi\mapsto DF_{1,\alpha}(\varphi,0)$ is invertible if
$(\bar{q}-1)b_{\tau}+(e-1)b_{\theta}-(\bar{q}+1)b_{w}\geqs0$,
and at least one of $a_{\tau}+a_{w}\neq0$ and $(\bar{q}-1)b_{\tau}+(e-1)b_{\theta}-(\bar{q}+1)b_{w}\neq0$ holds.
To finish the proof, we put the hats back on the coefficients and express them in terms of the original (unhatted) coefficients.
\end{proof}

\section{Concluding remarks}
   \label{sec:conc}

In this article we developed a well-posedness theory of low regularity for the Lichnerowicz equation arising from the Einstein equations in general relativity.
We began by reviewing the constraints in the Einstein equations and the conformal traceless decomposition introduced by Lichnerowicz.
Motivated by models of asymptotically flat manifolds as well as by trapped surface conditions for excising black holes, we examined several different types of boundary conditions, and then posed a general boundary value problem for the Lichnerowicz equation that is the focus for the remainder of the paper.
In order to develop a well-posedness theory that mirrors the theory developed for the case of closed manifolds, we first generalized the technique of Yamabe classification to nonsmooth metrics on compact manifolds with boundary.
In particular, we showed that two conformally equivalent rough metrics cannot have scalar curvatures with distinct signs.
We started our study of the well-posedness question by first extending a result on conformal invariance to manifolds with boundary, and then using the result to prove a uniqueness theorem.
Next, we presented the method of sub- and super-solutions tailored to the situation at hand.
Finally, we gave several explicit constructions of the necessary sub- and super-solutions in the cases of interest,
and included a stability result with respect to the coefficients.

\section*{Acknowledgements}

We would like to thank Jim Isenberg and two anonymous referees for their insightful comments that helped improve the paper.
The first author was supported in part by
NSF Awards~1065972, 1217175, and 1262982.
The second author was supported in part 
by an NSERC Canada Discovery Grant 
and 
by an FQRNT Quebec Nouveaux Chercheurs Grant.

\appendix
\section{Sobolev spaces}
\label{sec:Sobolev}

In this appendix we recall some properties of Sobolev spaces over compact manifolds with boundary.
The following definition makes precise what we mean by fractional-order 
Sobolev spaces.
We expect that without much difficulty all results in this paper can be 
modified to reflect other smoothness classes such as Bessel potential spaces 
or general Besov spaces.
In the following definition $\Omega$ is a subset of $\R^n$, and $C^\infty_0(\Omega)$ is the space of all $C^\infty$ functions with compact support in~$\Omega$.

\begin{definition}\label{d:sob}
For $s\geqs 0$ and $1\leqs p\leqs\infty$, we denote by $W^{s,p}(\Omega)$ the space of all distributions $u$ defined in $\Omega$, such that
\begin{itemize}
\item[(a)] when $s=m$ is an integer,
\begin{equation*}
\|u\|_{m,p}=\sum_{|\nu|\leqs m}\|\partial^\nu u\|_{p}<\infty,
\end{equation*}
where $\|\cdot\|_{p}$ is the standard $L^p$-norm in $\Omega$;
\item[(b)] and when $s=m+\sigma$ with $m$ (nonnegative) integer and $\sigma\in(0,1)$,
\begin{equation*}
\|u\|_{s,p}=\|u\|_{m,p}+\sum_{|\nu|=m}\|\partial^\nu u\|_{\sigma,p}<\infty;
\end{equation*}
where 
\begin{equation*}
\|u\|_{\sigma,p}=\left(\iint_{\Omega\times\Omega}\frac{|u(x)-u(y)|^p}{|x-y|^{n+\sigma p}}dxdy\right)^{\frac1p},
\qquad\textrm{for }1\leqs p<\infty,
\end{equation*}
and
\begin{equation*}
\|u\|_{\sigma,\infty}=\mathrm{ess~sup}_{x,y\in\Omega}\,\frac{|u(x)-u(y)|}{|x-y|^{\sigma}}.
\end{equation*}
\end{itemize}
For $s<0$ and $1<p<\infty$,  $W^{s,p}(\Omega)$ denotes the topological dual of ${\mathring{W}}^{-s,p'}(\Omega)$, where $\frac1p+\frac1{p'}=1$
and ${\mathring{W}}^{-s,p'}(\Omega)$ is the closure of $C^\infty_0(\Omega)$ in $W^{-s,p'}(\Omega)$.
\end{definition}

These well-known spaces are Banach spaces with corresponding norms, and become Hilbert spaces when ${p=2}$.
We refer to \cite{Gris85,Trie83} and references therein for further properties.

Now we will define analogous spaces on compact manifolds with boundary.
Let ${M}$ be an $n$-dimensional smooth compact manifold with boundary, and let $\{(U_i,\varphi_i):i\in I\}$ be a finite collection of charts such that $\{U_i\}$ forms a cover of ${M}$.
Recall that for a manifold with boundary, for each $i\in I$, we can assume either $\varphi_i:U_i\to\B^n$ or $\varphi_i:U_i\to{\B^n_+}$ is a homeomorphism, 
where $\B^n$ is the unit ball in $\R^n$ and $\B^n_{+}=\{(x_1,\ldots,x_n)\in\B^n:x_n\geq0\}$.
We say a function on ${\B^n_+}$ is smooth if it can be extended to a smooth function on $\B^n$,
and a function $f:M\to\R$ is smooth if the pull-back $\varphi_{i}^*(f)=f\circ\varphi_{i}^{-1}$ is smooth for each $i\in I$.
Let $\{\chi_i\}$ be a smooth (up to the boundary) partition of unity subordinate to $\{U_i\}$.
Then the seminorms $\|\varphi_i^*(\chi_if)\|_{C^k}$ with $i\in I$ and $k\in\N$ define a Fr\'echet topology on the space of functions $f\in C^\infty(M)$ with $\mathrm{supp} f\subseteq K$,
for any set $K$ that is compact in the {\em interior} of $M$,
and taking the inductive limit as $K$ exhaust $M$, we get the topology on the space $C^\infty_0(M)$, 
which is defined as the space of all smooth functions with compact support  in the interior of $M$.
Consequently, distributions can be defined as they are continuous linear functionals on $C^\infty_0(M)$.
For any distribution $u\in C^\infty_0(M)^*$ and $i\in I$, the pull-back $\varphi_{i}^*(u)\in C^\infty_0(\varphi_{i}(U_i))^*$ 
is defined by $\varphi_{i}^*(u)(v)=u(v\circ\varphi_i)$ for all $v\in C^\infty_0(\varphi_{i}(U_i))$, where in case $\varphi_{i}(U_i)={\B^n_+}$, keeping the same philosophy as in the definition of $C^\infty_0(M)$,
the space $C^\infty_0({\B^n_+})$ is understood to be $C^\infty_0(\B^n_+\setminus\partial\B^n_+)$.

\begin{definition}\label{d:sobm}
For $s\in\R$ and $p\in(1,\infty)$,
we denote by $W^{s,p}({M})$ the space of all distributions $u$ defined in ${M}$, such that
\begin{equation}\label{e:sobnormpu}
\|u\|_{s,p}=\sum_{i}\|\varphi_{i}^*(\chi_iu)\|_{s,p}<\infty,
\end{equation}
where the norm under the sum is the $W^{s,p}(\R^n_+)$-norm.
In case $s\geqs 0$, these Sobolev spaces can also be defined for $p=1$ and $p=\infty$.
\end{definition}

In the following, we collect some basic properties of these spaces that are used in the body of the paper.
An important property is that $W^{s,p}(M)\hookrightarrow C^k(M)$ if $s-\frac{n}p>k$.
This fact is sometimes called {\em Bernstein's theorem},
and hints at the fact that one can multiply two functions in $W^{s,p}(M)$ if $s>\frac{n}p$.
The {\em Sobolev embedding theorem} tells us that
$W^{s,p}(M)\hookrightarrow W^{\sigma,q}(M)$ for $0\leq\sigma<s$ and $q>p\geq1$ satisfying $s-\frac{n}p = \sigma-\frac{n}q$.
Another fundamental property is the {\em Rellich-Kondashov theorem}, which says that
the embedding $W^{s+\eps,p}(M)\hookrightarrow W^{s,p}(M)$ is compact for any $\eps>0$.
Note that here the compactness of $M$ is crucial.

The {\em trace map} $\Tr$ defined for smooth functions $\phi$ by $\Tr\phi:=\phi|_{\partial M}$ can be uniquely extended to continuous surjective maps
\begin{equation*}
\Tr: W^{s,p}({M}) \to
W^{s-\frac{1}{p},p}(\partial M),
\end{equation*}
when $s - \frac1p$ is {\em not} an integer (with no such restriction if $p=2$).
The properties mentioned so far can be combined in various ways to produce results that are more suitable to a given situation.
We collect together some such well-known results, specifically tailored 
to \S\ref{sec:yamabe}, as the following standard theorem.

\begin{theorem}\label{t:sob}
Suppose $n\geq3$ and $q\geq1$. Then the followings are true.
\begin{itemize}
\item
$W^{1,2}(M)\hookrightarrow L^q(M)$ if $q\leq2\bar{q}$, where $\bar{q}=\frac{n}{n-2}$.
\item
The embedding $W^{1,2}(M)\hookrightarrow L^q(M)$ is compact if $q<2\bar{q}$.
\item
The trace map $\gamma:W^{1,2}(M)\to L^r(\partial M)$ is continuous if $r\leq\bar{q}+1$.
\item
$\gamma:W^{1,2}(M)\to L^r(\partial M)$ is compact if $r<\bar{q}+1$.
\end{itemize}
\end{theorem}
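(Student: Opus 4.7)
All four statements are classical and follow by reducing to the Euclidean (or half-space) case via a partition of unity, then invoking the standard Gagliardo--Nirenberg--Sobolev and Rellich--Kondrachov theorems together with an interpolation argument to reach the subcritical exponents. The plan is to treat the critical exponents $q=2\bar{q}$ and $r=\bar{q}+1$ first, and then deduce the subcritical cases by interpolation.

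For the first item, let $\{(U_i,\varphi_i)\}$ be the finite atlas and $\{\chi_i\}$ a subordinate partition of unity as in Definition \ref{d:sobm}. For each $i$, $\varphi_i^*(\chi_i u)\in W^{1,2}(\R^n)$ or $W^{1,2}(\R^n_+)$ with compact support, and the classical Sobolev inequality gives $\|\varphi_i^*(\chi_i u)\|_{L^{2\bar{q}}}\lesssim\|\varphi_i^*(\chi_i u)\|_{W^{1,2}}$; summing over $i$ and using equivalence of local norms with the global one yields $W^{1,2}(M)\hookrightarrow L^{2\bar{q}}(M)$. Since $M$ is compact and hence has finite volume, H\"older's inequality gives $\|u\|_{L^q(M)}\leqs |M|^{\frac1q-\frac1{2\bar{q}}}\|u\|_{L^{2\bar{q}}(M)}$ for any $1\leqs q\leqs 2\bar{q}$, which finishes the first bullet.

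For the second item, the classical Rellich--Kondrachov theorem gives that the embedding $W^{1,2}(M)\hookrightarrow L^2(M)$ is compact (again via the partition of unity reduction to Euclidean domains). For $2\leqs q<2\bar{q}$, write $\frac1q=\frac{\theta}{2}+\frac{1-\theta}{2\bar{q}}$ for some $\theta\in(0,1]$ and apply the standard interpolation inequality
\begin{equation*}
\|u\|_{L^q(M)}\leqs \|u\|_{L^2(M)}^{\theta}\|u\|_{L^{2\bar{q}}(M)}^{1-\theta}.
\end{equation*}
If $u_j\rightharpoonup 0$ in $W^{1,2}(M)$ then $\{u_j\}$ is bounded in $L^{2\bar{q}}(M)$ by the first bullet, and $u_j\to 0$ in $L^2(M)$ by Rellich, so the above inequality yields $u_j\to 0$ in $L^q(M)$. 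For $1\leqs q<2$ one simply uses H\"older.

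For the third item, I would combine the standard continuous trace $\Tr:W^{1,2}(M)\to W^{\frac12,2}(\partial M)$ with the Sobolev embedding on the $(n-1)$-dimensional compact manifold $\partial M$. Indeed, the condition $\frac12-\frac{n-1}2\geqs-\frac{n-1}r$ rearranges to $r\leqs \frac{2(n-1)}{n-2}=\bar{q}+1$, so $W^{\frac12,2}(\partial M)\hookrightarrow L^r(\partial M)$ for such $r$, and the composition gives the claimed trace continuity. For the fourth item, the Rellich--Kondrachov theorem applied on $\partial M$ gives that $W^{\frac12,2}(\partial M)\hookrightarrow L^r(\partial M)$ is compact whenever $r<\bar{q}+1$, and composing with the continuous trace map yields the compactness of $\Tr:W^{1,2}(M)\to L^r(\partial M)$. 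The only delicate point in the whole argument is the endpoint case $r=\bar{q}+1$, where one must invoke the sharp (critical) trace embedding on $\partial M$; this is standard but is where one has to be careful not to lose an endpoint via interpolation.
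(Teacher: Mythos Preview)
The paper does not actually prove this theorem: it is stated in the appendix as a ``standard theorem'' collecting well-known facts that follow from the basic properties (Sobolev embedding, Rellich--Kondrachov, trace map) listed immediately before it. Your sketch is a correct and entirely standard way to supply the details, and it matches the spirit of the paper's remark that ``the properties mentioned so far can be combined in various ways'' to obtain these results.
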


Now we look at pointwise multiplication of functions from Sobolev spaces;
the following general result may be found in \cite{HNT07b}.

\begin{theorem}\label{t:sob-hol}
Let $s_i\geqs s$ with $s_1+s_2\geqs 0$, and $1\leqs p,p_i\leqs\infty$ ($i=1,2$) be real numbers satisfying
\begin{equation*}
s_i-s\geqs n\left(\frac1{p_i}-\frac1p\right),
\qquad
s_1+s_2-s > n\left(\frac1{p_1}+\frac1{p_2}-\frac1p\right),
\end{equation*}
where the strictness of the inequalities can be interchanged if $s\in\N_0$.
In case $\min(s_1,s_2)<0$, in addition let $1<p,p_i<\infty$, and let
\begin{equation*}
s_1+s_2\geqs n\left(\frac1{p_1}+\frac1{p_2}-1\right).
\end{equation*}
Then, the pointwise multiplication of functions extends uniquely to a continuous bilinear map
\begin{equation*}
W^{s_1,p_1}({M})\otimes W^{s_2,p_2}({M})\rightarrow W^{s,p}({M}).
\end{equation*}
\end{theorem}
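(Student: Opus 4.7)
The plan is to reduce the statement to analogous bilinear estimates on Euclidean space and then prove those by combining H\"older's inequality, Sobolev embeddings, interpolation, and duality. Since $M$ is compact with a finite atlas, Definition~\ref{d:sobm} expresses every $W^{s,p}(M)$-norm as a finite sum of norms of localizations $\varphi_i^*(\chi_i u)$ supported in $\R^n$ or $\R^n_+$. Because the half-space admits a Stein-type extension operator that is bounded on $W^{s,p}$ for every $s\in\R$ and $1<p<\infty$ (and trivially on $W^{s,p}$ for $s\geqs 0$, $1\leqs p\leqs\infty$, via standard reflection), it suffices to prove the bilinear bound $\|uv\|_{W^{s,p}(\R^n)}\lesssim \|u\|_{W^{s_1,p_1}(\R^n)}\|v\|_{W^{s_2,p_2}(\R^n)}$ for compactly supported $u,v$.

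First I would dispose of the case in which $s$, $s_1$, $s_2$ are non-negative integers. Applying the Leibniz rule one has $\partial^\alpha(uv)=\sum_{\beta\leqs\alpha}\binom{\alpha}{\beta}\,\partial^\beta u\,\partial^{\alpha-\beta}v$ for every $|\alpha|\leqs s$, and I would estimate each summand by H\"older's inequality with dual exponents $q_1(\beta),q_2(\beta)$ satisfying $\tfrac{1}{q_1}+\tfrac{1}{q_2}=\tfrac{1}{p}$, followed by the Sobolev embeddings $W^{s_1-|\beta|,p_1}\hookrightarrow L^{q_1}$ and $W^{s_2-|\alpha-\beta|,p_2}\hookrightarrow L^{q_2}$. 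A short arithmetic verification shows that the hypotheses $s_i-s\geqs n(\tfrac{1}{p_i}-\tfrac{1}{p})$ together with $s_1+s_2-s\geqs n(\tfrac{1}{p_1}+\tfrac{1}{p_2}-\tfrac{1}{p})$ are exactly what allows one to choose admissible $q_1(\beta),q_2(\beta)$ for every $\beta\leqs\alpha$; the endpoint $W^{n/p,p}\not\hookrightarrow L^\infty$ is avoided by appealing to the companion embedding $W^{n/p,p}\hookrightarrow L^q$ for every $q<\infty$, which is why the strict/non-strict option in the hypotheses may be swapped when $s\in\N_0$.

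Next I would promote this to fractional and negative values of the parameters. For fractional $s\geqs 0$ and non-negative $s_i$, the integer-order bounds combine via the real interpolation identity $(W^{s_0,p},W^{s_1,p})_{\theta,p}=W^{s,p}$ (with $s=(1-\theta)s_0+\theta s_1$), applied in each of the two arguments in turn. To reach negative smoothness I would invoke duality: for $s<0$, $W^{s,p}(\R^n)$ is the dual of $\mathring{W}^{-s,p'}(\R^n)$, so $\|uv\|_{s,p}=\sup_\varphi \langle u,v\varphi\rangle$ over test functions $\varphi$ normalized in $W^{-s,p'}$; the product $v\varphi$ is then estimated by the already-established non-negative-index multiplication rule into $W^{-s_1,p_1'}$, and paired against $u\in W^{s_1,p_1}$. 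The auxiliary constraint $s_1+s_2\geqs n(\tfrac{1}{p_1}+\tfrac{1}{p_2}-1)$ imposed when $\min(s_1,s_2)<0$ is precisely the condition that makes the intermediate non-negative-index multiplication admissible in this duality reduction. The hardest step will be the combined fractional/endpoint case, where the clean statement really requires a paraproduct (or Littlewood--Paley) decomposition $uv=\Pi_u v+\Pi_v u+R(u,v)$ together with the identification $W^{s,p}=F^s_{p,2}$; the diagonal remainder $R(u,v)$ is the piece that forces the \emph{strict} inequality in the general hypothesis, since it is the only term sensitive to the failure of an $L^\infty$ embedding at the critical index.
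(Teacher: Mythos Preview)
The paper does not actually prove this theorem: it is stated as a known result and attributed to the reference \cite{HNT07b} (see the sentence immediately preceding the theorem, ``the following general result may be found in \cite{HNT07b}''). There is therefore no proof in the paper to compare your proposal against.

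That said, your outline is the standard route to such multiplication theorems and is essentially what one finds in the cited literature: localize to $\R^n$ (or $\R^n_+$) via the atlas and partition of unity, handle the nonnegative-integer case by Leibniz, H\"older, and Sobolev embeddings, pass to fractional nonnegative orders by real interpolation, and reach negative orders by duality, with a paraproduct/Littlewood--Paley argument to nail down the sharp endpoint and the role of the strict inequality. One caveat worth flagging in your reduction step: to pass from $M$ to $\R^n$ you need not only to localize each factor but also to control the product $\chi_i(uv)$ in terms of the localized pieces $\chi_j u$ and $\chi_k v$; this requires either writing $\chi_i=\sum_{j,k}\chi_i\chi_j\chi_k$ against a refined partition, or invoking that multiplication by a fixed smooth cutoff is bounded on every $W^{s,p}$. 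This is routine but should be made explicit. Also, your duality step implicitly uses that the nonnegative-index multiplication bound $W^{s_2,p_2}\otimes W^{-s,p'}\to W^{-s_1,p_1'}$ holds, and checking that the hypotheses on $(s,s_1,s_2,p,p_1,p_2)$ translate correctly under the substitution $(s,p)\mapsto(-s_1,p_1')$, etc., is exactly where the extra condition $s_1+s_2\geqs n(\tfrac1{p_1}+\tfrac1{p_2}-1)$ enters; you have identified this correctly.
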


Let us record here the important special cases that are used thoughout the paper;
this result may also be found in \cite{HNT07b}.

\begin{corollary}\label{c:alg}
(a) If $p\in(1,\infty)$ and $s\in(\frac{n}p,\infty)$, then $W^{s,p}$ is a Banach algebra.
Moreover, if in addition $q\in(1,\infty)$ and $\sigma\in[-s,s]$ satisfy $\sigma-\frac{n}q\in[-n-s+\frac{n}p,s-\frac{n}p]$,
then the pointwise multiplication is bounded as a map $W^{s,p}\otimes W^{\sigma,q}\rightarrow W^{\sigma,q}$.

(b) Let $1<p,q<\infty$ and $\sigma\leqs s\geqs 0$ satisfy $\sigma-\frac{n}q<2(s-\frac{n}p)$ and $\sigma-\frac{n}q\leqs s-\frac{n}p$.
Then the pointwise multiplication is bounded as a map $W^{s,p}\otimes W^{s,p}\rightarrow W^{\sigma,q}$. 
\end{corollary}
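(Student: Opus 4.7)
The plan is to derive both parts of Corollary \ref{c:alg} directly from Theorem \ref{t:sob-hol} by choosing the indices $(s_1,p_1,s_2,p_2)$ appropriately and then verifying, line by line, that the numerical conditions of the theorem reduce exactly to the hypotheses of the corollary. Nothing beyond this bookkeeping is needed, so the entire proof amounts to a careful audit of five inequalities in each case, with particular attention to which role $s$ plays (the target smoothness in Theorem \ref{t:sob-hol} versus an input smoothness).

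For the Banach algebra statement in (a), I would set $s_1=s_2=s$, $p_1=p_2=p$, with target $W^{s,p}$. Then $s_i\geqs s$ is trivial, $s_1+s_2=2s\geqs 0$ since $s>n/p>0$, and the balance condition $s_i-s\geqs n(1/p_i-1/p)$ collapses to $0\geqs 0$. The sole nontrivial requirement, $s_1+s_2-s > n(1/p_1+1/p_2-1/p)$, becomes $s>n/p$, which is the hypothesis. Since $\min(s_1,s_2)=s>0$, the extra clause of Theorem \ref{t:sob-hol} does not trigger, and the conclusion $W^{s,p}\otimes W^{s,p}\to W^{s,p}$ follows. For the module statement in (a), take $s_1=s$, $p_1=p$, $s_2=\sigma$, $p_2=q$, with target $W^{\sigma,q}$. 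The bound $s\geqs\sigma$ (hence $s_i\geqs\sigma$) and $s_1+s_2=s+\sigma\geqs 0$ follow from $\sigma\in[-s,s]$. The balance $s-\sigma\geqs n(1/p-1/q)$ is precisely the upper bound $\sigma-\frac{n}{q}\leqs s-\frac{n}{p}$, while for $i=2$ it is trivial. The strict condition $s_1+s_2-\sigma>n(1/p+1/q-1/q)$ again reduces to $s>n/p$. Finally, if $\sigma<0$ so that $\min(s_1,s_2)<0$, the auxiliary hypothesis $s+\sigma\geqs n(1/p+1/q-1)$ is exactly the lower bound $\sigma-\frac{n}{q}\geqs -n-s+\frac{n}{p}$.

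For (b), I would set $s_1=s_2=s$, $p_1=p_2=p$, and target $W^{\sigma,q}$. Then $s_i\geqs \sigma$ is the assumption $\sigma\leqs s$, $s_1+s_2=2s\geqs 0$ follows from $s\geqs 0$, and the balance $s-\sigma\geqs n(1/p-1/q)$ is the hypothesis $\sigma-\frac{n}{q}\leqs s-\frac{n}{p}$. The strict condition $s_1+s_2-\sigma>n(2/p-1/q)$ is exactly $\sigma-\frac{n}{q}<2(s-\frac{n}{p})$. Since $s\geqs 0$, the case $\min(s_1,s_2)<0$ never arises, so no further verification is needed and Theorem \ref{t:sob-hol} delivers the desired bilinear map $W^{s,p}\otimes W^{s,p}\to W^{\sigma,q}$.

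The main potential pitfall, and the step deserving the most care, is not any analytic difficulty but a notational one: in Theorem \ref{t:sob-hol}, the symbol $s$ denotes the \emph{target} smoothness while $s_1,s_2$ are the inputs, whereas in Corollary \ref{c:alg}(a) the letter $s$ is reused for both the inputs and (in the Banach algebra case) the target, and in (b) the target smoothness is called $\sigma$. One must consistently substitute the right symbols before comparing inequalities; once that is done, each of the five clauses of Theorem \ref{t:sob-hol} matches a single hypothesis of the corollary, and the refinement that strict inequalities can be relaxed when $s\in\mathbb{N}_0$ plays no role because only the strict form is needed.
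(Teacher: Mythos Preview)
Your proposal is correct and is precisely the intended argument: the paper does not supply a separate proof for this corollary, instead citing \cite{HNT07b} and relying on the fact that it is an immediate specialization of Theorem~\ref{t:sob-hol}. Your verification of the five clauses in each case is accurate, including the handling of the auxiliary condition when $\sigma<0$ in part~(a) and the observation that it never triggers in part~(b) since $s\geqs 0$.
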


The following lemma is proved in \cite{dM05} for the case $p=q=2$.
With the help of Theorem \ref{t:sob-hol}, the proof can easily be adapted to the following general case 
(see \cite{HNT07b} for the proof in this more general case).

\begin{lemma}\label{l:nem}
Let $p\in(1,\infty)$ and $s\in(\frac{n}p,\infty)$, and let $u\in W^{s,p}$.
Let $\sigma\in[-1,1]$ and $\frac1q\in(\frac{1+\sigma}2\delta,1-\frac{1-\sigma}2\delta)$, and let  $v\in W^{\sigma,q}$, where $\delta=\frac1p-\frac{s-1}n$.
Moreover, let $f:[\inf u,\sup u]\rightarrow\R$ be a smooth function.
Then, we have
\begin{equation*}
\|v(f\circ u)\|_{\sigma,q}
\leqs C\,\|v\|_{\sigma,q}
\left(\|f\circ u\|_{\infty}+\|f'\circ u\|_{\infty}\|u\|_{s,p}\right),
\end{equation*}
where the constant $C$ does not depend on $u$, $v$ or $f$.
\end{lemma}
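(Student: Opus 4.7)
The strategy is to establish the estimate at the two endpoints $\sigma=1$ and $\sigma=-1$, and then obtain the general case by interpolation; Theorem \ref{t:sob-hol} will serve as the main multiplication workhorse throughout, while Corollary \ref{c:alg} guarantees that all the relevant products are well-defined.

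For $\sigma=1$, where the hypothesis reads $\frac{1}{q}\in(\delta,1)$, I would apply the product rule
\[
\nabla(v(f\circ u)) = (\nabla v)(f\circ u) + v(f'\circ u)\nabla u,
\]
to obtain
\[
\|v(f\circ u)\|_{1,q} \le \|f\circ u\|_\infty\|v\|_{1,q} + \|f'\circ u\|_\infty\,\|v\,\nabla u\|_q.
\]
The bilinear multiplication $W^{1,q}\otimes W^{s-1,p}\to L^q$ needed for the second term is a direct application of Theorem \ref{t:sob-hol}; the only nontrivial hypothesis to verify is $\frac{1}{q}\ge\frac{1}{p}-\frac{s-1}{n}=\delta$, which holds by hypothesis, while the strictness requirement $s_1+s_2-s>n(\frac{1}{p_1}+\frac{1}{p_2}-\frac{1}{q})$ reduces to $s>\frac{n}{p}$, again true by hypothesis. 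This yields the $\sigma=1$ endpoint.

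The $\sigma=-1$ endpoint I would reach by duality: here $\frac{1}{q}\in(0,1-\delta)$, hence $\frac{1}{q'}\in(\delta,1)$, and for any $\varphi\in W^{1,q'}$,
\[
\langle v(f\circ u),\varphi\rangle = \langle v,\varphi(f\circ u)\rangle \le \|v\|_{-1,q}\,\|\varphi(f\circ u)\|_{1,q'},
\]
so applying the $\sigma=1$ case to $\varphi(f\circ u)$ and taking the supremum over $\varphi$ of unit norm delivers the required estimate. With both endpoints in hand, I would fix $u$ and $f$ and regard $T_u:v\mapsto v(f\circ u)$ as a linear operator. Complex interpolation between the mappings $W^{-1,q_{-1}}\to W^{-1,q_{-1}}$ and $W^{1,q_1}\to W^{1,q_1}$ yields boundedness on $W^{\sigma,q}\to W^{\sigma,q}$ for $\sigma=2\theta-1$ with $\frac{1}{q}=\frac{1-\sigma}{2}\cdot\frac{1}{q_{-1}}+\frac{1+\sigma}{2}\cdot\frac{1}{q_1}$. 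Varying $q_{\pm 1}$ over their admissible ranges exactly traces out $\frac{1}{q}\in\bigl(\frac{1+\sigma}{2}\delta,\,1-\frac{1-\sigma}{2}\delta\bigr)$, and the operator-norm bound $\|f\circ u\|_\infty+\|f'\circ u\|_\infty\|u\|_{s,p}$ persists under interpolation since it controls the norms at both endpoints uniformly.

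The main technical obstacle is the interpolation identity $[W^{s_0,q_0},W^{s_1,q_1}]_\theta=W^{\sigma,q}$ at the required indices, since the Sobolev--Slobodeckij spaces of Definition \ref{d:sob} do not interpolate as cleanly as Bessel-potential spaces. The cleanest route is to invoke $W^{s,p}=B^s_{p,p}$ for non-integer $s$ together with the standard interpolation of Besov spaces, treating the integer endpoint $\sigma=\pm 1$ separately if necessary. As a fallback that avoids abstract interpolation entirely, one can work directly with the Gagliardo seminorm via the splitting
\[
v(x)(f\circ u)(x)-v(y)(f\circ u)(y) = (v(x)-v(y))(f\circ u)(x) + v(y)\bigl((f\circ u)(x)-(f\circ u)(y)\bigr),
\]
estimating the second increment by $\|f'\circ u\|_\infty\,|u(x)-u(y)|$ and then reducing to multiplication bounds in appropriate Lebesgue norms; this is more laborious but stays close to the definitions and mirrors the argument in \cite{dM05} for the $p=q=2$ case.
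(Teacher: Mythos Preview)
The paper does not prove this lemma; it is stated with references to \cite{dM05} (the case $p=q=2$) and \cite{HNT07b} (the general case), so there is no in-paper argument to compare against directly. Your endpoint computations at $\sigma=\pm1$ are correct: the product rule together with the bilinear bound $W^{1,q}\otimes W^{s-1,p}\to L^q$ from Theorem~\ref{t:sob-hol} gives the $\sigma=1$ case (the supplementary hypothesis $s_1+s_2\geqs n(\tfrac1{p_1}+\tfrac1{p_2}-1)$ needed when $s-1<0$ is automatic from $s>\tfrac{n}{p}$ and $\tfrac1q<1$), and duality then handles $\sigma=-1$.

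The interpolation gap you flag is genuine and is where the proposal is incomplete. Complex interpolation between $W^{\pm1,q_{\pm1}}=H^{\pm1,q_{\pm1}}$ lands in the Bessel-potential scale $H^{\sigma,q}=F^\sigma_{q,2}$, whereas for non-integer $\sigma$ the spaces of Definition~\ref{d:sob} are $B^\sigma_{q,q}$; these coincide only when $q=2$. Your first workaround does not repair this, since the integer endpoints sit in the wrong scale to feed into Besov interpolation. Real interpolation $(W^{-1,q},W^{1,q})_{\theta,q}=B^{2\theta-1}_{q,q}$ does hit the right target spaces, but only with $q_{-1}=q_1=q$, which forces $\tfrac1q\in(\delta,1)\cap(0,1-\delta)$ --- strictly smaller than the window the lemma asserts. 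Your second fallback, the direct Gagliardo-seminorm argument, is the robust route and is in outline what \cite{dM05} does: after the increment splitting one must bound
\[
\int\!\!\int\frac{|v(y)|^q\,|u(x)-u(y)|^q}{|x-y|^{n+\sigma q}}\,dx\,dy
\]
by $C\|v\|_{\sigma,q}^q\|u\|_{s,p}^q$, and it is precisely this estimate (via the embedding $W^{s,p}\hookrightarrow W^{1,1/\delta}$, or its fractional substitute when $s<1$) that produces the stated admissible range for $\tfrac1q$. The proposal points to this path but does not carry it out, so the argument remains incomplete at exactly the step that determines the range of~$q$.
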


In the next lemma (also established in \cite{HNT07b}),
we consider nonsmooth Riemannian metrics on ${M}$.

\begin{lemma}\label{l:rough-L2}
Let $\gamma\in(1,\infty)$ and $\alpha\in(\frac{n}\gamma,\infty)$.
Fix on ${M}$ a Riemannian metric of class $W^{\alpha,\gamma}$.

(a) Let $p\in(1,\infty)$ and  $s\leqs\min\{\alpha,\alpha+n(\frac1p-\frac1\gamma)\}$.
Then identifying the space $C^\infty({M})$ as a subspace of distributions via the $L^2$-inner product, $C^{\infty}({M})$ is densely embedded in $W^{s,p}({M})$.

(b) Let $s\in[-\alpha,\alpha]$, $p\in(1,\infty)$, and $s-\frac{n}p\in[-n-\alpha+\frac{n}\gamma,\alpha-\frac{n}\gamma]$.
Then the $L^2$-inner product on $C^\infty_0({M})$ extends uniquely to a continuous bilinear pairing $\mathring{W}^{s,p}({M})\otimes\mathring{W}^{-s,p'}({M})\to\R$,
where $\frac1p+\frac1{p'}=1$.
Moreover, the pairing induces a topological isomorphism $[\mathring{W}^{s,p}({M})]^*\cong\mathring{W}^{-s,p'}({M})$.
\end{lemma}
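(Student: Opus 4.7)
The plan is to reduce both assertions to the corresponding standard statements on Euclidean half-space $\R^n_+$ via the chart cover $\{(U_i,\varphi_i)\}$ and a smooth partition of unity $\{\chi_i\}$ from Definition \ref{d:sobm}, handling the non-smooth volume form by treating the local Jacobian $\sqrt{\det g_i}$ as a multiplier on Sobolev spaces.

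The preparatory step is to show that $\sqrt{\det g_i}$ and $1/\sqrt{\det g_i}$ both lie in $W^{\alpha,\gamma}$ locally. Since $\alpha\gamma>n$, Corollary~\ref{c:alg}(a) makes $W^{\alpha,\gamma}$ into a Banach algebra, so $\det g_i\in W^{\alpha,\gamma}$. Applying Lemma~\ref{l:nem} to the smooth positive functions $t\mapsto\sqrt{t}$ and $t\mapsto1/\sqrt{t}$ on the range of $\det g_i$ (which is bounded away from $0$ and $\infty$ on the relevant compact set) yields $\sqrt{\det g_i},\,1/\sqrt{\det g_i}\in W^{\alpha,\gamma}$. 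Combining this with Corollary~\ref{c:alg}(a), multiplication by $\sqrt{\det g_i}^{\pm 1}$ is a bounded operator on $W^{\sigma,q}(\R^n_+)$ precisely when $\sigma\in[-\alpha,\alpha]$ and $\sigma-\tfrac{n}{q}\in[-n-\alpha+\tfrac{n}{\gamma},\,\alpha-\tfrac{n}{\gamma}]$, which is exactly the range condition imposed in (b) (and, once one notes that for $s\geqs 0$ the condition in (a) reduces to $W^{\alpha,\gamma}\hookrightarrow W^{s,p}$, the range condition in (a)).

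For part (a), given $f\in C^{\infty}(M)$, the distribution $T_f(\phi)=\int_{M}f\phi\,dV_g$ has local representation $\varphi_i^{*}(\chi_i T_f)=\tilde{\chi}_i\tilde{f}\sqrt{\det g_i}$, which lies in $W^{s,p}(\R^n_+)$ by the multiplier estimate above, so $T_f\in W^{s,p}(M)$, and the map $f\mapsto T_f$ is injective because $dV_g$ is strictly positive. For density, given $u\in W^{s,p}(M)$ I localize by the partition of unity, and in each chart I divide the localization by $\sqrt{\det g_i}$ (bounded by the multiplier property), approximate the resulting $W^{s,p}(\R^n_+)$ function by $C^{\infty}_0(\R^n_+)$ functions via standard mollification (valid for $p\in(1,\infty)$), then multiply back by $\sqrt{\det g_i}$ and patch through the partition of unity to obtain smooth functions $f_n\in C^{\infty}(M)$ with $T_{f_n}\to u$ in $W^{s,p}(M)$.

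For part (b), the $L^2$-pairing $\int_M fg\,dV_g$ on $C^{\infty}_0(M)$ localizes to a finite sum of $\R^n_+$ integrals $\int_{\R^n_+}\tilde{f}\,(\tilde{g}\sqrt{\det g_i})\,dy$. Applying the multiplier estimate to $\tilde{g}\sqrt{\det g_i}\in W^{-s,p'}(\R^n_+)$ and invoking the classical duality $[\mathring{W}^{s,p}(\R^n_+)]^{*}\cong W^{-s,p'}(\R^n_+)$ with $L^2$-pairing, the pairing is jointly continuous on $\mathring{W}^{s,p}(M)\otimes\mathring{W}^{-s,p'}(M)$; uniqueness of the extension follows from density of $C^{\infty}_0(M)$ in both factors. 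For the topological isomorphism, given $T\in[\mathring{W}^{s,p}(M)]^{*}$ I use the partition of unity to write $T=\sum_i T_i$ with each $T_i$ acting on functions supported in $U_i$; pushing to $\R^n_+$ and applying the classical duality produces $v_i\in W^{-s,p'}(\R^n_+)$, and multiplying by $1/\sqrt{\det g_i}$ (bounded by the multiplier property) yields local representatives that patch, via part (a), into an element of $\mathring{W}^{-s,p'}(M)$ representing $T$; boundedness of both directions of the correspondence gives the topological isomorphism. The main technical obstacle throughout is justifying that the non-smooth Jacobian acts as a bilateral multiplier on negative-index spaces — this is precisely why the symmetric range condition $s-\tfrac{n}{p}\in[-n-\alpha+\tfrac{n}{\gamma},\,\alpha-\tfrac{n}{\gamma}]$ in (b) is sharp, and why one must invoke the full strength of Corollary~\ref{c:alg}(a) rather than simpler algebra properties.
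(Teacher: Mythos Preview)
The paper does not supply its own proof of this lemma; it is stated with a citation to \cite{HNT07b}, where the result is established. So there is no in-paper argument to compare against.

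Your sketch captures the essential mechanism one expects in such a proof: the rough metric enters the $L^2$-identification only through the volume density $\sqrt{\det g}$, and the key technical point is that $\sqrt{\det g}^{\pm1}\in W^{\alpha,\gamma}$ act as multipliers on the relevant Sobolev scales via Corollary~\ref{c:alg}(a). The reduction to the Euclidean half-space duality via charts and partition of unity is the natural route, and your observation that the symmetric range condition in (b) is exactly what Corollary~\ref{c:alg}(a) demands for the multiplier to act on both $W^{s,p}$ and $W^{-s,p'}$ is the right structural explanation for the hypotheses.

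One gap to flag in your density argument for (a): you invoke boundedness of multiplication by $1/\sqrt{\det g_i}$ on $W^{s,p}$, but the hypothesis of (a) imposes only the \emph{upper} constraints $s\leqs\alpha$ and $s-\tfrac{n}{p}\leqs\alpha-\tfrac{n}{\gamma}$ from Corollary~\ref{c:alg}(a), not the lower bounds $s\geqs-\alpha$ and $s-\tfrac{n}{p}\geqs-n-\alpha+\tfrac{n}{\gamma}$. So as written your argument does not cover $s$ arbitrarily negative. This is repairable---for negative $s$ one can argue density by duality, since $\int_M f\phi\,dV_g=0$ for all smooth $f$ forces $\phi=0$---but it is a genuine omission in the sketch. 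Similarly, the parenthetical ``once one notes that for $s\geqs0$\ldots'' suggests you are aware the argument splits into cases, but the negative-$s$ branch is not actually written out.
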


\section{The Laplace-Beltrami operator}
\label{sec:laplace-beltrami}

In this appendix we will state {\em a priori} estimates for  the Laplace-Beltrami operator in some Sobolev spaces.
Let ${M}$ be an $n$-dimensional smooth compact manifold with boundary.
Then for $m\in\N$, $\alpha\in\R$, and $\gamma\in[1,\infty]$,
we define ${\mathfrak{D}}_m^{\alpha,\gamma}({M})$ to be the class of differential operators $A$
that can formally be written in local coordinates as
\begin{equation*}
A=\sum_{|\nu|\leqs m} a^\nu\partial_\nu
\qquad\textrm{with }
a^\nu\in W^{\alpha-m+|\nu|,\gamma}(\R^n_+),\quad|\nu|\leqs m.
\end{equation*}

Now, let the manifold ${M}$ be equipped with a Riemannian metric in $W^{\alpha,\gamma}$, where the exponents satisfy the condition $\alpha\gamma>n$.
Then with $\nabla_a$ being the Levi-Civita connection corresponding to the metric,
the {\em Laplace-Beltrami operator} $\Delta$ is defined by
$\Delta\phi=\nabla_a\nabla^a\phi$ for smooth functions $\phi$.
One can easily verify that the Laplace-Beltrami operator is in the class $\mathfrak{D}^{\alpha,\gamma}_2({M})$.

\begin{lemma}\label{l:bdd-operator}
Let $A$ be a differential operator of class ${\mathfrak{D}}_m^{\alpha,\gamma}({M})$.
Then, $A$ can be extended to a bounded linear map
\begin{equation*}
A:W^{s,q}({M})\to W^{\sigma,q}({M}),
\end{equation*}
for $q\in(1,\infty)$, $s\geqs m-\alpha$, and $\sigma$ satisfying
\begin{equation*}
\begin{split}
\sigma\leqs\min\{s,\alpha\}-m,&
\qquad
\sigma< s-m+\alpha-\frac{n}\gamma,\\
\sigma-\frac{n}q\leqs \alpha-\frac{n}\gamma-m,&
\quad\textrm{and}\quad
s-\frac{n}q\geqs m-n-\alpha+\frac{n}\gamma.
\end{split}
\end{equation*}
\end{lemma}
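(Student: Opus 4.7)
The plan is to reduce to a local statement via a partition of unity and then bound each term in the local expression $A\phi = \sum_{|\nu|\leqs m} a^\nu\partial_\nu\phi$ using the pointwise multiplication theorem (Theorem~\ref{t:sob-hol}). The essence of the proof is book-keeping of exponents: the conditions on $s,\sigma,q,\alpha,\gamma$ listed in the lemma are exactly what is needed for the multiplication theorem to apply for \emph{every} multi-index $|\nu|\leqs m$.

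First I would fix a smooth partition of unity $\{\chi_i\}$ subordinate to the coordinate cover $\{(U_i,\varphi_i)\}$ used to define $W^{s,q}(M)$, and use the fact that pull-backs of $\mathfrak{D}_m^{\alpha,\gamma}$-operators by smooth charts remain of class $\mathfrak{D}_m^{\alpha,\gamma}$ on $\R^n_+$ (the coefficients transform via smooth Jacobian data, which is harmless by Corollary~\ref{c:alg}). This reduces the claim to: for $a\in W^{\alpha-m+|\nu|,\gamma}(\R^n_+)$ and $\phi\in W^{s,q}(\R^n_+)$, the product $a\,\partial_\nu\phi$ lies in $W^{\sigma,q}(\R^n_+)$ with the appropriate estimate. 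Density of $C^\infty$ in $W^{s,q}$ (Lemma~\ref{l:rough-L2}(a), applicable because $\alpha\gamma>n$ and $s\geqs m-\alpha$) then lets one extend $A$ by continuity from smooth functions.

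The core estimate is
\begin{equation*}
\|a\,\partial_\nu\phi\|_{\sigma,q}\lesssim\|a\|_{\alpha-m+|\nu|,\gamma}\,\|\partial_\nu\phi\|_{s-|\nu|,q}\lesssim\|a\|_{\alpha-m+|\nu|,\gamma}\,\|\phi\|_{s,q},
\end{equation*}
where the first inequality applies Theorem~\ref{t:sob-hol} with indices $s_1=\alpha-m+|\nu|$, $p_1=\gamma$, $s_2=s-|\nu|$, $p_2=q$, target $(\sigma,q)$. I would now verify each hypothesis of Theorem~\ref{t:sob-hol} in turn, taking the \emph{worst} $|\nu|\in\{0,\ldots,m\}$ for each condition:
\begin{itemize}
\item $s_1\geqs\sigma$ (worst at $|\nu|=0$) becomes $\alpha-m\geqs\sigma$, i.e.\ $\sigma\leqs\alpha-m$.
\item $s_2\geqs\sigma$ (worst at $|\nu|=m$) becomes $s-m\geqs\sigma$, i.e.\ $\sigma\leqs s-m$.
\item $s_1+s_2\geqs0$ reads $s+\alpha-m\geqs0$, i.e.\ $s\geqs m-\alpha$.
\item $s_2-\sigma\geqs n(\frac1q-\frac1q)=0$ is covered above; $s_1-\sigma\geqs n(\frac1\gamma-\frac1q)$ (worst at $|\nu|=0$) becomes $\sigma-\frac{n}q\leqs\alpha-m-\frac{n}\gamma$.
\item The strict condition $s_1+s_2-\sigma>n(\frac1\gamma+\frac1q-\frac1q)=\frac{n}\gamma$ gives $\sigma<s-m+\alpha-\frac{n}\gamma$.
\item Finally, when $\min(s_1,s_2)<0$ the supplementary constraint $s_1+s_2\geqs n(\frac1\gamma+\frac1q-1)$ (worst at $|\nu|=0$) gives $s-\frac{n}q\geqs m-n-\alpha+\frac{n}\gamma$.
\end{itemize}
These are precisely the hypotheses of the lemma, so the multiplication theorem applies to every term, and summing over $|\nu|\leqs m$ yields the bound $\|A\phi\|_{\sigma,q}\lesssim\|\phi\|_{s,q}$.

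The one step requiring care is handling the endpoint cases, specifically when $\sigma\in\N_0$ or when some $s_i$ is an integer, where the strict/non-strict dichotomy in Theorem~\ref{t:sob-hol} flips. I expect this to be the main technical nuisance: one has to check that the strict hypothesis $\sigma<s-m+\alpha-\frac{n}\gamma$ stated in the lemma always compensates for any loss, while the non-strict hypotheses remain non-strict. A second minor but nontrivial point is ensuring the local definition is coordinate-independent, which follows because any change of chart composes $A$ with smooth diffeomorphisms acting on the coefficients by pull-back, preserving the class $\mathfrak{D}_m^{\alpha,\gamma}$ (again using Corollary~\ref{c:alg}). Everything else is routine once the exponent arithmetic is set up.
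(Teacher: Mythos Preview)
Your approach is correct and is exactly what the paper does: the paper's entire proof is the single sentence ``This is a straightforward application of Theorem~\ref{t:sob-hol}.'' You have filled in the exponent bookkeeping that the paper omits, and your verification that the worst case over $|\nu|\in\{0,\ldots,m\}$ of each hypothesis of Theorem~\ref{t:sob-hol} reproduces precisely the conditions in the lemma is accurate (one tiny imprecision: $s_1+s_2=\alpha-m+s$ is actually independent of $|\nu|$, so there is no ``worst'' case for the supplementary constraint---but this changes nothing).
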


\begin{proof}
This is a straightforward application of Theorem \ref{t:sob-hol}.
\end{proof}


Let us record the following integration-by-parts result.
\begin{lemma}
\label{l:green}
Let $s\in[1-\alpha,1+\alpha]$, and $s-\frac{n}p\in(1-n-\alpha+\frac{n}\gamma,1+\alpha-\frac{n}\gamma]$.
Then, for $u\in W^{s,p}({M})$ and $v\in W^{2-s,p'}({M})$, we have 
\begin{equation}\label{e:half-green}
\langle\Delta u,v\rangle
=
-\langle\nabla u,\nabla v\rangle
+\langle\Tr\partial_\nu u,\Tr v\rangle_{\tiN}
+\langle\Tr\partial_\nu u,\Tr v\rangle_{\tiD}.
\end{equation}
\end{lemma}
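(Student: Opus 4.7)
The plan is a density-plus-continuity argument. For $u,v \in C^\infty(M)$ and a smooth metric, the identity reduces to the classical Green's formula, obtained by applying the divergence theorem to the vector field $v\nabla u$ with $\partial M = \Sigma_N \cup \Sigma_D$. The task is then to verify that each of the four terms defines a continuous bilinear form on $W^{s,p}(M)\times W^{2-s,p'}(M)$ which also depends continuously on the metric in $W^{\alpha,\gamma}$, and to invoke the density of $C^\infty(M)$ supplied by Lemma \ref{l:rough-L2}(a), together with the density of smooth metrics in $W^{\alpha,\gamma}$, to pass to the limit in the smooth identity.

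For continuity of the left-hand side, I would apply Lemma \ref{l:bdd-operator} with $m=2$ to obtain $\Delta : W^{s,p}(M) \to W^{s-2,p}(M)$ as a bounded map under the stated index restrictions, and then pair with $v$ through the duality $W^{s-2,p}(M) \cong [\mathring{W}^{2-s,p'}(M)]^*$ supplied by Lemma \ref{l:rough-L2}(b). Continuity of $\langle \nabla u, \nabla v\rangle$ reduces to a triple-product estimate on $g^{ij}(\partial_i u)(\partial_j v)$ integrated against the $W^{\alpha,\gamma}$ volume form, which is the direct content of Theorem \ref{t:sob-hol} applied to factors in $W^{\alpha,\gamma}$, $W^{s-1,p}$, and $W^{1-s,p'}$; the range restrictions $s \in [1-\alpha,1+\alpha]$ and $s - \tfrac{n}{p} \in (1-n-\alpha+\tfrac{n}{\gamma},\,1+\alpha-\tfrac{n}{\gamma}]$ are designed precisely to satisfy those hypotheses. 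The two boundary pairings are then interpreted, separately on $\Sigma_N$ and on $\Sigma_D$, through the $L^2$-extended duality between $W^{s-1-\frac{1}{p},p}$ and $W^{1-s+\frac{1}{p},p'}$, which is the boundary analogue of Lemma \ref{l:rough-L2}(b) on the $(n-1)$-dimensional manifold $\Sigma$.

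Once all four pieces are confirmed as bounded bilinear forms, the identity holds on the dense subclass of smooth $u$, $v$, and smooth $g$ by classical calculus, and extends to $W^{s,p}(M)\times W^{2-s,p'}(M)$ and the $W^{\alpha,\gamma}$ metric class by continuity. The hard part will be the boundary term in the low-regularity regime where $s-\tfrac{1}{p} < 1$, in which case $\partial_\nu u$ is not a classical boundary distribution and the symbol $\Tr\partial_\nu u$ has to be defined intrinsically. The cleanest route is to observe that for fixed $u \in W^{s,p}(M)$, the functional $v \mapsto \langle \Delta u, v\rangle + \langle \nabla u, \nabla v\rangle$ vanishes whenever $v \in \mathring{W}^{2-s,p'}(M)$ (by approximation and the smooth case), so it descends to a bounded linear functional on the trace space $W^{1-s+\frac{1}{p},p'}(\Sigma)$; by duality this functional \emph{is} $\Tr\partial_\nu u$. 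With this intrinsic definition in hand, the identity of the lemma becomes essentially tautological, and the density argument serves only to confirm agreement with the classical normal-derivative trace whenever the latter exists.
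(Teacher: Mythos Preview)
The paper does not actually supply a proof of this lemma; it is simply ``recorded'' between Lemma~\ref{l:bdd-operator} and Lemma~\ref{l:ell-est-loc} without argument. Your density-plus-continuity outline is the standard route and is correct in its essentials, and your final paragraph identifies exactly the delicate point: when $s<1+\tfrac1p$, the symbol $\Tr\partial_\nu u$ has no classical meaning and must be \emph{defined} by the formula, so that the identity becomes partly definitional rather than a theorem to be proved. That is the honest state of affairs, and it matches how such results are handled in the Lions--Magenes tradition.

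One point deserves a little more care than you give it. You invoke Lemma~\ref{l:rough-L2}(b) for the pairing $\langle\Delta u,v\rangle$, but that lemma only furnishes a duality between $\mathring{W}^{s-2,p}$ and $\mathring{W}^{2-s,p'}$, whereas here $v\in W^{2-s,p'}(M)$ need not have vanishing trace. So $\langle\Delta u,v\rangle$ is not a priori defined by that duality alone when $s<2$. The resolution is implicit in your last paragraph: one first shows $\langle\nabla u,\nabla v\rangle$ is a bounded bilinear form on $W^{s,p}\times W^{2-s,p'}$ (this is where the index restrictions enter via Theorem~\ref{t:sob-hol}), observes that it agrees with $-\langle\Delta u,v\rangle$ for $v\in\mathring{W}^{2-s,p'}$, and then \emph{defines} the boundary functional as the difference. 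With that reading, the argument is sound; just be explicit that the left-hand side of \eqref{e:half-green} is being extended by continuity from smooth $u$ rather than read off from an abstract duality.
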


We now consider local {\em a priori} estimates for the Laplace-Beltrami operator.
In the following $\Sigma:=\partial{M}$ denotes the boundary of ${M}$.
For $V\subset{M}$ or $V\subseteq\Sigma$, the $W^{s,p}(V)$-norm is denoted by $\|\cdot\|_{s,p,V}$. 
Recall that if $V={M}$ we simply write $\|\cdot\|_{s,p}$.

\begin{lemma}\label{l:ell-est-loc}
Let $\alpha-\frac{n}\gamma>\max\{0,1-\frac{n}2\}$.
Let $q\in(1,\infty)$, $s\in(2-\alpha,\alpha]$, and $s-\frac{n}q\in(2-n-\alpha+\frac{n}\gamma,\alpha-\frac{n}\gamma]$.
Then 
\begin{itemize}
\item[(a)]
for any  $y\in{M}\setminus\Sigma$, there exist a constant $c>0$ and open neighborhoods $U\subset V\subset{M}\setminus\Sigma$ of $y$ such that
\begin{equation}\label{e:ell-est-loc-int}
c\|\chi u\|_{s,q} \leqs 
\|\chi \Delta u\|_{s-2,q}+\|u\|_{s-1,q,V},
\end{equation}
for any $u\in W^{s,q}({M})$ and $\chi\in C^{\infty}_0(U)$ with $\chi\geqs 0$.
\item[(b)]
for any  $y\in\Sigma$, there exist a constant $c>0$ and open neighborhoods $U\subset V\subset{M}$ of $y$ such that
\begin{equation}\label{e:ell-est-loc-dir}
c\|\chi u\|_{s,q} \leqs 
\|\chi \Delta u\|_{s-2,q}+\|\chi\Tr u\|_{s-\frac1q,q,\Sigma}+\|u\|_{s-1,q,V},
\end{equation}
for any $u\in W^{s,q}({M})$ and $\chi\in C^{\infty}(U)$ with $\mathrm{supp}\,\chi\subset U$ and $\chi\geqs 0$.
\item[(c)]
for any  $y\in\Sigma$, there exist a constant $c>0$ and open neighborhoods $U\subset V\subset{M}$ of $y$ such that
\begin{equation}\label{e:ell-est-loc-neu}
c\|\chi u\|_{s,q} \leqs 
\|\chi \Delta u\|_{s-2,q}+\|\chi\Tr\partial_{\nu}u\|_{s-1-\frac1q,q,\Sigma}+\|u\|_{s-1,q,V},
\end{equation}
for any $u\in W^{s,q}({M})$ and $\chi\in C^{\infty}(U)$ with $\mathrm{supp}\,\chi\subset U$ and $\chi\geqs 0$.
\end{itemize}
\end{lemma}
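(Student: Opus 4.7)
The plan is to reduce each estimate to a constant-coefficient model problem on $\R^n$ (for part (a)) or $\R^n_+$ (for parts (b) and (c)), via a local coordinate chart around $y$, and then absorb all the rough-coefficient corrections into the $\|u\|_{s-1,q,V}$ term by choosing the neighborhood small enough. Concretely, fix normal coordinates (or just smooth coordinates centered at $y$) in which the metric components $g^{ij}$ lie in $W^{\alpha,\gamma}$ with $\alpha\gamma>n$, so in particular $g^{ij}\in C^0$, and in the boundary case the boundary is flattened to $\{x_n=0\}$. Write $\Delta = g^{ij}\partial_i\partial_j + b^k\partial_k$, where the Christoffel-type coefficients $b^k$ lie in $W^{\alpha-1,\gamma}$, and freeze the principal part at $y$ to obtain a constant-coefficient operator $\Delta_0$; after a further linear change of coordinates we may assume $\Delta_0$ is the standard Laplacian.

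For the frozen operator I would invoke the classical a priori estimates for $-\Delta$ on $\R^n$ (interior case), on $\R^n_+$ with Dirichlet data on $\{x_n=0\}$, and on $\R^n_+$ with Neumann data, in the scale $W^{s,q}\to W^{s-2,q}\oplus W^{s-\frac1q,q}$ or $W^{s,q}\to W^{s-2,q}\oplus W^{s-1-\frac1q,q}$. These classical bounds give $\|\chi u\|_{s,q}\lesssim \|\Delta_0(\chi u)\|_{s-2,q}$ (plus the corresponding trace term in the boundary cases), since $\chi u$ has compact support in the chart and $\Delta_0$ is translation-invariant and elliptic. Next, I would expand
\begin{equation*}
\Delta_0(\chi u)=\chi\Delta u + [\Delta,\chi]u + (\Delta_0-\Delta)(\chi u),
\end{equation*}
where $[\Delta,\chi]$ is a first-order operator with $W^{\alpha-1,\gamma}$ coefficients supported in $U$, hence maps $W^{s,q}(V)\to W^{s-2,q}$ continuously by Lemma \ref{l:bdd-operator}, and by the Rellich theorem its contribution is controlled by $\|u\|_{s-1,q,V}$. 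Similarly the commutators appearing in the trace terms $\Tr(\chi u)$ and $\Tr\partial_\nu(\chi u)$ are of order one lower than the nominal one and can be absorbed into $\|u\|_{s-1,q,V}$.

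The main obstacle is the perturbation term $(\Delta-\Delta_0)(\chi u)$, which involves the rough factor $g^{ij}-g^{ij}(y)$ multiplying $\partial_i\partial_j(\chi u)\in W^{s-2,q}$. By Theorem \ref{t:sob-hol} (applied under the hypotheses on $s,q,\alpha,\gamma$), pointwise multiplication $W^{\alpha,\gamma}\otimes W^{s-2,q}\to W^{s-2,q}$ is bounded, with norm controlled by the $W^{\alpha,\gamma}$-norm of the multiplier on $U$. The delicate point is that this norm need not be small merely because $U$ is small, so I would use a density argument: given $\varepsilon>0$, split $g^{ij}=\tilde g^{ij}+r^{ij}$ where $\tilde g^{ij}$ is smooth and $\|r^{ij}\|_{\alpha,\gamma}<\varepsilon$. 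The smooth part contributes $(\tilde g^{ij}(y)-\tilde g^{ij})\partial_i\partial_j(\chi u)$, whose $W^{s-2,q}$-norm is bounded by $C\|\tilde g^{ij}-\tilde g^{ij}(y)\|_{C^{k}(U)}\|\chi u\|_{s,q}$ for some $k$, and this is small for $U$ small because $\tilde g^{ij}$ is smooth. The rough part contributes $\lesssim \varepsilon\|\chi u\|_{s,q}$. Choosing first $\varepsilon$ and then $U$ small enough, the whole perturbation absorbs a fraction of the left-hand side. The Dirichlet and Neumann boundary perturbations are handled by the same split applied to the appropriately scaled half-space estimates, using that the boundary values of $g^{ij}$ and its conormal derivative enter only through trace-type multiplier bounds, again controllable by the density/smallness argument.
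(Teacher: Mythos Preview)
Your overall strategy coincides with the paper's: freeze the principal part at $y$, apply the constant-coefficient elliptic estimate on $\R^n$ or $\R^n_+$, and absorb the perturbation and commutator terms.  The paper writes $\Delta=\overline\Delta+R+\lambda$ with $R=\sum_{ik}[g^{ik}-g^{ik}(y)]\partial_i\partial_k$ and treats the first-order piece $\lambda$ separately via the interpolation inequality $\|\cdot\|_{s-\delta,q}\leqs\varepsilon\|\cdot\|_{s,q}+C_\varepsilon\|\cdot\|_{s-2,q}$; it then applies the resulting estimate to $\chi u$ and notes that $[\Delta,\chi]\in\mathfrak{D}_1^{\alpha,\gamma}$ and $[B,\Tr\chi]\in\mathfrak{D}_0^{\alpha-1/\gamma,\gamma}$, exactly as you do.

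There is, however, a real gap in your smallness argument for the principal perturbation.  You assert that $\|(\tilde g^{ij}-\tilde g^{ij}(y))\partial_i\partial_j(\chi u)\|_{s-2,q}\leqs C\|\tilde g^{ij}-\tilde g^{ij}(y)\|_{C^{k}(U)}\|\chi u\|_{s,q}$ and that the $C^k(U)$-norm tends to zero as $U$ shrinks ``because $\tilde g^{ij}$ is smooth.''  For $k\geqs1$ this is false: the derivatives of $\tilde g^{ij}-\tilde g^{ij}(y)$ do not vanish at $y$, so only the $C^0$ part of the norm shrinks, and $L^\infty$-multiplication is not bounded on $W^{s-2,q}$ when $s<2$.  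The paper handles this point differently: since $\alpha>n/\gamma$ one has $g^{ik}\in C^{0,h}$ for some $h>0$, and the paper simply records $\|Ru\|_{s-2,q}\leqs Cr^h\|u\|_{s,q}$ on a half-ball of radius $r$.  The mechanism (left implicit) is that for any $0<h'<h$ the full $C^{0,h'}$-norm of $g^{ik}-g^{ik}(y)$ on the ball is $O(r^{h-h'})$, and $C^{0,h'}$ is a multiplier on $W^{s-2,q}$ for the range of $s$ in question.  Your density argument can be salvaged by the same device: once $\tilde g$ is smooth, bound $\tilde g^{ij}-\tilde g^{ij}(y)$ in a H\"older norm $C^{0,h'}$ with $h'<1$ (which \emph{does} go to zero on shrinking $U$), not in $C^k$.
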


\begin{proof}
We will only prove (c). In a local chart containing $y$, the Laplace-Beltrami operator takes the form
\begin{equation*}\textstyle
\Delta=\sum_{ik}g^{ik}\partial_i\partial_k+\sum_ig^i\partial_i,
\end{equation*}
where $g^{ik}\in W^{\alpha,\gamma}(\R^n_+)$ is the metric and $g^{i}\in W^{\alpha-1,\gamma}(\R^n_+)$.
We make the decomposition $\Delta=\overline\Delta+R+\lambda$, where
\begin{equation*}\textstyle
\overline\Delta=\sum_{ik}g^{ik}(y)\partial_i\partial_k,
\qquad
R=\sum_{ik}[g^{ik}-g^{ik}(y)]\partial_i\partial_k.
\end{equation*}
Obviously $\lambda=\Delta-\overline\Delta-R$ is the lower order term.
Likewise, the boundary operator reads in local coordinates
\begin{equation*}\textstyle
B:=\gamma_{\tiN}\partial_\nu=\sum_i\gamma_ng^{in}\partial_i,
\end{equation*}
where $\gamma_n$ is the extension of $\gamma_n\phi=\phi|_{x_n=0}$.
We introduce the decomposition
\begin{equation*}\textstyle
B=\overline B+\varrho,
\qquad\textrm{where }
\overline B=\sum_i\gamma_ng^{in}(y)\partial_i.
\end{equation*}
Let $U=\{x\in\R^n_+:|x-y|<r\}$ be the half ball of radius $r$ centered at $y$.
From the theory of constant coefficient elliptic operators, we infer the existence of a constant $c>0$ such that for any $u\in W^{s,q}(\R^n_+)$ with $\mathrm{supp}\,u\subset U$,
\begin{equation*}
\begin{split}
c\|u\|_{s,q} 
&\leqs
\|\overline\Delta u\|_{s-2,q}+\|u\|_{s-2,q}+\|\overline B u\|_{s-1-\frac1q,q,\partial U}\\
&\leqs
\|\Delta u\|_{s-2,q}+\|Ru\|_{s-2,q}+\|\lambda u\|_{s-2,q}+\|u\|_{s-2,q}\\
&\quad
+\|B u\|_{s-1-\frac1q,q,\partial K}+\|\varrho u\|_{s-1-\frac1q,q,\partial U}.
\end{split}
\end{equation*}
Since $\alpha>\frac{n}\gamma$, without loss of generality we can assume that $g^{ik}\in C^{0,h}$ for some $h>0$, so
\begin{equation*}
\|Ru\|_{s-2,q}\leqs Cr^h\|u\|_{s,q},
\qquad\textrm{and}\qquad
\|\varrho u\|_{s-1-\frac1q,q,\partial U}\leqs Cr^h\|u\|_{s,q},
\end{equation*}
where $C$ is a constant depending only on the metric.
By choosing $r$ so small that $Cr^h\leqs \frac{c}4$, we have
\begin{equation*}
\begin{split}\textstyle
\frac{c}2\|u\|_{s,q} 
&\leqs
\|\Delta u\|_{s-2,q}+\|\lambda u\|_{s-2,q}+\|B u\|_{s-1-\frac1q,q,\partial U}+\|u\|_{s-2,q}.
\end{split}
\end{equation*}

Now we will work with the lower order term.
Choose $\delta\in(0,\alpha-\frac{n}\gamma)$ such that $\delta\leqs\min\{1,s+\alpha-2,s-\frac{n}q+\alpha-\frac{n}\gamma+n-2\}$.
We have $\lambda\in{\mathfrak{D}}^{\alpha-1,\gamma}_{1}({M})$, so by Lemma \ref{l:bdd-operator}, $\lambda:W^{s-\delta,\gamma}\to W^{s-2,\gamma}$ is bounded.
Then using a well-known interpolation inequality, we get
\begin{equation*}
\|\lambda u\|_{s-2,q}
\leqs 
C\|u\|_{s-\delta,q}
\leqs 
C\varepsilon\|u\|_{s,q}
+
C'\varepsilon^{-(2-\delta)/\delta}\|u\|_{s-2,q},
\end{equation*}
for any $\varepsilon>0$.
Choosing $\varepsilon>0$ sufficiently small, we conclude that
\begin{equation*}
c\|u\|_{s,q} \leqs 
\|\Delta u\|_{s-2,q}+\|B u\|_{s-1-\frac1q,q,\partial U\cap\R^n_+}+\|u\|_{s-2,q},
\end{equation*}
for $u\in W^{s,q}(\R^n_+)$ with $\mathrm{supp}\,u\subset U$.
We apply this inequality to $\chi u$, and then observing that $[\Delta,\chi]$ is in ${\mathfrak{D}}^{\alpha,\gamma}_1({M})$ and $[B,\Tr\chi]$ is in ${\mathfrak{D}}^{\alpha-\frac1\gamma,\gamma}_0(\Sigma)$, we obtain \eqref{e:ell-est-loc-neu}.
\end{proof}


Now let the boundary $\Sigma$ be decomposed as $\Sigma=\Sigma_{\tiD}\cup\Sigma_{\tiN}$
with $\Sigma_{\tiD}\cap\Sigma_{\tiN}=\varnothing$.
We can easily globalize the above result as follows.

\begin{corollary}\label{C:ell-est}
Let the conditions of Lemma \ref{l:ell-est-loc} hold.
Then there exists a constant $c>0$ such that for all $u\in W^{s,q}({M})$
\begin{equation}\label{e:ell-est}
c \|u\|_{s,q} \leqs
\|\Delta u\|_{s-2,q}+\|\Tr_{\tiN}\partial_{\nu}u\|_{s-1-\frac1q,q,\tiN}+\|\Tr_{\tiD}u\|_{s-\frac1q,q,\tiD}+\|u\|_{s-2,q}.
\end{equation}
\end{corollary}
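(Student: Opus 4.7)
The strategy is to globalize the three local estimates of Lemma \ref{l:ell-est-loc} via a finite open cover and a partition of unity, and then convert the remaining $\|u\|_{s-1,q}$ term into $\|u\|_{s-2,q}$ by interpolation.

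First I would use the compactness of $M$ together with the separation hypothesis $\overline{\Sigma}_D \cap \overline{\Sigma}_N = \varnothing$ from \eqref{b-DN} to select a finite open cover $\{U_i\}_{i=1}^N$ of $M$ where each $U_i$ is of exactly one of the three types furnished by Lemma \ref{l:ell-est-loc}: interior to $M\setminus\Sigma$, or a boundary neighborhood meeting $\Sigma_D$ only, or a boundary neighborhood meeting $\Sigma_N$ only. The disjointness of $\overline{\Sigma}_D$ and $\overline{\Sigma}_N$ makes such a segregated cover possible. Then I would take a smooth partition of unity $\{\chi_i\}$ subordinate to $\{U_i\}$, with $\chi_i \geqs 0$, $\mathrm{supp}\,\chi_i\subset U_i$, and $\sum_i \chi_i \equiv 1$, adjusted so that each $\chi_i$ satisfies the compact-support requirements stated in the corresponding case of Lemma \ref{l:ell-est-loc}.

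Next, applying the appropriate local estimate to $\chi_i u$ for each $i$, I would sum over $i$ and use $\|u\|_{s,q} \leqs C\sum_i \|\chi_i u\|_{s,q}$. The right-hand side of the summed estimate contains $\|\chi_i \Delta u\|_{s-2,q}$ (or the analogous boundary terms) plus the lower-order piece $\|u\|_{s-1,q,V_i}$. Using that $[\Delta,\chi_i]$ is a first-order operator of class $\mathfrak{D}^{\alpha,\gamma}_1(M)$ and $[\Tr_N \partial_\nu, \chi_i]$ is zeroth-order on $\Sigma$, together with Lemma \ref{l:bdd-operator}, these commutators are dominated by $C\|u\|_{s-1,q}$. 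Consequently,
\begin{equation*}
\|u\|_{s,q} \leqs C\bigl(\|\Delta u\|_{s-2,q} + \|\Tr_N \partial_\nu u\|_{s-1-\frac1q,q,N} + \|\Tr_D u\|_{s-\frac1q,q,D} + \|u\|_{s-1,q}\bigr).
\end{equation*}

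Finally, I would invoke the standard interpolation inequality on $M$, namely $\|u\|_{s-1,q}\leqs \varepsilon\|u\|_{s,q}+C_\varepsilon\|u\|_{s-2,q}$ for any $\varepsilon>0$, and pick $\varepsilon$ small enough to absorb $\varepsilon \|u\|_{s,q}$ into the left-hand side. This yields \eqref{e:ell-est}. The work is essentially bookkeeping once the local estimates are in hand; the only mildly delicate point is arranging the cover to respect the split $\Sigma=\Sigma_D\cup\Sigma_N$ (so that no single $U_i$ touches both parts of the boundary) and verifying that the commutator terms truly sit one derivative below the leading order under the regularity hypotheses on the metric inherited from Lemma \ref{l:ell-est-loc}.
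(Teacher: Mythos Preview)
Your proposal is correct and follows essentially the same approach as the paper: finite cover by the local neighborhoods of Lemma \ref{l:ell-est-loc}, partition of unity to obtain the global estimate with $\|u\|_{s-1,q}$ on the right, then interpolation to reduce to $\|u\|_{s-2,q}$. One minor remark: since the local estimates in Lemma \ref{l:ell-est-loc} are already stated with $\|\chi\Delta u\|_{s-2,q}$ (rather than $\|\Delta(\chi u)\|_{s-2,q}$), the commutator $[\Delta,\chi_i]$ has already been absorbed there, so at the globalization step you only need boundedness of multiplication by smooth $\chi_i$ on $W^{s-2,q}$, not a separate commutator estimate.
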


\begin{proof}
We first cover ${M}$ by open neighborhoods $U$ by applying Lemma \ref{l:ell-est-loc} to every point $y\in{M}$,
and then choose a finite subcover of the resulting cover.
Then a partition of unity argument gives \eqref{e:ell-est} with the term $\|u\|_{s-2,q}$ replaced by $\|u\|_{s-1,q}$,
and finally one can use an interpolation inequality to get the conclusion.
\end{proof}

Let us recall the following well-known results from functional analysis.
For a proof, we refer to page 181 of \cite{Wlok92}.

\begin{lemma}
\label{l:semi-fred}
Let $X$ and $Y$ be Banach spaces with compact embedding $X\hookrightarrow Y$,
and let $A:X\rightarrow Y$ be a continuous linear map.
Then the followings are equivalent.
\begin{itemize}
\item[(a)] 
There exists $c>0$ such that
$c\|u\|_{\tiX}\leqs\|Au\|_{\tiY}+\|u\|_{\tiY}$ for all $u\in X$.
\item[(b)]
The range of $A$ is closed in $Y$ and the kernel of $A$ is finite dimensional.
\end{itemize}
\end{lemma}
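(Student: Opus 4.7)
The plan is to establish the two implications separately, with the compact embedding $X\hookrightarrow Y$ serving as the central tool together with Riesz's lemma and the open mapping theorem.

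For the direction (a) $\Rightarrow$ (b), I would first show that $\ker A$ is finite-dimensional. On the kernel, the estimate (a) reduces to $c\|u\|_X \leqs \|u\|_Y$, so any $Y$-bounded sequence in $\ker A$ is $X$-bounded; by compactness of $X\hookrightarrow Y$ and the same reverse inequality, the closed unit ball of $\ker A$ (in the $X$-norm) is relatively compact, forcing $\dim\ker A<\infty$ by Riesz's lemma. Next, to show that $\mathrm{Range}(A)$ is closed, I would pick a sequence $Au_n\to y$ in $Y$ and, using that $\ker A$ is finite-dimensional and hence complemented, replace $u_n$ by its component in a fixed topological complement $Z$ of $\ker A$ without changing $Au_n$. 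I would then argue that $\|u_n\|_X$ must be bounded: otherwise, set $v_n=u_n/\|u_n\|_X$, so $Av_n\to0$ and $\|v_n\|_X=1$; the compact embedding extracts a $Y$-convergent subsequence, and applying (a) to differences $v_n-v_m$ upgrades this to $X$-convergence to some $v\in Z$ with $\|v\|_X=1$ and $Av=0$, contradicting $\ker A\cap Z=\{0\}$. Once $\|u_n\|_X$ is bounded, the same compactness-plus-(a) argument produces an $X$-convergent subsequence with limit $u$ satisfying $Au=y$.

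For (b) $\Rightarrow$ (a), I would argue by contradiction. If (a) fails, I can find $u_n\in X$ with $\|u_n\|_X=1$ and $\|Au_n\|_Y+\|u_n\|_Y\to0$. Since $\mathrm{Range}(A)$ is closed, the induced operator $\bar A:X/\ker A\to \mathrm{Range}(A)$ is a continuous bijection between Banach spaces, so the open mapping theorem yields a constant $C$ with
\begin{equation*}
\mathrm{dist}_X(u_n,\ker A)\leqs C\|Au_n\|_Y\longrightarrow 0.
\end{equation*}
Choose $k_n\in\ker A$ with $\|u_n-k_n\|_X\to0$; then $\|k_n\|_X$ is bounded, and since $\ker A$ is finite-dimensional, a subsequence of $\{k_n\}$ converges in $X$, hence so does a subsequence of $\{u_n\}$, to some limit $u$ with $\|u\|_X=1$. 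On the other hand $\|u_n\|_Y\to 0$ forces $u=0$, a contradiction.

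The main obstacle will be the closed-range step in (a) $\Rightarrow$ (b): I need to handle the complementability of $\ker A$ in $X$ and, more importantly, the bootstrap that converts $Y$-convergence of a subsequence into $X$-convergence via the a priori estimate applied to differences. The rest of the argument is largely bookkeeping with Riesz's lemma and the open mapping theorem.
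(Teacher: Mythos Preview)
Your argument is correct and follows the standard route for this Peetre-type lemma: Riesz's lemma for the finite kernel, the bootstrap ``compactness in $Y$ plus the a~priori estimate on differences gives Cauchy in $X$'' for the closed range, and the open mapping theorem for the converse. One minor remark: in your (b) $\Rightarrow$ (a) direction you never actually invoke the \emph{compactness} of the embedding, only its continuity; that is fine, since compactness is only needed for (a) $\Rightarrow$ (b).

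There is nothing to compare against, however: the paper does not prove this lemma at all but simply cites page~181 of Wloka's book. Your write-up is therefore strictly more than what the paper supplies, and it is sound.
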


As an immediate consequence, we obtain the following result.

\begin{lemma}
\label{l:ell-semi-fred}
Let $p\in(1,\infty)$ and $s\in(\frac{n}p,\infty)\cap[1,\infty)$, and
let ${M}$ be an $n$-dimensional, smooth, compact manifold with boundary, equipped with a Riemannian metric in $W^{s,p}$.
In addition, let $\alpha\in W^{s-2,p}({M})$ and $\beta\in W^{s-1-\frac1p,p}(\Sigma_{\tiN})$.
Then, the operator 
$$
L:W^{s,p}({M})\to W^{s-2,p}({M})\otimes W^{s-1-\frac1p,p}(\Sigma_{\tiN})\otimes W^{s-\frac1p,p}(\Sigma_{\tiD}),
$$
defined by $L:u\mapsto(-\Delta u+\alpha u,\Tr_{\tiN}\partial_{\nu}u + \beta\Tr_{\tiN}u,\Tr_{\tiD}u)$
is Fredholm with index zero.

Moreover, if there is a constant $c>0$ such that
$$
\langle\nabla u,\nabla u\rangle + \langle\alpha u,u\rangle + \langle\beta u,u\rangle_{\tiN}
\geqs
c \langle u, u\rangle,
$$
then $L$ is invertible.
\end{lemma}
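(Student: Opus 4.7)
The plan is to establish boundedness, Fredholmness with index zero, and the ``moreover'' part in sequence. Boundedness of $L$ is a direct consequence of the appendix: Lemma \ref{l:bdd-operator} applied to the Laplace-Beltrami operator (which lies in $\mathfrak{D}_2^{s,p}({M})$) yields $-\Delta : W^{s,p}({M}) \to W^{s-2,p}({M})$ continuously; standard trace theory handles $\Tr_{\tiN}$, $\Tr_{\tiD}$, and $\Tr_{\tiN}\partial_{\nu}$; and pointwise multiplication by $\alpha \in W^{s-2,p}({M})$, respectively by $\beta \in W^{s-1-\frac1p,p}(\Sigma_{\tiN})$ on the boundary trace space, extends continuously to the required spaces via Theorem \ref{t:sob-hol} (exploiting $s>\frac{n}p$).

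For Fredholmness, the first goal is to derive the semi-Fredholm estimate
$$\|u\|_{s,p} \leqs C\bigl(\|Lu\|_Y + \|u\|_{s-2,p}\bigr),$$
where $Y$ denotes the target product space. This is obtained by rewriting the Schauder-type estimate of Corollary \ref{C:ell-est} in terms of $L$ and absorbing the zeroth-order perturbations coming from $\alpha u$ and $\beta \Tr_{\tiN}u$: since $\alpha \in W^{s-2,p}$, Theorem \ref{t:sob-hol} gives boundedness of $u \mapsto \alpha u$ from $W^{\tilde s,p}$ into $W^{s-2,p}$ for some $\tilde s \in (\tfrac{n}p,s)$, so the standard interpolation $\|u\|_{\tilde s,p} \leqs \varepsilon\|u\|_{s,p} + C_\varepsilon\|u\|_{s-2,p}$ together with Rellich-Kondrachov yields $\|\alpha u\|_{s-2,p} \leqs \varepsilon\|u\|_{s,p} + C_\varepsilon\|u\|_{s-2,p}$, and the boundary term is handled analogously. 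Combined with the compact embedding $W^{s,p}({M}) \hookrightarrow W^{s-2,p}({M})$, the Peetre-type argument behind Lemma \ref{l:semi-fred} forces $L$ to have closed range and finite-dimensional kernel. To upgrade from semi-Fredholm to Fredholm with index zero, the plan is to use homotopy: consider the family $L_\lambda : u \mapsto \bigl(-\Delta u + (\alpha+\lambda)u,\, \Tr_{\tiN}\partial_{\nu}u + \beta \Tr_{\tiN}u,\, \Tr_{\tiD}u\bigr)$ for $\lambda \geqs 0$. The semi-Fredholm estimate holds uniformly on bounded $\lambda$-intervals, so $\lambda \mapsto L_\lambda$ is a norm-continuous family of semi-Fredholm operators, whose index is therefore constant along the family. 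For sufficiently large $\lambda$, the bilinear form
$$B_\lambda(u,v) = \langle\nabla u,\nabla v\rangle + \langle (\alpha+\lambda)u,v\rangle + \langle \beta u,v\rangle_{\tiN}$$
is coercive on $W^{1,2}({M})$ (after lifting any nonhomogeneous Dirichlet datum), so Lax-Milgram produces weak $W^{1,2}$-solutions, and Corollary \ref{C:ell-est} bootstraps these to $W^{s,p}$-solutions. This gives invertibility of $L_\lambda$ for large $\lambda$, hence $\mathrm{ind}(L) = \mathrm{ind}(L_\lambda) = 0$. The main obstacle is this anchoring step: verifying the Lax-Milgram hypothesis and iterating Corollary \ref{C:ell-est} to promote $W^{1,2}$-regularity all the way up to $W^{s,p}$ under the given low regularity of the coefficients requires some care, especially near the boundary.

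The ``moreover'' part is then immediate. By the above, $L$ is Fredholm with index zero, so invertibility reduces to showing $\ker L = \{0\}$. For $u \in W^{s,p}({M})$ with $Lu = 0$, we have $\Tr_{\tiD}u = 0$ and $\Tr_{\tiN}\partial_{\nu}u = -\beta \Tr_{\tiN}u$; pairing $-\Delta u + \alpha u = 0$ with $u$ and applying Lemma \ref{l:green} yields
$$\langle \nabla u, \nabla u\rangle + \langle \alpha u, u\rangle + \langle \beta u, u\rangle_{\tiN} = 0,$$
and the coercivity hypothesis then forces $u \equiv 0$.
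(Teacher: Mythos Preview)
Your argument is correct, but the route to index zero differs from the paper's. The paper first uses Corollary \ref{C:ell-est} together with Lemma \ref{l:semi-fred} to get semi-Fredholmness, then invokes formal self-adjointness of $L$ to conclude that the cokernel is finite-dimensional as well; for the actual value of the index it approximates the rough $W^{s,p}$ metric by smooth metrics, cites the smooth case as well-known, and uses local constancy of the index on the set of Fredholm operators. Your approach instead runs a homotopy $L_\lambda = L + \lambda$ inside the semi-Fredholm class and anchors at a large-$\lambda$ operator that you show to be genuinely invertible via Lax--Milgram plus a $W^{1,2}\to W^{s,p}$ bootstrap. Your method is more self-contained (no appeal to an external smooth-coefficient result, no separate self-adjointness step), at the cost of the regularity bootstrap you flag as the main obstacle; the paper's method is shorter but leans on facts outside the paper. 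The ``moreover'' part is handled identically in both, via Lemma \ref{l:green}.
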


\begin{proof}
With $X=W^{s,p}({M})$ and $Y=W^{s-2,p}({M})\otimes W^{s-1-\frac1p,p}(\Sigma_{\tiN})\otimes W^{s-\frac1p,p}(\Sigma_{\tiD})$,
one has the compact embedding $\imath:X\hookrightarrow Y:u\mapsto(u,0,0)$.
Then Lemma \ref{l:semi-fred} in combination with Corollary \ref{C:ell-est} and the fact that $L$ is formally self-adjoint,
implies that $L$ is Fredholm.
It is well-known that when the metric is smooth, index of $L$ is zero independent of $s$ and $p$.
We can approximate the metric $h$ by smooth metrics so that $L$ is arbitrarily close to a Fredholm operator with index zero.
Since the level sets of index as a function on Fredholm operators are open, we conclude that the index of $L$ is zero.

The invertibility part follows easily from \eqref{e:half-green}.
\end{proof}


Now we present maximum principles for the Laplace-Beltrami operator, followed by a simple application.
These types of results are well-known, but nevertheless we state them
here for completeness.

It is convenient at times when working with barriers and maximum
principle arguments to split real valued functions into
positive and negative parts; we will use the following notation
for these concepts:
\[
\phi^{+}:= \mbox{max}\{\phi,0\},\qquad
\phi^{-}:=\mbox{min}\{\phi,0\},
\]
whenever they make sense.
In the proof of the following lemma we will use the fact that for $\phi\in W^{1,p}$ it holds $\phi^{+}\in W^{1,p}$ and so $\phi^{-}\in W^{1,p}$, cf. \cite{dMdZ98} or \cite{Kesa89}.

\begin{lemma}\label{l:max-princ}
Let $p\in(1,\infty)$ and $s\in(\frac{n}p,\infty)\cap[1,\infty)$, and
let ${M}$ be an $n$-dimensional, smooth, compact manifold with boundary, equipped with a Riemannian metric in $W^{s,p}$.
Moreover, let $\alpha\in W^{s-2,p}({M})$ and $\beta\in W^{s-1-\frac1p,p}(\Sigma_{\tiN})$.
Let $\phi\in W^{s,p}({M})$ be such that
\begin{equation}\label{e:max1}
-\Delta\phi+\alpha\phi\geqs 0,\quad
\Tr_{\tiN}\partial_{\nu}\phi+\beta\phi\geqs 0,\quad\textrm{and}\quad
\Tr_{\tiD}\phi\geqs 0.
\end{equation}
\begin{itemize}
\item[(a)]
If $\alpha\geqs 0$ and $\beta\geqs 0$ and if $\alpha\neq0$ or $\beta\neq0$ or $\Sigma_{\tiD}\neq\varnothing$,
then $\phi\geqs 0$.
\item[(b)]
If ${M}$ is connected, $\Sigma_{\tiD}=\varnothing$, and $\phi\geqs 0$,
then either $\phi\equiv0$ or $\phi>0$ everywhere.
\item[(c)]
Let ${M}$ be connected, and $\phi\geqs 0$.
Also let $\Sigma_{\tiD}\neq\varnothing$ and $\Tr_{\tiD}\phi>0$.
Then $\phi>0$ everywhere.
\end{itemize}
\end{lemma}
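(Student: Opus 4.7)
\emph{Plan.} Part (a) is a direct energy argument, while parts (b) and (c) are applications of the strong maximum principle. For part (a), the idea is to test the distributional inequalities against $\phi^- \in W^{1,p}({M})$ (which lies in $W^{1,p}$ by Stampacchia's theorem, since $\phi \in W^{s,p} \hookrightarrow W^{1,p}$ when $s \geqs 1$). Using Green's identity from Lemma \ref{l:green} together with the pointwise identities $\nabla \phi \cdot \nabla \phi^- = |\nabla \phi^-|^2$ and $\phi \phi^- = (\phi^-)^2$, and observing that $\Tr_{\tiD} \phi \geqs 0$ forces $\Tr_{\tiD} \phi^- = 0$, I would obtain
\begin{equation*}
\|\nabla \phi^-\|_2^2 + \langle \alpha, (\phi^-)^2 \rangle + \langle \beta, (\Tr_{\tiN} \phi^-)^2 \rangle_{\tiN} = \langle (-\Delta + \alpha)\phi, \phi^- \rangle + \langle (\Tr_{\tiN} \partial_\nu + \beta)\phi, \Tr_{\tiN} \phi^- \rangle_{\tiN}.
\end{equation*}
The right-hand side is a sum of pairings of nonnegative distributions with the nonpositive function $\phi^-$, hence is nonpositive; the left-hand side is nonnegative by $\alpha,\beta \geqs 0$. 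Every term must therefore vanish. The relation $\nabla \phi^- \equiv 0$ forces $\phi^-$ to be constant on $M$ (treated as connected, consistent with the rest of the paper), and the nontriviality hypothesis that at least one of $\alpha \not\equiv 0$, $\beta \not\equiv 0$, $\Sigma_{\tiD} \neq \varnothing$ holds, combined with $\langle \alpha, (\phi^-)^2\rangle = \langle \beta, (\Tr_{\tiN} \phi^-)^2\rangle_{\tiN} = 0$ and $\Tr_{\tiD} \phi^- = 0$, pins this constant to zero, yielding $\phi \geqs 0$.

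For parts (b) and (c), I would first reduce to the case $\alpha,\beta \geqs 0$ by replacing $\alpha \mapsto \alpha^+$ and $\beta \mapsto \beta^+$: the perturbed inequalities still hold since $(\alpha^+ - \alpha)\phi \geqs 0$ when $\phi \geqs 0$, and similarly on the boundary. Suppose $\phi(x_0) = 0$ for some $x_0 \in M$. If $x_0$ lies in the interior, the strong maximum principle for weak supersolutions with $W^{s,p}$ coefficients (which follows from a Moser or Serrin-Trudinger type Harnack inequality, justified here via the local elliptic estimates of Corollary \ref{C:ell-est}, or alternatively via approximation by smooth metrics with passage to the limit) forces $\phi \equiv 0$ in a neighbourhood of $x_0$. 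If $x_0$ lies on the Neumann boundary, Hopf's boundary point lemma would give $\partial_\nu \phi(x_0) < 0$ unless $\phi$ vanishes in a neighbourhood; but the Robin inequality $\partial_\nu \phi(x_0) + \beta(x_0) \phi(x_0) \geqs 0$ at the zero $x_0$ reduces to $\partial_\nu \phi(x_0) \geqs 0$, a contradiction. Hence the zero set of $\phi$ is open; being manifestly closed and $M$ being connected, it is either empty or all of $M$. This gives (b) at once, and for (c) the case $\phi \equiv 0$ is excluded by $\Tr_{\tiD} \phi > 0$.

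The principal technical obstacle is the justification of the strong maximum principle and Hopf's lemma in the $W^{s,p}$ regularity class treated throughout the paper. At this regularity the coefficients are only continuous in general, so the classical Hopf and strong maximum principle (formulated for $C^{1,1}$ coefficients) do not apply verbatim; one must either invoke the weak Harnack inequality of Trudinger-Serrin type valid for principal-part coefficients in $L^q$ with $q > n$, or approximate the metric by smooth metrics and pass to the limit using continuous dependence of the weak solution on the coefficients. All other steps, namely the validity of Green's identity on the test function $\phi^-$, the Stampacchia chain rule for $\phi \mapsto \phi^-$, and the embedding $W^{s,p} \hookrightarrow C^0({M})$ needed for pointwise interpretation of $\phi(x_0) = 0$, are either cited in the appendix or are routine.
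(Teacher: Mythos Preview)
Your argument for part (a) is essentially the same as the paper's: test the inequalities against $\phi^-$, use Green's formula, and conclude $\phi^-$ is a constant which must be zero under the nontriviality hypothesis. This is fine.

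For parts (b) and (c), however, there are genuine gaps. First, your reduction ``replace $\alpha\mapsto\alpha^+$, $\beta\mapsto\beta^+$'' is not available here: in the regime $s<2$ the coefficient $\alpha\in W^{s-2,p}$ is a genuine distribution, and $\alpha^+$ is undefined; likewise $\beta\in W^{s-1-\frac1p,p}(\Sigma_{\tiN})$ may sit in a negative-order space. The paper does not need any sign reduction. Instead, for interior points it rewrites the zero-order term via an auxiliary potential: solve $\partial_i\partial_iu=\alpha$ locally, so that $\alpha\phi=\partial_i((\partial_iu)\phi)-(\partial_iu)\partial_i\phi$, which moves $\alpha$ into first-order coefficients lying in $W^{s-1,p}\subset L^t$ for some $t>n$. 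Trudinger's weak Harnack inequality then applies directly, with no sign hypothesis on the zero-order part.

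Second, and more seriously, your boundary argument via Hopf's lemma does not go through at this regularity. The statement ``$\partial_\nu\phi(x_0)+\beta(x_0)\phi(x_0)\geqs0$'' presupposes that $\beta$ and $\partial_\nu\phi$ have pointwise values at $x_0$; but $\beta$ need only lie in $W^{s-1-\frac1p,p}(\Sigma_{\tiN})$, which for $s$ close to $\max\{1,\frac{n}p\}$ is a negative-order space. The paper avoids this entirely by a \emph{reflection} argument: near a point of $\Sigma_{\tiN}$ it chooses a vector field $X\in W^{s-1,p}$ with $g(X,\nu)=\beta$ on the flat part of the boundary, absorbs the Robin condition into the interior operator via
\[
\langle\nabla\phi+\phi X,\nabla\varphi\rangle+\langle\alpha\phi+\phi\,\mathrm{div}\,X+X\!\cdot\!\nabla\phi,\varphi\rangle
=\langle-\Delta\phi+\alpha\phi,\varphi\rangle+\langle\partial_\nu\phi+\beta\phi,\varphi\rangle_D\geqs0,
\]
and then reflects the resulting divergence-form inequality across the boundary hyperplane to obtain a supersolution on a full ball. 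The weak Harnack inequality applied to the reflected problem shows that a zero of $\phi$ on $\Sigma_{\tiN}$ forces $\phi\equiv0$ nearby, so $\phi^{-1}(0)$ is relatively open in $M\setminus\Sigma_{\tiD}$. This reflection step is the missing idea in your proposal.
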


\begin{proof}
Let us prove (a).
Since $\phi\in W^{1,n}$, we have $-\phi^{-}\in W^{1,n}_{+}$ and $\phi\phi^{-}\in W^{1,n}_{+}$.
Note that $W^{1,n}\hookrightarrow (W^{s-2,p})^*$ by $n\geqs 2$.
Now, by using the property \eqref{e:max1} and the positivity of $\alpha$ and $\beta$, we get
\begin{equation*}
\begin{split}
\langle\nabla\phi^{-},\nabla\phi^{-}\rangle
&= 
\langle\nabla\phi,\nabla\phi^{-}\rangle
= -\langle\Delta\phi,\phi^{-}\rangle
+\langle\Tr_{\tiN}\partial_{\nu}\phi,\Tr_{\tiN}\phi^{-}\rangle_{\tiN}
+\langle\Tr_{\tiD}\partial_{\nu}\phi,\Tr_{\tiD}\phi^{-}\rangle_{\tiD}\\
&\leqs
-\langle \alpha,\phi\phi^{-}\rangle
-\langle \beta,\phi\phi^{-}\rangle_{\tiN}\leqs 0,
\end{split}
\end{equation*}
implying that $\phi^{-}=\mathrm{const}$.
So if $\phi\not\geqs 0$, it would have to be a negative constant.
Let us assume that $\phi=\mathrm{const}<0$.
Then necessarily $\Sigma_{\tiD}=\varnothing$, since otherwise we have the boundary condition $\phi\geqs 0$ on $\Sigma_{\tiD}$.
Moreover, from \eqref{e:max1} we have $\alpha|\phi|\leqs 0$, which, in combination with the assumption $\alpha\geqs 0$, implies $\alpha=0$. 
Similarly, we get $\beta=0$, and we conclude that in order for $\phi$ to have negative values,
it must hold that $\alpha=0$, $\beta=0$, and $\Sigma_{\tiD}=\varnothing$.
This proves (a).

Now we will prove (b) and (c).
Since $\phi$ is continuous, the level set $\phi^{-1}(0)\subset{M}$ is closed.
Following \cite{dM05b,dM06}, we apply the weak Harnack inequality \cite[Theorem 5.2]{nT73} to show that $\phi^{-1}(0)$ is also relatively open in $M$.
Then by connectedness of ${M}$ we will have the proof.

Let $L$ be the second order differential operator
\begin{equation}\label{e:harnack-op}
L\phi=-\partial_i(a^{ij}\partial_j\phi+a^i\phi)+b^j\partial_j\phi+a\phi,
\end{equation}
where $a^{ij}$ are continuous and positive definite, and $a^i,b^j\in L^{t}$, and $a\in L^{t/2}$ for some $t>n$.
Then the weak Harnack inequality \cite[Theorem 5.2]{nT73} implies that 
if $L\phi\geqs0$ and $\phi\geqs0$ then for sufficiently small $R>0$, and for some large but finite $q$,
\begin{equation}\label{e:harnack}
\|\phi\|_{L^q(B(x,2R))}\leqs
C R^{\frac{n}q}\inf_{B(x,R)}\phi,
\end{equation}
where $B(x,R)$ denotes the open ball of radius $R$ (in the background flat metric) centred at $x$,
and $C$ is a constant that depends only on $n$, $t$, $q$, and the coefficients of the differential operator.

Let $x\in{M}\setminus\Sigma$ be an interior point,
and let us work in local coordinates around $x$.
Then the Laplace-Beltrami operator can be written as
\begin{equation*}
\Delta\phi=\partial_i(g^{ij}\partial_j\phi)+(\partial_ig^{ij}+g^{ik}\Gamma^j_{ik})\partial_j\phi.
\end{equation*}
We need that $g^{ij}$ is continuous, and that $\partial_ig^{ij}+g^{ik}\Gamma^j_{ik}$ is in $L^{t}$ for some $t>n$.
Clearly the first condition is satisfied since $g^{ij}\in W^{s,p}_{\mathrm{loc}}$ with $\varepsilon:=s-\frac{n}{p}>0$.
As for the other condition, we have $\partial_ig^{ij}+g^{ik}\Gamma^j_{ik}\in W^{s-1,p}_{\mathrm{loc}}$.
But $W^{s-1,p}_{\mathrm{loc}}\subset L^t$ for any $t<\frac{n}{1-\varepsilon}$, and since $n<\frac{n}{1-\varepsilon}$ there is some $t>n$ such that $\partial_ig^{ij}+g^{ik}\Gamma^j_{ik}\in L^{t}$.
Hence we see that the Laplace-Beltrami operator poses no problem.
Now the term $\alpha\in W^{s-2,p}_{\mathrm{loc}}$ is problematic if, for instance, $s<2$.
This can be treated with the technique introduced in \cite{dM06} as follows.
Let $u\in W^{s,p}$ be any function satisfying
\begin{equation*}
\partial_i\partial_iu=\alpha,
\end{equation*}
where $\partial_i\partial_i$ is the Laplace operator with respect to the flat background metric.
Then an application of the Leibniz formula gives
\begin{equation*}
\alpha\phi=(\partial_i\partial_iu)\phi=\partial_i((\partial_iu)\phi)-(\partial_i u)\partial_i\phi,
\end{equation*}
and we have $\partial_iu\in W^{s-1,p}_{\mathrm{loc}}\subset L^t$ for some $t>n$,
so that the weak Harnack inequality can be applied.
If $\phi(x)=0$ and $\phi$ is nonnegative, the inequality \eqref{e:harnack} implies that $\phi\equiv0$ in a neighbourhood of $x$.
Hence the set $\phi^{-1}(0)$ is relatively open in the interior of $M$.

Now let $x\in\Sigma_{\tiN}$, and consider a local coordinate ball $B$ of small radius centred at $x$ so that the half-ball $B_+=B\cap\{x\in\mathbb{R}^n:x_n>0\}$ coincides with the interior of $M\cap B$.
Then there is a vector field $X\in W^{s-1,p}$ such that $g(X,\nu)=\beta$ on the flat boundary $D=\partial B_+\cap\Sigma$.
So for any nonnegative $\varphi\in C^\infty(B_+\cup D)$ with $\varphi|_{\partial B}=0$, we have
\begin{equation*}
\langle\nabla\phi+\phi X,\nabla\varphi\rangle+\langle \alpha\phi+\phi\nabla X+X\nabla\phi,\varphi\rangle
=
\langle-\Delta\phi+\alpha\phi,\varphi\rangle+\langle\partial_\nu\phi+\beta\phi,\varphi\rangle_D\geqs0.
\end{equation*}
In local coordinates this reads
\begin{equation}\label{e:local-sup}
\int_{B_+}\sqrt{|g|}(g^{ij}\partial_j\phi+X^i\phi)\partial_i\varphi+\sqrt{|g|}(\alpha\phi+\partial_i X^i\phi+X^j\partial_j\phi)\varphi
\geqs0.
\end{equation}
where $|g|$ is the determinant of the matrix $[g_{ij}]$.
Let $u\in W^{s,p}_{\mathrm{loc}}$ be such that 
\begin{equation*}
\partial_i\partial^iu=\sqrt{|g|}(\alpha+\partial_i X^i),
\end{equation*}
and define
\begin{equation*}
a^{ij}=\sqrt{|g|}g^{ij},\quad
a_1^i=\sqrt{|g|}X^i,\quad
a_2^i=\partial^iu,\quad
\textrm{and}\quad
b^j=\sqrt{|g|}X^j-\partial^ju.
\end{equation*}
We know that $a^{ij}$ is continuous,
and $a_1^i,a_2^i,b^j\in W^{s-1,p}_{\mathrm{loc}}\subset L^t$ for some $t>n$.
In terms of these functions, \eqref{e:local-sup} becomes
\begin{equation*}
\int_{B_+}(a^{ij}\partial_j\phi+a_1^i\phi)\partial_i\varphi+[\partial_i(a_2^i\phi)+b^j\partial_j\phi]\varphi
\geqs0.
\end{equation*}
For any given $x\in\mathbb{R}^n$, let $x^*\in\mathbb{R}^n$ be its reflection with respect to the plane $\{x_n=0\}$.
Then for $x\in B_+^*$, we define $\psi^*(x)=\psi(x^*)$ with $\psi$ being any function, 
$c^{*i}(x)=c^i(x^*)$ if $i<n$ and $c^{*n}(x)=-c^n(x^*)$ with $c^i$ being one of $a_1^i$, $a_2^i$, and $b^i$,
and $a^{*ij}(x)=a^{ij}(x^*)$ if $i,j<n$ or $i=j=n$, and $a^{*ij}(x)=-a^{ij}(x^*)$ otherwise.
Now it is obvious that
\begin{multline*}
\int_{B_+}(a^{ij}\partial_j\phi+a_1^i\phi)\partial_i\varphi+[\partial_i(a_2^i\phi)+b^j\partial_j\phi]\varphi\\
=
\int_{B_+^*}(a^{*ij}\partial_j\phi^*+a_1^{*i}\phi^*)\partial_i\varphi^*+[\partial_i(a_2^{*i}\phi^*)+b^{*j}\partial_j\phi^*]\varphi^*,
\end{multline*}
so that defining the extension of the quantities by $\tilde w(x)=w(x)$ if $x\in B_+$ and $\tilde w(x)=w^*(x)$ if $x\in B\setminus B_+$,
for any nonnegative $\varphi\in C^\infty(B)$ with compact support,
we have
\begin{multline*}
0\leqs
\int_{B}(\tilde{a}^{ij}\partial_j\tilde{\phi}+\tilde{a}_1^i\tilde{\phi})\partial_i\varphi+[\partial_i(\tilde{a}_2^i\tilde{\phi})+\tilde{b}^j\partial_j\tilde{\phi}]\varphi\\
=
\int_{B}[-\partial_i(\tilde{a}^{ij}\partial_j\tilde{\phi}+\tilde{a}_1^i\tilde{\phi}-\tilde{a}_2^i\tilde{\phi})+\tilde{b}^j\partial_j\tilde{\phi}]\varphi.
\end{multline*}
This means that
\begin{equation*}
\tilde L\tilde\phi=-\partial_i(\tilde{a}^{ij}\partial_j\tilde\phi+(\tilde{a}_1^i-\tilde{a}_2^i)\tilde\phi)+\tilde{b}^j\partial_j\tilde\phi\geqs0,
\end{equation*}
in $B$, so that the weak Harnack inequality can now be applied to $\tilde L$ and $\tilde\phi$.
Thus if $\phi(x)=0$ at $x\in\Sigma_{\tiN}$,
then the inequality \eqref{e:harnack} gives $\phi\equiv0$ in a neighbourhood of $x$.
We conclude that $\phi^{-1}(0)$ is relatively open in $M\setminus\Sigma_{\tiD}$,
and this proves (b) since $\Sigma_{\tiD}=\varnothing$ in this case.

Finally, for (c), since $\phi>0$ on $\Sigma_{\tiD}$ it follows that $\Sigma_{\tiD}\cap\phi^{-1}(0)=\varnothing$,
and so from the proof of (b), the set $\phi^{-1}(0)$ is relatively open in $M$. Since its complement is not empty by hypothesis we have $\phi>0$ everywhere.
\end{proof}

\begin{lemma}\label{l:lapinv}
Let the hypotheses of Lemma \ref{l:max-princ}(a) hold,
and define the operator 
$$
L:W^{s,p}({M})\to W^{s-2,p}({M})\otimes W^{s-1-\frac1p,p}(\Sigma_{\tiN})\otimes W^{s-\frac1p,p}(\Sigma_{\tiD}),
$$
by $L:u\mapsto(-\Delta u+\alpha u,\Tr_{\tiN}\partial_{\nu}u+\beta \Tr_{\tiN}u,\Tr_{\tiD}u)$.
Then, $L$ is bounded and invertible.
\end{lemma}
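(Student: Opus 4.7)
The boundedness of $L$ is essentially a bookkeeping exercise. The Laplace--Beltrami operator is a map $\Delta\colon W^{s,p}(M)\to W^{s-2,p}(M)$ by Lemma~\ref{l:bdd-operator}, the multiplication $u\mapsto\alpha u$ is bounded $W^{s,p}(M)\to W^{s-2,p}(M)$ by Corollary~\ref{c:alg}(a), the interior normal derivative lands continuously in $W^{s-1-\frac1p,p}(\Sigma_N)$, and the multiplication by $\beta$ on that space is again handled by Corollary~\ref{c:alg}(a); the Dirichlet trace is continuous into $W^{s-\frac1p,p}(\Sigma_D)$. Summing these gives the required bound on $L$.

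For invertibility, the plan is to combine the Fredholm theory from Lemma~\ref{l:ell-semi-fred} with the maximum principle of Lemma~\ref{l:max-princ}(a). Lemma~\ref{l:ell-semi-fred} already tells us that $L$ is Fredholm of index zero, so it suffices to show that $\ker L=\{0\}$. Let $u\in W^{s,p}(M)$ satisfy $Lu=0$; then
\begin{equation*}
-\Delta u+\alpha u=0,\qquad \Tr_N\partial_\nu u+\beta\,\Tr_N u=0,\qquad \Tr_D u=0,
\end{equation*}
so $u$ satisfies the hypothesis \eqref{e:max1} of Lemma~\ref{l:max-princ}(a) with equality. Since the hypotheses here are precisely the hypotheses of Lemma~\ref{l:max-princ}(a), namely $\alpha\geqs 0$, $\beta\geqs 0$, and at least one of $\alpha\neq 0$, $\beta\neq 0$, $\Sigma_D\neq\varnothing$ holds, we obtain $u\geqs 0$. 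Applying the same argument to $-u$, which also satisfies $L(-u)=0$ and hence \eqref{e:max1}, yields $-u\geqs 0$, so $u\equiv 0$. Thus $\ker L=\{0\}$, and combined with the Fredholm index zero property from Lemma~\ref{l:ell-semi-fred}, $L$ is invertible, and the inverse is bounded by the open mapping theorem.

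The only conceptual delicacy is verifying that Lemma~\ref{l:ell-semi-fred} applies here, but the hypotheses on $\alpha$ and $\beta$ in Lemma~\ref{l:max-princ}(a) are a special case of those required there, so this step is immediate. The rest is a routine assembly of previously established results.
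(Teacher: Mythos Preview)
Your proof is correct and follows essentially the same approach as the paper: invoke Lemma~\ref{l:ell-semi-fred} to get Fredholm index zero, then use Lemma~\ref{l:max-princ}(a) applied to both $u$ and $-u$ to kill the kernel. The paper's proof is terser (it omits the boundedness discussion and phrases injectivity via two solutions $\phi_1,\phi_2$ of $L\phi=F$), but the argument is the same.
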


\begin{proof}
By Lemma \ref{l:ell-semi-fred}, the operator $L$ is Fredholm with index zero.
The injectivity of $L$ follows from Lemma \ref{l:max-princ}(a),
for if $\phi_1$ and $\phi_2$ are two solutions of $L\phi=F$, then the above lemma implies that $\phi_1-\phi_2\geqs 0$ and $\phi_2-\phi_1\geqs 0$.
\end{proof}


\bibliographystyle{plain}
\bibliography{../bib/books,../bib/papers,../bib/mjh}

\begin{thebibliography}{10}

\bibitem{rBjI04}
R.~Bartnik and J.~Isenberg.
\newblock The constraint equations.
\newblock In P.~Chru\'sciel and H.~Friedrich, editors, {\em The {{\ E}}instein
  equations and large scale behavior of gravitational fields}, pages 1--38.
  Birh\"auser, Berlin, 2004.

\bibitem{Choq04}
Y.~Choquet-Bruhat.
\newblock {E}instein constraints on compact n-dimensional manifolds.
\newblock {\em Class. Quantum Grav.}, 21:S127--S151, 2004.

\bibitem{sD04}
S.~Dain.
\newblock Trapped surfaces as boundaries for the constraint equations.
\newblock {\em Classical Quantum Gravity}, 21(2):555--573, 2004.

\bibitem{Esco92}
J.~F. {\ E}scobar.
\newblock The {Y}amabe problem on manifolds with boundary.
\newblock {\em J. Differential Geom.}, 35(1):21--84, 1992.

\bibitem{Esco96a}
J.~F. {\ E}scobar.
\newblock Conformal deformation of a {R}iemannian metric to a constant scalar
  curvature metric with constant mean curvature on the boundary.
\newblock {\em Indiana Univ. Math. J.}, 45(4):917--943, 1996.

\bibitem{Gris85}
P.~Grisvard.
\newblock {\em Elliptic Problems in Nonsmooth Domains}.
\newblock Pitman Publishing, Marshfield, MA, 1985.

\bibitem{HaEl73}
S.~W. Hawking and G.~F.~R. Ellis.
\newblock {\em The Large Scale Structure of Space-Time}.
\newblock Cambridge University Press, Cambridge, MA, 1973.

\bibitem{HNT07b}
M.~Holst, G.~Nagy, and G.~Tsogtgerel.
\newblock Rough solutions of the {Einstein} constraints on closed manifolds
  without near-{CMC} conditions.
\newblock {\em Comm.\ Math.\ Phys.}, 288(2):547--613, 2009.
\newblock Available as \href{http://arxiv.org/abs/0712.0798} {{\sf
  arXiv:0712.0798 [gr-qc]}}.

\bibitem{Isen95}
J.~Isenberg.
\newblock Constant mean curvature solutions of the {{\ E}instein} constraint
  equations on closed manifolds.
\newblock {\em Class.\ Quantum\ Grav.}, 12:2249--2274, 1995.

\bibitem{jIvM96}
J.~Isenberg and V.~Moncrief.
\newblock A set of nonconstant mean curvature solution of the {{\ E}}instein
  constraint equations on closed manifolds.
\newblock {\em Class. Quantum Grav.}, 13:1819--1847, 1996.

\bibitem{Kesa89}
S.~Kesavan.
\newblock {\em Topics in Functional Analysis and Applications}.
\newblock John Wiley \& Sons, Inc., New York, NY, 1989.

\bibitem{Ma04a}
D.~Maxwell.
\newblock {\em Initial Data for Black Holes and Rough Spacetimes}.
\newblock PhD thesis, University of Washington, 2004.

\bibitem{dM05}
D.~Maxwell.
\newblock Rough solutions of the {{\ E}}instein constraint equations on compact
  manifolds.
\newblock {\em J. Hyp. Diff. {\ E}qs.}, 2(2):521--546, 2005.

\bibitem{dM05b}
D.~Maxwell.
\newblock Solutions of the {{\ E}}instein constraint equations with apparent
  horizon boundaries.
\newblock {\em Comm. Math. Phys.}, 253(3):561--583, 2005.

\bibitem{dM06}
D.~Maxwell.
\newblock Rough solutions of the {{\ E}}instein constraint equations.
\newblock {\em J. Reine Angew. Math.}, 590:1--29, 2006.

\bibitem{dM09}
D.~Maxwell.
\newblock A class of solutions of the vacuum {E}instein constraint equations
  with freely specified mean curvature.
\newblock {\em Math. Res. Lett.}, 16(4):627--645, 2009.

\bibitem{MTW70}
C.~W. Misner, K.~S. Thorne, and J.~A. Wheeler.
\newblock {\em Gravitation}.
\newblock W. H. Freeman and Company, San Francisco, CA, 1970.

\bibitem{dMdZ98}
D.~Mitrovi{\'{c}} and D.~{\v{Z}}ubrini{\'{c}}.
\newblock {\em Fundamentals of applied functional analysis}, volume~91 of {\em
  Pitman monographs and surveys in pure and applied mathematics}.
\newblock Addison Wesley Longman, Essex, England, 1998.

\bibitem{Ring09}
H.~Ringstr{\"o}m.
\newblock {\em The {C}auchy problem in general relativity}.
\newblock ESI Lectures in Mathematics and Physics. European Mathematical
  Society (EMS), Z\"urich, 2009.

\bibitem{Trie83}
H.~Triebel.
\newblock {\em Theory of function spaces}, volume~78 of {\em Monographs in
  Mathematics}.
\newblock Birkh\"auser Verlag, Basel, 1983.

\bibitem{nT73}
N.~Trudinger.
\newblock Linear elliptic operators with measurable coefficients.
\newblock {\em Ann. Scuola Norm. Sup. Pisa}, 27(3):265--308, 1973.

\bibitem{Wald84}
R.~M. Wald.
\newblock {\em General Relativity}.
\newblock University of Chicago Press, Chicago, IL, 1984.

\bibitem{Wlok92}
J.~Wloka.
\newblock {\em Partial Differential Equations}.
\newblock Cambridge University Press, Cambridge, MA, 1992.

\bibitem{YoPi82}
J.~W. {York, Jr.} and T.~Piran.
\newblock The initial value problem and beyond.
\newblock In R.~A. Matzner and L.~C. Shepley, editors, {\em Spacetime and
  Geometry: The {A}lfred {S}child Lectures}, pages 147--176, Austin, Texas,
  1982. University of Texas Press.

\end{thebibliography}


\end{document}